\newcommand{\No}{{\sc No}}
\newcommand{\Yes}{{\sc Yes}}
\newcommand{\FPT}{{\sf FPT}}
\newcommand{\OO}{{\mathcal{O}}}
\newcommand{\alg}[1]{\mbox{\sf #1}}  %% Names of algorithms in small caps
\newcommand{\boundS}{$s\geq 2r2^c+r$}
\newcommand{\olddaniel}[1]{}
\newcommand{\bran}[1]{branchable\xspace}
\newcommand{\myparagraph}[1]{\smallskip\noindent{\textbf{\sffamily #1} \ }}
\newtheorem{theorem}{Theorem}
\newtheorem{lemma}{Lemma}[section]
\newtheorem{claim}{Claim}[section]
\newtheorem{definition}{Definition}[section]
\newtheorem{observation}{Observation}[section]
\date{}
\begin{document}

%\title{Directed Acyclicity Allowing Few Error Vertices in Linear Time}

\title{Reducing CMSO Model Checking to Highly Connected Graphs\thanks{Supported by {\em Pareto-Optimal Parameterized Algorithms}, ERC Starting Grant 715744 and {\em Parameterized Approximation}, ERC Starting Grant 306992.
%, and {\em Rigorous Theory of Preprocessing}, ERC Advanced Investigator Grant 267959. 
%
%
%
%the BeHard grant under the recruitment
%programme of the of Bergen Research Foundation and by PaPaAlg, ERC starting grant no. ....
%          Saket  Saurabh and Meirav Zehavi are supported by PARAPPROX, ERC starting grant no. 306992. 
          M. S. Ramanujan also acknowledges support from {\em BeHard},  Bergen Research Foundation and {\em X-Tract}, Austrian Science Fund (FWF, project P26696).} }

 \author{
 Daniel Lokshtanov\thanks{University of Bergen, Bergen, Norway. \texttt{daniello@ii.uib.no}}
 \and M. S. Ramanujan\thanks{University of Warwick, UK.     \texttt{R.Maadapuzhi-Sridharan@warwick.ac.uk}} 
%     \addtocounter{footnote}{-1}
%     \and M. S. Ramanujan\footnotemark
 \and  Saket Saurabh\thanks{The Institute of Mathematical Sciences, HBNI, Chennai, India. \texttt{saket@imsc.res.in}}\hspace{2pt} \addtocounter{footnote}{-3}\footnotemark \addtocounter{footnote}{2}
 \and Meirav Zehavi\thanks{Ben-Gurion University, Israel. \texttt{Zehavimeirav@gmail.com}}
 }
 
%
%\author{Daniel Lokshtanov\thanks{University of Bergen. 
%     \texttt{\{daniello | ramanujan.sridharan\}@ii.uib.no}} \addtocounter{footnote}{-1}
%     \and M. S. Ramanujan\footnotemark
% \and  Saket Saurabh\thanks{The Institute of Mathematical Sciences\texttt{saket@imsc.res.in}} }

 \maketitle

\thispagestyle{empty}
\begin{abstract} 
Given a Counting Monadic Second Order (CMSO) sentence $\psi$, the {\sc CMSO}$[\psi]$ problem is defined as follows. The input to {\sc CMSO}$[\psi]$ is a graph $G$, and the objective is to determine whether $G\models \psi$. Our main theorem states that for every CMSO sentence  $\psi$, if {\sc CMSO}$[\psi]$ is solvable in polynomial time on ``globally highly connected graphs'', then {\sc CMSO}$[\psi]$ is solvable in polynomial time (on general graphs). We demonstrate the utility of our theorem in the design of parameterized algorithms. Specifically we show that technical problem-specific ingredients of a powerful method for designing parameterized algorithms,  recursive understanding, can be replaced by a black-box invocation of our main theorem. We also show that our theorem can be easily  deployed to show fixed parameterized tractability of a wide range of problems, where the input is a graph $G$ and the task is to find a connected induced subgraph of $G$ such that ``few'' vertices in this subgraph have neighbors outside the subgraph, and additionally the subgraph has a CMSO-definable property.

\end{abstract}

\newpage
\pagestyle{plain}
\setcounter{page}{1}

%!TEX root = rec_und.tex
\section{Introduction}\label{sec:intro}

Algorithmic meta-theorems are general algorithmic results applicable to a whole range of problems. Many prominent algorithmic meta-theorems are about model checking;
such theorems state that for certain kinds of logic $L$, and all classes ${\cal C}$ that have a certain property, there is an algorithm that takes as input a formula $\phi \in L$ and a structure $S \in {\cal C}$ and efficiently determines whether $S \models \phi$. %Different results are obtained by varying the kind of logic $L$, which classes ${\cal C}$ the result appies to, and what precisely is meant by ``efficiently''. 
Results in this direction include the seminal theorem of Courcelle~\cite{Courcelle90,Courcelle92a,Courcelle97} for model checking Monadic Second Order Logic (MSO) on graphs of bounded treewidth (see also~\cite{AbrahamsonF93,ArnborgLS91,BoriePT92,
Courcelle:2012book,DowneyF98}), as well as a large body of work on model checking first-order (FO) logic~\cite{BovaGS16,DawarGK07,DvorakKT10,FlumG01,FrickG01,GHKOST13,GajarskyHOO14,gks14,Seese96}.

Another kind of algorithmic meta-theorems {\em reduce} the task of designing one type of algorithm for a problem, to one of designing a different kind of algorithm for the same problem.
The hope is, of course, that the second type of algorithms are significantly easier to design than the first. 
A prototype example of such results is {\em Bidimensionality}~\cite{DemaineFHT05}, which reduces the design of sub-exponential time parameterized algorithms for a problem on planar (or $H$-minor free) graphs, to the design of single exponential time algorithms for the same problem when parameterized by the treewidth of the input graph.

In this paper we prove a result of the second type for model checking  Counting Monadic Second Order Logic (CMSO), an extension of MSO with atomic sentences for determining the cardinality of vertex and edge sets modulo any (fixed) integer. For every  CMSO sentence $\psi$ define the {\sc CMSO}$[\psi]$ problem as follows. The input is a graph $G$ on $n$ vertices, and the task is to determine whether $G \models \psi$.

Our main result states that for every CMSO sentence $\psi$, if there is a $\OO(n^d)$ time algorithm ($d>4$) for {\sc CMSO}$[\psi]$ for the special case when the input graph is required to be ``highly connected everywhere'', then there is a $\OO(n^{d})$ time algorithm for {\sc CMSO}$[\psi]$ without any restrictions. In other words, our main theorem reduces CMSO model checking to model checking the same formula on graphs which are ``highly connected everywhere''.

In order to complete the description of our main result we need to define what we mean by ``highly connected everywhere''. For two integers $s$ and $c$, we say that a graph $G$ is $(s, c)$-{\em unbreakable} if there does not exist a partition of the vertex set into three sets $X$, $C$, and $Y$ such that 
\begin{itemize}
\item $C$ is a separator: there are no edges from $X$ to $Y$,
\item $C$ is small: $|C| \leq c$, and
\item $X$ and $Y$ are large: $|X|,|Y| \geq s$.
\end{itemize}
For example, the set of $(1, c)$-unbreakable graphs contains precisely the $(c+1)$-connected graphs, i.e. the connected graphs for which removing any set of at most $c$ vertices leaves the graph connected. We can now state our main result:

\begin{restatable}{theorem}{maintheoremgraphs}
\label{thm:main_graphs}
Let $\psi$ be a CMSO sentence. For all $c\in\mathbb{N}$, there exists $s\in\mathbb{N}$ such that if there exists an algorithm that solves {\sc CMSO}$[\psi]$ on $(s,c)$-unbreakable graphs in time $\OO(n^d)$ for some $d>4$, then there exists an algorithm that solves {\sc CMSO}$[\psi]$ on general graphs in time $\OO(n^{d})$.
\end{restatable}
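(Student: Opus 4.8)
The plan is to abstract the \emph{recursive understanding} technique: CMSO types of boundaried graphs will play the role of the generic information computed recursively, and the hypothesised algorithm on unbreakable graphs will play the role of the ``high connectivity phase''.

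\smallskip

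\textbf{Types.} I will fix $q$ to be the quantifier rank of $\psi$ and work with $t$-boundaried graphs $(H,B)$ ($|B|\le t$) under the gluing operation $\oplus$ identifying boundaries of equal size. Call $(H_1,B_1)\equiv_\psi(H_2,B_2)$ if $|B_1|=|B_2|$ and $H_1\oplus Z\models\psi\iff H_2\oplus Z\models\psi$ for every compatible boundaried graph $Z$. From the standard Feferman--Vaught-style composition theorem for CMSO I will record: (a) $\equiv_\psi$ has finitely many classes on $t$-boundaried graphs, the number depending only on $\psi,t$; (b) $\equiv_\psi$ is a congruence for $\oplus$; (c) every class has a representative on at most $r=r(\psi,t)$ vertices, a \emph{reduced form}; and (d) the $\equiv_\psi$-class of $(H,B)$ is determined by the finite bit-vector $\big([\,H\oplus_B Z_j\models\psi\,]\big)_j$, where $Z_1,\dots,Z_M$ is a fixed list containing one representative per class of $t$-boundaried graphs (only the $\equiv_\psi$-class of a context matters in a gluing). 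Since the recursion below uses only boundaries of size at most a constant $t_0=\OO(c)$, the quantities $r,M$, the list of $Z_j$'s, and a look-up table from bit-vectors to reduced forms are constants depending only on $\psi$ and $c$, computable once.

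\smallskip

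\textbf{The reduction.} Assume $G$ connected (else reduce each component and combine reduced forms via (b)). The core subroutine $\texttt{Reduce}(H,B)$, $|B|\le t_0$, returns a reduced form of $(H,B)$: if $|V(H)|$ is below a fixed constant, treat it as a base case; otherwise repeat the following. Search for a \emph{good separation}: a partition $(X,C,Y)$ of $V(H)$ with $B\subseteq X\cup C$, $|C|\le t_0$, and $|X|,|Y|\ge s$, chosen to \emph{minimise $|Y|$}; if one is found, set $(R,C)\leftarrow\texttt{Reduce}(H[Y\cup C],C)$ and replace $H$ by $H[X\cup C]\oplus_C R$ (this preserves the $\equiv_\psi$-class of $(H,B)$ by (b) and strictly decreases $|V(H)|$, since $|V(R)|\le r<s\le|Y|$). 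When no good separation exists, $(H,B)$ is a base case: form $H\oplus_B Z_j$ for each $j$, call the hypothesised algorithm to decide $H\oplus_B Z_j\models\psi$, and return the reduced form determined by the resulting bit-vector. Finally $\texttt{Reduce}(G,\emptyset)$ yields a graph $G'$ on at most $r$ vertices with $G'\models\psi\iff G\models\psi$; one last call of the hypothesised algorithm on $G'$ (legitimate since $|V(G')|<s$, so $G'$ is trivially unbreakable) decides the instance. Each good-separation search is performed with the known separator-finding/unbreakability routines in time $g(c)\cdot n^{\OO(1)}$ with the polynomial exponent independent of $c$.

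\smallskip

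\textbf{Correctness and time.} Correctness is by induction on $|V(H)|$: the loop invariant ``current $(H,B)\equiv_\psi$ input'' is maintained by (b) and the inductive correctness of the call on the smaller graph $H[Y\cup C]$; termination holds as $|V(H)|$ strictly decreases and is bounded below; and the base case is correct by (d), provided each call of the hypothesised algorithm is on a graph in its domain. For the running time I fix $\psi,c$ (so $t_0,r,M$ and the precomputed data are constants) and fix $s$ to a large enough constant — the condition \boundS\ will do. The crucial observation is that choosing the good separation to minimise $|Y|$ forces $\texttt{Reduce}(H[Y\cup C],C)$ into its base case immediately: any good separation of $(H[Y\cup C],C)$ would lift to a good separation of $(H,B)$ with a strictly smaller fresh side, contradicting minimality. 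Hence the recursion has depth $1$; the top call performs at most $n/(s-r)$ loop iterations, each costing $g(c)\cdot n^{\OO(1)}$ (finding the separation) plus $\OO(|Y_i|^d)$ (running the hypothesised algorithm $M$ times on graphs of size $|Y_i|+\OO(1)$), followed by one base-case step of cost $\OO(n^d)$. Since $\sum_i|Y_i|=\OO(n)$ (each vertex leaves the current graph at most once, up to the $\OO(1)$ gadget vertices added per iteration), $\sum_i|Y_i|^d\le(\max_i|Y_i|)^{d-1}\sum_i|Y_i|=\OO(n^d)$, while the separation searches total $\OO(n^{\OO(1)})$ with a fixed exponent; keeping that exponent at most $3$ makes the total $\OO(n^d)$, precisely because $d>4$.

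\smallskip

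\textbf{Expected main obstacle.} The recursive-understanding skeleton above is by now routine; the real work is twofold. First, the ``high connectivity phase'' must be run with \emph{only} a decision oracle for the single sentence $\psi$ on unbreakable graphs — this is exactly what forces the use of $\equiv_\psi$ and property (d), so that a \emph{type} is recovered from finitely many \emph{decisions}. Second, and more delicately, one must fix the right notion of ``unbreakable relative to a bounded boundary'' and prove that if $(H,B)$ has no good separation then $H\oplus_B Z_j$ is genuinely $(s,c)$-unbreakable for every test context $Z_j$ (so that the oracle may be invoked), while simultaneously keeping boundaries of bounded size and all unbreakability thresholds, separator sizes, and gadget sizes mutually consistent. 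This is where the quantitative condition \boundS\ is needed — balancing the threshold $s$ against the representative size $r$ and the separator bound $c$, with the factor $2^c$ accounting for how a $c$-vertex separator can split relative to the boundary. Getting these constants to fit, together with the minimum-fresh-side trick that keeps the recursion depth at $1$ and collapses the running time to $\OO(n^d)$, is the technical heart of the argument; the CMSO composition machinery and the reduction scheme itself are comparatively standard.
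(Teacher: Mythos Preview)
Your plan is essentially the paper's: define $\equiv_\psi$ on boundaried graphs, use finiteness of types (Lemma~\ref{lem:finiteState}), recover the type of an unbreakable boundaried graph by calling the hypothesised oracle on $H\oplus_B Z_j$ for each representative $Z_j$ (your property~(d) is precisely how the paper's Lemma~\ref{lem:understandUnbreakable} works), and recurse by splitting along a small separator and replacing one side by its reduced form. The problem you call \texttt{Reduce} is what the paper calls \textsc{Understand}$[\psi]$.

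The gap is in your running-time argument. You claim depth-$1$ recursion by choosing the separation to \emph{minimise} $|Y|$, so that the child call is immediately a base case. Two things go wrong. First, the separator-finding routines you invoke (the paper's \alg{Break-ALG}, Lemma~\ref{lem:break}) do not return a minimum-$|Y|$ separation: they either certify $(s,c)$-unbreakability or return \emph{some} $(s/2^c,c)$-witnessing separation, and it is unclear how to obtain an exact minimum in time $g(c)\cdot n^{\OO(1)}$ with exponent independent of $c$. Without exact minimality your lifting argument (``a good separation of the child yields a strictly smaller good separation of the parent'') fails and the recursion is not depth~$1$. Second, even granting minimality, ``no good separation of $(H[Y\cup C],C)$'' only excludes separations with the boundary $C$ entirely on one side; an $(s,c)$-witnessing separation of $H[Y\cup C]$ that \emph{splits} $C$ is not a good separation, yet it makes $H[Y\cup C]\oplus_C Z_j$ breakable, so the oracle call is illegitimate. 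You flag this under ``expected main obstacle'' but do not resolve it (it can be handled by absorbing the smaller half of $C$ into the separator, raising its size to $2c$, and shifting thresholds, but this must be done). The paper sidesteps both issues by abandoning depth~$1$: it takes whatever separation \alg{Break-ALG} returns and recurses on \emph{both} sides (first reduce the $X$-side to $\beta'$, then recurse on the glued remainder $\gamma$), bounding the recurrence $T(n)\le y\,n^3\log n + T(n')+T(\widehat n)$ directly (Lemma~\ref{lem:timeUnderstand-ALG}); this is where $d>4$ and the condition \boundS\ are actually used.
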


%
%\smallskip
%\noindent{\bf Theorem~\ref{thm:main}}
%{\em Let $\psi$ be a CMSO sentence and $d > 4$ be a real. For all $c\in\mathbb{N}$, there exists $s\in\mathbb{N}$ such that if there exists an algorithm that solves {\sc CMSO}$[\psi]$ on $(s,c)$-unbreakable graphs in time $\OO(n^d)$, then there exists an algorithm that solves {\sc CMSO}$[\psi]$ on general graphs in time $\OO(n^{d})$.}
%\smallskip

For Theorem~\ref{thm:main_graphs} to be useful,  there must exist problems that can be formulated in CMSO, for which it is easier to design algorithms for the special case when the input graphs are unbreakable, than it is to design algorithms that work on general graphs. 
Such problems can be found in abundance in {\em parameterized complexity}. Indeed, the 
{\em recursive understanding} technique, which has been used to solve several  
problems~\cite{DBLP:journals/siamcomp/ChitnisCHPP16,DBLP:conf/stoc/GroheKMW11,DBLP:conf/focs/KawarabayashiT11,KimPST17,0001RS16,0001R16} in parameterized complexity, is based precisely on the observation that for many graph problems it is much easier to design algorithms if the input graph can be assumed to be unbreakable.

Designing algorithms using the recursive understanding technique typically involves a technical and involved argument akin to doing dynamic programming on graphs of bounded treewidth (see Chitnis et al.~\cite{DBLP:journals/siamcomp/ChitnisCHPP16} for an exposition).
These arguments reduce the original problem on general graphs to a generalized version of the problem on $(s,c)$-unbreakable graphs, for appropriate values of $s$ and $c$. Then an algorithm is designed for this generalized problem on $(s,c)$-unbreakable graphs, yielding an algorithm for the original problem.

For all applications of the recursive understanding technique known to the authors~\cite{DBLP:journals/siamcomp/ChitnisCHPP16,DBLP:conf/stoc/GroheKMW11,DBLP:conf/focs/KawarabayashiT11,KimPST17,0001RS16,0001R16}, the problem in question (in which recursive understanding has been applied) can be formulated as a CMSO model checking problem, and therefore, the rather cumbersome application of recursive understanding can be completely replaced by a black box invocation of Theorem~\ref{thm:main_graphs}.
Using Theorem~\ref{thm:main_graphs} in place of recursive understanding has the additional advantage that it reduces problems on general graphs to {\em the same} problem on unbreakable graphs, facilitating also the last step of designing an algorithm on unbreakable graphs. 

As an example of the power of Theorem~\ref{thm:main_graphs} we use it to give a fixed parameter tractable ({\FPT}) algorithm for the {\sc Vertex Multiway Cut Uncut} problem. Here, we are given a  graph $G$ together with a set of terminals 
$T \subseteq V(G)$, an equivalence relation $\mathcal{R}$ on the set $T$, and an integer $k$, and the objective is to test 
whether there exists a set $S\subseteq V (G) \setminus T$ of at most $k$ vertices such that for any $u, v \in T$, 
the vertices $u$ and $v$ belong to the same connected component of $G\setminus S$ if and only if
 $(u, v)\in \mathcal{R}$. Since finding the desired set $S$ satisfying the above property can be formulated in  CMSO, we are able to completely sidestep the necessity to define a technically involved annotated version of our problem, and furthermore, we need only focus on the base case where the graph is unbreakable. To solve the base case, a simple procedure that is based on the enumeration of connected sets with small neighborhood  is sufficient. For classification purposes, our approach is significantly simpler than the problem-specific algorithm in~\cite{DBLP:journals/siamcomp/ChitnisCHPP16}.  Finally, we show how Theorem~\ref{thm:main_graphs} can be effortlessly deployed to show fixed parameterized tractability of a  wide range of problems, where the input is a graph $G$ and the task is to find a connected induced subgraph of $G$ of bounded treewidth such that ``few'' vertices outside this subgraph have neighbors inside the subgraph, and additionally the subgraph has a CMSO-definable property.

 \myparagraph{Our techniques.} The proof of Theorem \ref{thm:main_graphs} is based heavily on 
% a combination of the f idea of recursively decomposing a given graph into $(s,c)$-unbreakable graphs~\cite{} and 
 the idea of
graph replacement, which dates back to the work of Fellows and Langston~\cite{FellowsL89}. We combine this idea with  Courcelle's theorem~\cite{Courcelle90,Courcelle92a,Courcelle97}, which states that every CMSO-definable property $\sigma$   has finite state on a bounded-size separation/boundary. In other words, for any CMSO-definable property $\sigma$ and fixed $t\in {\mathbb N}$, there is an equivalence relation  defined on the set of all $t$-boundaried graphs (graphs with a set of at most $t$ distinguished vertices) with a finite number, say $\zeta$ (where $\zeta$ depends only on $\sigma$ and $t$) of equivalence classes such that if we replace any $t$-boundaried subgraph $H$ of the given graph $G$ with another $t$-boundaried graph, say $H'$,  from the same equivalence class to obtain a graph $G'$, then $G$ has the property $\sigma$ if and only if $G'$ has the property $\sigma$. In our case, $t=2c$. Let   $R_1,\dots, R_\zeta$ denote a set containing one `` minimal'' $2c$-boundaried graph    from each  equivalence class (for the fixed CMSO-definable property $\sigma$). Let $r$ denote the size of the largest among these minimal representatives.

The main technical content of our paper is in the description of an algorithm for a generalization of our question. To be precise, we will describe how one can, given a $2c$-boundaried graph $G$, locate the precise equivalence class in which $G$ is contained and how one could compute the corresponding smallest representative from the set $\{R_1,\dots, R_\zeta\}$. We refer to this  task as ``understanding'' $G$.
%
%We use this fact regarding CMSO-definable properties  to first design an algorithm $\cal S$ that takes as input an $(s',c')$-unbreakable $2c$-boundaried graph $H$ (where $s'=s-r,c'=c$ and hence $s',c'$ depend only on $c$ and $\sigma$) and returns a minimal representative from the equivalence class containing $H$. This subroutine relies on the existence of the algorithm assumed in the premise of Theorem~\ref{thm:main} and  provides a `base case' step in our recursion.  

In order to achieve our objective, we first give an algorithm $\cal A$ that allows one to understand $2c$-boundaried $(s-r,c)$-unbreakable graphs (for a choice of $s$ which is sufficiently large compared to $r$ and $c$). This algorithm is built upon the following observation. The equivalence class of any $2c$-boundaried graph $G$ is determined exactly by the subset of $\{G\oplus R_1,G\oplus R_2,\dots, G\oplus R_\zeta\}$ on which $\sigma$ evaluates to true. Here, the graph $G\oplus R_i$ is the graph obtained by taking the disjoint union of the graphs $G$ and $R_i$ and then identifying the vertices of the boundaries of these graphs with the same label. Since $s$ is chosen to be sufficiently large compared to $c$ and $r$, it follows that for every $i\in \{1,\dots, \zeta\}$, the graph $G\oplus R_i$ is $(s,c)$-unbreakable and we can use the assumed algorithm for CMSO[$\psi$] on $(s,c)$-unbreakable graphs to design an algorithm that \emph{understands} $2c$-boundaried $(s-r,c)$-unbreakable graphs. This constitutes the `base case' of our main algorithm. 

In order to understand a general ($(s-r,c)$-breakable) $2c$-boundaried graph, we use 
 known algorithms from \cite{DBLP:journals/siamcomp/ChitnisCHPP16} to compute a partition of the vertex set of $G$ into $X,C,$ and $Y$ such that $C$ is a separator, $|C|\leq c$ and $|X|,|Y|\geq 
\frac{s-r}{2^c}$. Let $G_1=G[X\cup C]$ and let $G=G[Y\cup C]$. 
 Without loss of generality, we may assume that at most half the vertices in the boundary of $G$ lie in $X\cup C$. Consequently, the graph $G_1$ is a $2c$-boundaried graphs where the boundary vertices are the vertices in $C$ along with the boundary vertices of $G$ contained in $X\cup C$. We then recursively understand the strictly smaller $2c$-boundaried graph $G_1$ to find its representative $\hat R\in \{R_1,\dots, R_\zeta\}$. Since the evaluation of $\sigma$ on $G$ is the same as the evaluation of $\sigma$ on $G_2\oplus \hat R$ (where the gluing happens along $C$), we only need to understand the $2c$-boundaried graph $G_2\oplus \hat R$ (where the boundary is carefully defined from that of $G$ and $\hat R$) and we do this by recursively executing the ``understand'' algorithm on this graph.

%
%
%- this informally means that however large a subgraph "behind" this boundary is, there is only a finite set of
%states in which the subgraph may be with respect to validity of \phi. Moreover, each of these states can be
%represented by a (small) boundaried subgraph. Hence the very natural approach can be: Given graph G,
%which is of globally low connectivity, can be recursively decomposed by known algorithms (see the
%Appendix) into highly connected pieces. For one highly connected piece of G, one can replace (in theory)
%the attached branches formed by the other pieces of G by the smallest representatives of the same state
%(wrt. \phi). After the replacement, this piece is still highly connected (see the precise definition of
%"unbreakable" in the paper, p.1), and one can ru!
%n the given black-box algorithm for property \phi on it
%
% 

At this point we also need to remark on two drawbacks of Theorem~\ref{thm:main_graphs}. The first is that Theorem~\ref{thm:main_graphs} is {\em non-constructive}. Given an algorithm for {\sc CMSO}$[\psi]$ on $(s,c)$-unbreakable graphs, Theorem~\ref{thm:main_graphs} allows us to infer the existence of an algorithm for {\sc CMSO}$[\psi]$ on general graphs, but it does not provide us with the actual algorithm. This is due to  the subroutine $\cal S$ requiring a representative $2c$-boundaried subgraph for each equivalence class, to be part of its `source code'.  
Thus, the parameterized algorithms obtained  using Theorem~\ref{thm:main_graphs} are {\em non-uniform} (see Section~\ref{sec:applications}), as opposed to the algorithms obtained by recursive understanding.

 The second drawback is that Theorem~\ref{thm:main_graphs} incurs a gargantuan constant factor overhead in the running time, where this factor depends on the formula $\psi$ and the cut size $c$. We leave removing these two drawbacks as intriguing open problems.

\section{Preliminaries}\label{sec:prelims}

In this section, we introduce basic terminology related to graphs, structures, CMSO, boundaried structures and parameterized complexity.  
In order to present a rigorous proof of our lemmas in a way that is consistent with existing notation used in related work, we 
 follow the notation from the paper~\cite{DBLP:journals/jacm/BodlaenderFLPST16}.
We use $[t]$ as a shorthand for $\{1,2,\ldots,t\}$. Given a function $f: A\rightarrow B$ and a subset $A'\subseteq A$, we denote $f(A')=\bigcup_{a\in A'}f(a)$.

\subsection{Graphs}\label{sec:prelimsGraphs}

Throughout this paper, we use the term ``graph'' to refer to a multigraph rather than only a simple graph. Given a graph $G$, we let $V(G)$ and $E(G)$ denote the vertex and edge sets of $G$, respectively. When $G$ is clear from the context, we denote $n=|V(G)|$ and $m=|E(G)|$. Given two subsets of $V(G)$, $A$ and $B$, we let $E(A,B)$ denote the set of edges of $G$ with one endpoint in $A$ and the other endpoint in $B$. Given $U\subseteq V(G)$, we let $G[U]$ denote the subgraph of $G$ induced by $U$, and we let $N(U)$ and $N[U]$ denote the open and closed neighborhoods of $U$, respectively. Moreover, we denote $G\setminus U=G[V(G)\setminus U]$. Given $v\in V(G)$, we denote $N(v)=N(\{v\})$ and $N[v]=N[\{v\}]$. Given $E\subseteq E(G)$, we denote $G\setminus E = (V(G),E(G)\setminus E)$. Moreover, we let $V[E]$ denote the set of every vertex in $V(G)$ that is incident to at least one edge in $E$, and we define $G[E]=(V[E],E)$. A graph $G$ is a {\em cluster graph} if there exists a partition $(V_1,V_2,\ldots,V_r)$ of $V(G)$ for some $r\in\mathbb{N}_0$ of $V(G)$ such that for all $i\in[r]$, $G[V_i]$ is a clique, and for all $j\in[r]\setminus\{i\}$, $E(V_i,V_j)=\emptyset$.

\bigskip
{\noindent\bf Treewidth.} Treewidth is a structural parameter that indicates how much a given graph resembles a tree. For example, a tree has treewidth 1 and an $n$-vertex clique has treewidth $n-1$. Formally, the treewidth of a graph is defined as follows.

\begin{definition}\label{def:treewidth}
A \emph{tree decomposition} of a graph $G$ is a pair $(T,\beta)$ of a tree $T$ and $\beta: V(T) \rightarrow 2^{V(G)}$,
such that
\begin{enumerate}
    \item $\bigcup_{t \in V(T)} \beta(t) = V(G)$, and
    \item\label{item:twedge} for any edge $e \in E(G)$, there exists a node $t \in V(T)$ such that both endpoints of $e$ belong to $\beta(t)$, and
    \item\label{item:twconnected} for any vertex $v \in V(G)$, the subgraph of $T$ induced by the set $T_v = \{t\in V(T): v\in\beta(t)\}$ is a tree.
\end{enumerate}
The {\em width} of $(T,\beta)$ is $\max_{v\in V(T)}\{|\beta(v)|\}-1$. The {\em treewidth} of $G$ is the minimum width of a tree decomposition of $G$.
\end{definition}

{\noindent\bf Unbreakability.} To formally introduce the notion of unbreakability, we rely on the definition of a separation:

\begin{definition}{\rm [\bf Separation]} A pair $(X, Y)$ where $X \cup Y = V (G)$ is a {\em separation} if $E(X \setminus
Y, Y \setminus X) = \emptyset$. The order of $(X, Y)$ is $\vert X\cap Y\vert$.
\end{definition}

Roughly speaking, a graph is breakable if it is possible to ``break'' it into two large parts by removing only a small number of vertices. Formally,

\begin{definition}{\rm [\bf $(s,c)$-Unbreakable graph]} Let $G$ be a graph. If there exists a separation $(X,Y)$ of order at most $c$ such that $\vert X\setminus Y\vert > s$ and $\vert Y\setminus X\vert> s$, called an {\em $(s,c)$-witnessing separation}, then $G$ is {\em $(s,c)$-breakable}. Otherwise, $G$ is {\em $(s,c)$-unbreakable}.
\end{definition}

The following lemma implies that it is possible to determine (approximately) whether a graph is unbreakable or not, and lemmata similar to it can be found in \cite{DBLP:journals/siamcomp/ChitnisCHPP16}. We give a proof for the sake of completeness and in order to avoid interrupting the flow of the reader and to ensure that we keep the presentation focussed on the main result, the proof has been moved to the Appendix (Section~\ref{app:break}).

\begin{lemma}\label{lem:break}
There exists an algorithm, \alg{Break-ALG}, that given $s,c\in\mathbb{N}$ and a graph $G$, in time $2^{\OO(c\log(s+c))}\cdot n^3\log n$ either returns an $\displaystyle{(\frac{s}{2^c},c)}$-witnessing separation or correctly concludes that $G$ is $(s,c)$-unbreakable.
\end{lemma}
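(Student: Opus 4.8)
The plan is to reduce the search for an $(s/2^c,c)$-witnessing separation to a sequence of minimum cut computations, exploiting the fact that any such separation has order at most $c$ and can therefore be "guessed" by trying small sets of vertices. First I would observe that if $G$ is $(s,c)$-breakable, then there is a separation $(X,Y)$ of order at most $c$ with $|X\setminus Y|>s$ and $|Y\setminus X|>s$; call $Z=X\cap Y$ the separator, $|Z|\le c$. The idea is that $Z$ can be detected by looking at the behaviour of minimum vertex cuts between well-chosen ``anchor'' vertices on the two sides, while the factor $2^c$ loss comes from the standard importance-separator/branching argument: we do not expect to find $Z$ exactly, but we can find \emph{some} small separator whose removal leaves two parts each of size roughly $s/2^c$.

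Concretely, the steps I would carry out are as follows. (1) Enumerate all subsets $A\subseteq V(G)$ with $|A|\le c$; there are $n^{\OO(c)}$ of them, but to get the stated running time we instead sample or branch more cleverly, so I would actually: for each pair of disjoint vertex subsets $A,B$ of size at most $c+1$, compute a minimum $(A,B)$-vertex separator using max-flow (Menger), in time $\OO(c\cdot (n+m))$ per pair via $c+1$ rounds of augmenting paths, or more crudely $\OO(n^2\log n)$. (2) For a fixed guess, if the minimum $(A,B)$-separator $S$ has size at most $c$, remove $S$ and examine the connected components of $G\setminus S$; group them greedily into two bundles. (3) Here is where the $2^c$ enters: because $|S|\le c$, the components of $G\setminus S$ attach to $S$ via at most $2^c$ distinct neighbourhood-patterns, but more to the point, a standard argument shows that if a balanced-enough separation of order $\le c$ exists, then among the $2^{|S|}$ ways of assigning the ``sides'' determined by $S$, at least one yields two parts of size $> s/2^c$ each — this is essentially the ``random colouring / highconnectivity'' lemma used in recursive understanding (see Chitnis et al.~\cite{DBLP:journals/siamcomp/ChitnisCHPP16}). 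Iterating over the at most $2^{\OO(c)}$ relevant colourings for each of the $2^{\OO(c\log(s+c))}$ anchor guesses, and running the flow computation, gives total time $2^{\OO(c\log(s+c))}\cdot n^3\log n$. (4) If no guess produces such a separation, output that $G$ is $(s,c)$-unbreakable; correctness in this direction follows because any genuine $(s,c)$-witnessing separation $(X,Y)$ would, with the right anchors $A\subseteq X\setminus Y$, $B\subseteq Y\setminus X$ (each of size $\le c+1$, chosen so that the min separator between them is exactly of order $\le c$), have been detected.

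The main obstacle I anticipate is getting the quantitative bookkeeping exactly right: namely, arguing that a constant number (depending only on $c$) of anchor vertices on each side suffices to ``pin down'' a witnessing separation — one must be careful that a minimum $(A,B)$-separator need not coincide with $Z$, but one can invoke submodularity of cuts to push/pull the min separator to a canonical ``closest'' one, and then argue the two sides are still large enough after the $2^c$ loss. A secondary subtlety is handling the case where the boundary/anchor vertices themselves land in the separator, and ensuring the greedy bundling of components of $G\setminus S$ into two groups each of size $>s/2^c$ is always possible when a balanced separation exists (a pigeonhole/averaging argument over the component sizes). Since Lemma~\ref{lem:break} is stated with an approximate guarantee and essentially identical statements appear in~\cite{DBLP:journals/siamcomp/ChitnisCHPP16}, I expect the proof to be a careful but routine adaptation rather than to require a new idea; the bulk of the write-up will be the flow/branching analysis and the averaging argument that produces the two large sides.
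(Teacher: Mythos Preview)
Your high-level instinct---guess a small amount of data that pins down the separator, then run min-cut---matches the paper's, but the concrete mechanism you sketch has two gaps that the paper fills differently.

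First, the enumeration. You propose iterating over pairs $(A,B)$ of anchor sets of size $\le c+1$; that is $n^{\Theta(c)}$ pairs, not $2^{\OO(c\log(s+c))}$, and you never actually say how to ``sample or branch more cleverly'' to close this gap---the phrase ``$2^{\OO(c\log(s+c))}$ anchor guesses'' is asserted, not derived. The paper does not enumerate anchors at all. It uses an $(n,\Theta(s+c),c)$-\emph{universal set}: a family of $\binom{\Theta(s+c)}{c}2^{o(s+c)}\log n = 2^{\OO(c\log(s+c))}\log n$ two-colourings of $V(G)$ guaranteed to contain one that marks the unknown separator $X\cap Y$ with $1$'s and a designated size-$\Theta(s)$ portion of each side with $0$'s. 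For each colouring one examines the connected components of the $0$-coloured part and runs min-cut between pairs of them; this is the source of the stated exponential factor.

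Second, and more substantively, your account of the $2^c$ loss is not right. The paper splits into two cases. If both $G[X\setminus Y]$ and $G[Y\setminus X]$ contain a connected component of size $\ge s/2$, a single min-cut between two such components (revealed by the colouring) already yields an $(s/2,c)$-witness---no $2^c$ needed here. The $2^c$ appears only in the complementary case, where (say) $X\setminus Y$ is fragmented into components each of size $<s/2$. In this case there is no large connected anchor to pick, and your plan of choosing $A\subseteq X\setminus Y$ with $|A|\le c+1$ and computing a min $(A,B)$-cut can simply isolate $A$ by a tiny cut near it. The paper instead groups the small components of $G[X\setminus Y]$ by their neighbourhood inside $X\cap Y$; since $|X\cap Y|\le c$ there are at most $2^c$ neighbourhood patterns, so by pigeonhole one pattern collects at least $|X\setminus Y|/2^c > s/2^c$ vertices, and those vertices together with their common neighbourhood give the $(s/2^c,c)$-witness. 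Your ``$2^{|S|}$ ways of assigning sides'' heuristic does not produce this; to make your outline go through you would need to detect the fragmented case and carry out this neighbourhood-bucketing explicitly.
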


%\bigskip
{\noindent\bf Boundaried Graphs.} Roughly speaking, a boundaried graph is a graph where some vertices are labeled. Formally,

\begin{definition}{\rm [\bf Boundaried graph]}\label{def:boundariedGraph}
A {\em boundaried graph} is a graph $G$ with a set $\delta(G)\subseteq V(G)$ of distinguished vertices called {\em boundary vertices}, and an injective labeling $\lambda_G: \delta(G)\rightarrow \mathbb{N}$. The set $\delta(G)$ is the {\em boundary} of $G$, and the {\em label set} of $G$ is $\Lambda(G)=\{\lambda_{G}(v)\mid v\in \delta(G)\}$.
\end{definition}

We remark that we also extend the definition of $(s,c)$-(un)breakability from graphs, to boundaried graphs in the natural way. That is, we ignore the boundary vertices when considering the existence of an $(s,c)$-witnessing separation.  For ease of presentation, we sometimes abuse notation and treat equally-labeled vertices of different boundaried graphs, as well as the vertex that is the result of the identification of two such vertices, as the same vertex. Given a finite set $I\subseteq \mathbb{N}$, ${\cal F}_{I}$  denotes the class of all boundaried graphs whose label set is $I$, and ${\cal F}_{\subseteq I}=\bigcup_{I'\subseteq I}{\cal F}_{I'}$. A boundaried graph in ${\cal F}_{\subseteq [t]}$ is called  a {\em $t$-boundaried} graph. Finally, ${\cal F}$ denotes the class of all boundaried graphs. The main operation employed to unite two boundaried graphs is the one that glues their boundary vertices together. Formally,

\begin{definition}{\rm [\bf Gluing by $\oplus$]} Let $G_1$ and $G_2$ be two  boundaried graphs. Then, $G_1 \oplus G_2$ is the (not-boundaried) graph  obtained from the disjoint union of $G_1$ and $G_2$ by identifying equally-labeled vertices in $\delta(G_1)$ and $\delta(G_2)$.\footnote{Each edge in $G_1$ (or $G_2$) whose endpoints are boundaried vertices in $G_1$ (or $G_2$) is preserved as a unique edge in $G_1 \oplus G_2$.}
\end{definition}

%\begin{definition}{\rm [\bf Heir]} \hly{Problematic bec multigraphs}
%Let $G_{1}$ and $G_{2}$ be two boundaried graphs $i\in[2]$, and $G=G_{1}\oplus G_{2}$.
%The {\em glued set} of $G_{i}$ is the set $B_{i}=\lambda_{G_{i}}^{-1}(\Lambda(G_{1})\cap \Lambda(G_{2}))$. The {\em heir} of a vertex $v\in V(G_{i})$ in $G$, denoted by $h(v)$, is defined as follows: if $v\not\in B_{i}$, then $h(v)=v$; otherwise, $h(v)$ is the result of the identification of $v$ with an equally labeled vertex in $G_{3-i}$. The {\em common boundary} of $G_{1}$ and $G_{2}$ in $G$ is equal to the set $h(B_{1})=h(B_{2})$. The {\em heir} of an edge $e\in E(G_{i})$ in $G$, denoted by $h(e)$, is defined as follows: is the edge $\{h(u),h(v)\}$ in $G$.
%\end{definition}

\subsection{Structures}\label{sec:prelimsStruct}

We first define the notion of a {\em structure} in the context of our paper. 

\begin{definition}{\rm [{\bf Structure}]}
A {\em structure} $\alpha$ is a tuple whose first element is a graph, denoted by $G_\alpha$, and each of the remaining elements is a subset of $V(G_\alpha)$, a subset of $E(G_\alpha)$, a vertex in $V(G_\alpha)$ or an edge in $E(G_\alpha)$. The number of elements in the tuple is the {\em arity} of the structure.
\end{definition}

Given a structure $\alpha$ of arity $p$ and an integer $i\in [p]$, we let $\alpha[i]$ denote the $i$'th element of $\alpha$. Note that $\alpha[1]=G_\alpha$.
By {\em appending} a subset $S$ of $V(G_{\alpha})$ (or $E(G_{\alpha})$) to a structure $\alpha$ of arity $p$, we produce a new structure, denoted by $\alpha' = \alpha\diamond S$, of arity $p+1$ with the first $p$ elements of $\alpha'$ being the elements of $\alpha$ and $\alpha'[p+1] = S.$
For example, consider the structure $\alpha=(G_{\alpha},S,e)$ of arity $3$, where $S\subseteq V(G_{\alpha})$ and $e\in E(G_{\alpha}).$
Let $S'$ be some subset of $V(G_{\alpha})$. Then, appending $S'$ to $\alpha$ results in the structure $\alpha'=\alpha\diamond S'=(G_{\alpha},S,e,S').$

Next, we define the notions of a {\em type} of a structure and a {\em property} of structures. 

\begin{definition}{\rm [{\bf Type}]}
Let $\alpha$ be a structure of arity $p$. The {\em type} of $\alpha$ is a tuple of arity $p$, denoted by ${\bf type}(\alpha)$, where the first element, ${\bf type}(\alpha)[1]$, is {\sf graph}, 
and for every $i\in\{2,3,\ldots,p\}$, ${\bf type}(\alpha)[i]$ is {\sf vertex} if $\alpha[i]\in V(G_\alpha)$, {\sf edge} if $\alpha[i]\in E(G_\alpha)$, {\sf vertex set} if $\alpha[i]\subseteq V(G_\alpha)$, and {\sf  edge set} otherwise.\footnote{Note that we distinguish between a set containing a single vertex (or edge) and a single vertex (or edge).}
\end{definition}

\begin{definition}{\rm [{\bf Property}]}
A {\em property} is a function $\sigma$ from the set of all structures to $\{\mbox{\sf true, false}\}.$ 
\end{definition}

Finally, we extend the notion of unbreakability to structures.

\begin{definition}{\rm [\bf $(s,c)$-Unbreakable structure]} Let $\alpha$ be a structure. If $G_\alpha$ is an $(s,c)$-unbreakable graph, then we say that $\alpha$ is an {\em $(s,c)$-unbreakable} structure, and otherwise we say that $\alpha$ is  an {\em $(s,c)$-breakable} structure.
\end{definition}

\subsection{Counting Monadic Second Order Logic}\label{sec:prelimsCMSO}

The syntax of Monadic Second Order Logic (MSO) of graphs includes the logical connectives $\vee,$ $\land,$ $\neg,$ 
$\Leftrightarrow,$ $\Rightarrow,$ variables for vertices, edges, sets of vertices and sets of edges, the quantifiers $\forall$ and $\exists$, which can be applied to these variables, and five binary relations: 
\begin{enumerate}
\item $u\in U$, where $u$ is a vertex variable and $U$ is a vertex set variable; 
\item $d \in D$, where $d$ is an edge variable and $D$ is an edge set variable;
\item $\mathbf{inc}(d,u),$ where $d$ is an edge variable, $u$ is a vertex variable, and the interpretation  is that the edge $d$ is incident to  $u$; 
\item $\mathbf{adj}(u,v),$ where $u$ and $v$ are vertex variables, and the interpretation is that $u$ and $v$ are adjacent;
\item  equality of variables representing vertices, edges, vertex sets and edge sets.
\end{enumerate}

Counting Monadic Second Order Logic (CMSO) extends MSO by including atomic sentences testing whether the cardinality of a set is equal to $q$ modulo $r,$ where $q$ and $r$ are integers such that $ 0\leq q<r $ and $r\geq 2$. That is, CMSO is MSO with the following atomic sentence: 
$\mathbf{card}_{q,r}(S) = \mathbf{true}$ if and only if $|S| \equiv q \pmod r$, where $S$ is a set.
We refer to~\cite{ArnborgLS91,Courcelle90,Courcelle97} for a detailed introduction to  CMSO. 

\bigskip
{\noindent\bf Evaluation.}
To evaluate a CMSO-formula $\psi$ on a structure $\alpha$, we instantiate the free variables of $\psi$ by the elements of $\alpha.$ In order to determine which of the free variables of $\psi$ are instantiated by which of the elements of $\alpha$, we introduce the following conventions. First, each free variable $x$ of a CMSO-formula $\psi$ is associated with a rank, $r_x\in\mathbb{N}\setminus\{1\}$. Thus, a CMSO-formula $\psi$ can be viewed as a string accompanied by a tuple of integers, where the tuple consists of one integer $r_x$ for each free variable $x$ of $\psi$. 

Given a structure $\alpha$ and a CMSO-formula $\psi$, we say that {\bf type}$(\alpha)$ {\em matches} $\psi$ if {\bf (i)} the arity of $\alpha$ is  at least  $\max r_x$, where the maximum is taken  over  each  free variable $x$ of $\psi$, and {\bf (ii)} for each free variable $x$ of $\psi$,  ${\bf type}(\alpha)[r_x]$ is compatible with the type of $x.$
For example, if $x$ is a vertex set variable, then ${\bf type}(\alpha)[r_x]={\sf vertex\ set}.$ 
Finally, we say that $\alpha$ {\em matches} $\psi$ if ${\bf type}(\alpha)$ matches $\psi$.
Given a free variable $x$ of a CMSO sentence $\psi$ and a structure $\alpha$ that matches $\psi$, the element corresponding to $x$ in $\alpha$ is $\alpha[r_x]$.
 
\begin{definition}{{\rm [{\bf Property $\sigma_{\psi}$}]}}
Given a CMSO-formula $\psi$, the property $\sigma_{\psi}$ is defined as follows. Given a structure $\alpha$, if $\alpha$ does not match $\psi$, then $\sigma_{\psi}(\alpha)$ equals {\sf false}, and otherwise $\sigma_{\psi}(\alpha)$ equals the result of  the evaluation of $\psi$ where each free variable $x$ of $\psi$ is instantiated by $\alpha[r_{x}].$
\end{definition}
 
Note that some elements of $\alpha$ might not correspond to any variable of $\psi$. However, $\psi$ may still be 
evaluated on the structure $\alpha$---in this case, the evaluation of $\psi$ does not depend on all the elements of the structure. If the arity of $\alpha$ is 1, then we use $\sigma_{\psi}(G_\alpha)$ as a shorthand for $\sigma_{\psi}(\alpha)$.

\begin{definition}{{\rm [{\bf CMSO-definable property}]}}
A property $\sigma$ is {\em CMSO-definable} if there exists a CMSO-formula $\psi$ such that $\sigma=\sigma_{\psi}$.
In this case, we say that $\psi$ {\em defines} $\sigma$.
\end{definition}

\subsection{Boundaried Structures}\label{sec:prelimsBoundariedStruct}

The notion of a {\em boundaried structure} is an extension of the notion of a {\em boundaried graph} and is defined as follows. 

\begin{definition}{\rm [{\bf Boundaried structure]}}
A {\em boundaried structure} is a tuple whose first element is a boundaried graph $G$, denoted by $G_\alpha$, and each of the remaining elements is a subset of $V(G),$ a subset of $E(G),$ a vertex in $V(G),$ an edge in $E(G),$ or the symbol $\star.$ The number of elements in the tuple is the {\em arity} of the boundaried structure.
\end{definition}

Given a boundaried structure $\alpha$ of arity $p$ and an integer $i\in [p]$, we let $\alpha[i]$ denote the $i$'th element of $\alpha$. We remark that we extend the definition of $(s,c)$-(un)breakability of structures, to boundaried structures. Next, other terms presented in previous subsections are adapted to fit boundaried structures.

\begin{definition}{\rm [{\bf Type]}}
Let $\alpha$ be a boundaried structure of arity $p$. The {\em type} of $\alpha$ is a tuple of arity $p$, denoted by ${\bf type}(\alpha)$, where the first element, ${\bf type}(\alpha)[1]$, is {\sf boundaried graph}, 
and for every $i\in\{2,3,\ldots,p\}$, ${\bf type}(\alpha)[i]$ is {\sf vertex} if $\alpha[i]\in V(G_\alpha)$, {\sf edge} if $\alpha[i]\in E(G_\alpha)$, {\sf vertex set} if $\alpha[i]\subseteq V(G_\alpha)$,  {\sf edge set} if $\alpha[i]\subseteq E(G_\alpha)$ and $\star$ otherwise.
\end{definition}

Now, given a boundaried structure and a CMSO-formula $\psi,$ we say that ${\bf type}(\alpha)$ {\em matches} $\psi$
if {\bf (i)} the arity of $\alpha$ is  at least $\max r_x$, where the maximum is taken over each free variable $x$ of $\psi$,
and {\bf (ii)} for each free variable $x$ of $\psi$, ${\bf type}(\alpha)[r_x]$ is compatible with the type of $x.$ Moreover, we say that $\alpha$ matches $\psi$ if ${\bf type}(\alpha)$ {\em matches} $\psi$.

Given $p\in\mathbb{N}$, ${\cal A}^{p}$ denotes the class of all boundaried structures of arity $p$, and given a finite set $I\subseteq \mathbb{N}$, ${\cal A}^{p}_{I}$ (${\cal A}^{p}_{\subseteq I}$) denotes the class of all boundaried  structures of arity $p$ whose boundaried graph belongs to ${\cal F}_{I}$ (resp.~${\cal F}_{\subseteq I}$). A boundaried structure in ${\cal A}^{p}_{\subseteq [t]}$ is called a $t$-{\em boundaried structure}. Finally, we let ${\cal A}$ denote the class of all boundaried structures.

\begin{definition}{\rm [{\bf Compatiblity}] }
Two boundaried structures $\alpha$ and $\beta$ are {\em compatible} (notationally, $\alpha{\sim_c}\beta$) if the following conditions are satisfied.
\begin{itemize}
 \item $\alpha$ and $\beta$ have the same arity $p.$

 \item For every $i\in[p]$:
	\begin{itemize}
	\item ${\bf type}(\alpha)[i]={\bf type}(\beta)[i]\neq \star$, or
	\item ${\bf type}(\alpha)[i]\in\{${\sf vertex},{\sf edge}$\}$ and ${\bf type}(\beta)[i]=\star$, or
	\item ${\bf type}(\beta)[i]\in\{${\sf vertex},{\sf edge}$\}$ and ${\bf type}(\alpha)[i]=\star$.
	\end{itemize}
 
 \item For every $i\in[p]$ such that both $\alpha[i]$ and $\beta[i]$ are vertices: $\alpha[i] \in \delta(G_\alpha),$ $\beta[i] \in \delta(G_\beta)$ and $\lambda_{G_\alpha}(\alpha[i]) = \lambda_{G_\beta}(\beta[i]).$
 
 \item For every $i\in[p]$ such that both $\alpha[i]$ and $\beta[i]$ are edges: $\alpha[i] \in E(G_\alpha[\delta(G_\alpha)]),$ $\beta[i] \in E(G_\beta[\delta(G_\beta)])$ and $\{\lambda_{G_\alpha}(x_{\alpha[i]}),\lambda_{G_\alpha}(y_{\alpha[i]}) \} = \{\lambda_{G_\beta}(x_{\beta[i]}), \lambda_{G_\beta}(y_{\beta[i]})\}$, where $\alpha[i]=(x_{\alpha[i]},y_{\alpha[i]})$ and $\beta[i]=(x_{\beta[i]},y_{\beta[i]})$. That is, $x_{j}$ and $y_{j}$ are the endpoints of the edge $j\in \{\alpha[i],\beta[i]\}$.
 \end{itemize}
 \end{definition}
 
\begin{definition}{\rm [{\bf Gluing by $\oplus$}] }
Given two compatible boundaried structures $\alpha$ and $\beta$ of arity $p$, the operation $\alpha \oplus \beta$ is defined as follows. 
\begin{itemize}
\item $\alpha \oplus \beta$ is a structure $\gamma$ of arity $p$.
\item $G_\gamma = G_\alpha \oplus G_\beta.$ 
\item For every $i\in[p]$:
	\begin{itemize}
	\item if $\alpha[i]$ and $\beta[i]$ are sets, $\gamma[i]=\alpha[i]\cup \beta[i]$;
	\item if $\alpha[i]$ and $\beta[i]$ are vertices/edges, $\gamma[i] = \alpha[i] = \beta[i]$;
	\item if $\alpha[i]=\star$, $\gamma[i] = \beta[i]$;
	\item if $\beta[i]=\star$, $\gamma[i] = \alpha[i].$
	\end{itemize}
\end{itemize}
\end{definition}

\subsection{Finite State}\label{sec:prelimsFiniteState}

This subsection states a variant of the classical Courcelle's Theorem~\cite{Courcelle90,Courcelle92a,Courcelle97} (see also \cite{Courcelle:2012book}), which is a central component in the proof of our main result. To this end, we first define the {\em compatibility equivalence relation $\equiv_c$} on boundaried structures as follows. We say that $\alpha \equiv_c \beta$ if $\Lambda(G_\alpha)=\Lambda(G_\beta)$ and for every boundaried structure $\gamma,$ $$\alpha\sim_c \gamma\iff \beta\sim_c\gamma.$$ Now, we define the {\em canonical equivalence relation} $\equiv_{\sigma}$ on boundaried structures.

\begin{definition}{\rm [\bf Canonical equivalence]}
Given a property $\sigma$ of structures, the {\em canonical equivalence relation} $\equiv_{\sigma}$ on boundaried structures is defined as follows. For two boundaried structures $\alpha$ and $\beta$, we say that $\alpha \equiv_{\sigma} \beta$ if {\bf (i)} $\alpha \equiv_{c} \beta$, and {\bf (ii)} for all boundaried structures $\gamma$ compatible with $\alpha$ (and thus also with $\beta$), we have 
\begin{eqnarray*}
\sigma(\alpha \oplus \gamma)={\sf true} \Leftrightarrow \sigma(\beta \oplus \gamma)={\sf true}. %\label{eq:theother_eval}
\end{eqnarray*}
\end{definition}

It is easy to verify that $\equiv_{\sigma}$ is indeed an equivalence relation. Given a property $\sigma$ of structures, $p\in \mathbb{N}$ and $I\subseteq \mathbb{N}$, we let ${\cal E}_{\equiv_{\sigma}}[{\cal A}^p_{\subseteq I}]$ denote the set of equivalence classes of $\equiv_{\sigma}$ when restricted to ${\cal A}^{p}_{\subseteq I}$.

\begin{definition}{\rm [\bf Finite state]}
A property $\sigma$ of structures is {\em finite state} if, for every $p\in \mathbb{N}$ and $I\subseteq \mathbb{N}$, ${\cal E}_{\equiv_{\sigma}}[{\cal A}^p_{\subseteq I}]$ is finite. 
\end{definition}

Given a CMSO sentence $\psi,$ the canonical equivalence relation associated with $\psi$ is $\equiv_{\sigma_{\psi}}$, and for the sake of simplicity, we denote this relation by $\equiv_{\psi}$.

We are now ready to state the variant of Courcelle's Theorem which was proven in \cite{DBLP:journals/jacm/BodlaenderFLPST16} (see also \cite{Courcelle90,Courcelle92a,Courcelle97}) and which we use in this paper.

\begin{lemma}[\cite{DBLP:journals/jacm/BodlaenderFLPST16}]\label{lem:finiteState}
Every CMSO-definable property on structures has finite state.
\end{lemma}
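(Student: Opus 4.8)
The plan is to prove Lemma~\ref{lem:finiteState} by the composition method (Feferman--Vaught / Ehrenfeucht--Fra\"iss\'e games for CMSO) together with a counting of logical types. Fix a CMSO sentence $\psi$ defining $\sigma=\sigma_\psi$, let $q$ denote the quantifier rank of $\psi$, and let $N$ be the least common multiple of all moduli occurring in the atomic subformulas $\mathbf{card}_{\cdot,\cdot}(\cdot)$ of $\psi$ (set $N=1$ if none occur). Encode each boundaried structure $\alpha\in{\cal A}^p_{\subseteq I}$, for a fixed finite $I\subseteq\mathbb{N}$, as a finite relational structure over a vocabulary $\tau_{I,p}$ consisting of the incidence and adjacency relations, a unary predicate $P_\ell$ for each $\ell\in I$ (marking the boundary vertex labelled $\ell$, empty when $\ell\notin\Lambda(G_\alpha)$), and one unary/constant symbol for each coordinate $i\ge 2$ encoding $\alpha[i]$ (a set, a single vertex/edge, or nothing when $\alpha[i]=\star$). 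Write $\mathrm{tp}(\alpha)$ for the CMSO-type of $\alpha$ over $\tau_{I,p}$ up to quantifier rank $q$ and using only modular predicates whose modulus divides $N$.

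First I would record the two standard facts that the argument rests on. \textbf{(Finiteness of types.)} Since $\tau_{I,p}$ is finite and the admissible modular predicates form a finite set, an induction on quantifier rank shows that there are, up to logical equivalence, only finitely many CMSO formulas of quantifier rank $\le q$ with a bounded number of free variables over $\tau_{I,p}$; hence only finitely many types $\mathrm{tp}(\alpha)$ are realized by structures in ${\cal A}^p_{\subseteq I}$. \textbf{(Composition over $\oplus$.)} If $\alpha_1\sim_c\beta_1$, $\alpha_2\sim_c\beta_2$, $\mathrm{tp}(\alpha_1)=\mathrm{tp}(\beta_1)$ and $\mathrm{tp}(\alpha_2)=\mathrm{tp}(\beta_2)$, then $\mathrm{tp}(\alpha_1\oplus\alpha_2)=\mathrm{tp}(\beta_1\oplus\beta_2)$; in particular $\mathrm{tp}(\alpha\oplus\gamma)$ depends only on $\mathrm{tp}(\alpha)$, $\mathrm{tp}(\gamma)$ and their common compatibility data. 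The cleanest route to this is an Ehrenfeucht--Fra\"iss\'e argument: from winning Duplicator strategies for the $q$-round CMSO-game on $(\alpha_1,\beta_1)$ and on $(\alpha_2,\beta_2)$, one assembles a winning strategy on $(\alpha_1\oplus\alpha_2,\beta_1\oplus\beta_2)$ by playing both games in parallel, decomposing a set-move into its restrictions to the two sides and keeping the (pointwise-fixed) identified boundary vertices in lockstep. The modular-counting atoms are preserved because $|S_1\cup S_2|\bmod N$ is determined by $|S_1|\bmod N$ and $|S_2|\bmod N$ once the overlap of $S_1$ and $S_2$ is confined to the fixed boundary, and the non-graph coordinates of $\alpha_1\oplus\alpha_2$ are quantifier-free functions (union, or the common value) of the coordinates on the two sides, so they are handled by the atomic layer of the game.

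To conclude, suppose $\alpha,\beta\in{\cal A}^p_{\subseteq I}$ satisfy: $\Lambda(G_\alpha)=\Lambda(G_\beta)$; $\alpha$ and $\beta$ have the same \emph{compatibility profile}, meaning that for every $i$ the pair $(\mathbf{type}(\alpha)[i]$, together with the label of $\alpha[i]$ when it is a vertex or the labels of its endpoints when it is an edge$)$ coincides with the analogous data for $\beta$; and $\mathrm{tp}(\alpha)=\mathrm{tp}(\beta)$. The first two conditions imply $\alpha\equiv_c\beta$, since $\alpha\sim_c\gamma\iff\beta\sim_c\gamma$ is decided precisely by this profile and the label set. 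For any $\gamma$ compatible with $\alpha$ (hence with $\beta$), the composition fact gives $\mathrm{tp}(\alpha\oplus\gamma)=\mathrm{tp}(\beta\oplus\gamma)$, and as $\psi$ is a sentence of quantifier rank $\le q$ with modular predicates of modulus dividing $N$, membership of $\psi$ in this type is part of the type data, so $\sigma(\alpha\oplus\gamma)=\sigma(\beta\oplus\gamma)$. Thus $\alpha\equiv_\sigma\beta$. Consequently $\equiv_\sigma$ restricted to ${\cal A}^p_{\subseteq I}$ is refined by the common refinement of the three partitions ``same label set'', ``same compatibility profile'', and ``same type''. Over a fixed finite $I$ and fixed $p$ the first two are finite, and the third is finite by the finiteness-of-types fact; a common refinement of finitely many finite partitions is finite, so ${\cal E}_{\equiv_\sigma}[{\cal A}^p_{\subseteq I}]$ is finite.

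The main obstacle is the composition statement: one must set up the right CMSO-game (or Feferman--Vaught reduction sequence) for the specific operation $\oplus$, which is not a disjoint union but an amalgam identifying a boundedly many pre-designated vertices while simultaneously taking unions on the non-graph coordinates, and one must check that the modular counting atoms survive this amalgamation. This is exactly the technical content isolated in~\cite{DBLP:journals/jacm/BodlaenderFLPST16}; the type-counting and the final refinement argument are routine once it is in hand.
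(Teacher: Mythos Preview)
The paper does not give its own proof of this lemma: it is stated as a citation from~\cite{DBLP:journals/jacm/BodlaenderFLPST16} (introduced as ``a variant of Courcelle's Theorem which was proven in~\cite{DBLP:journals/jacm/BodlaenderFLPST16}'') and used as a black box. Your sketch via Feferman--Vaught composition and finiteness of CMSO types is the standard route and is essentially what the cited reference does, so there is nothing to compare against in the present paper.
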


\subsection{Parameterized Complexity}\label{sec:prelimsPC}

An instance of a parameterized problem is a pair of the form $(x,k),$ where $k$ is a non-negative integer called the {\em parameter}. 
Thus, a parameterized problem $\Pi$ is a subset of $\Sigma^{*}\times \mathbb{N}_0$, for some finite alphabet $\Sigma$.

Two central notions in parameterized complexity are those of {\em uniform fixed-parameter tractability} and {\em non-uniform fixed-parameter tractability}. In this paper, we are interested in the second notion, which is defined as follows.

\begin{definition}{\rm [\bf Non-uniform fixed-parameter tractability ({\FPT})]}\label{def:FPT}
Let $\Pi$ be a parameterized problem. We say that $\Pi$ is {\em non-uniformly fixed-parameter tractable ({\FPT})} if there exists a fixed $d$ such that for every fixed $k\in\mathbb{N}_0$, there exists an algorithm \alg{A}$_k$ that for every $x\in \Sigma^{*}$,  determines whether $(x,k)\in\Pi$ in time $\OO(|x|^d)$.
\end{definition}

Note that in Definition \ref{def:FPT}, $d$ is independent of $k$. We refer to books such as~\cite{DowneyFbook13,DBLP:books/sp/CyganFKLMPPS15} for a detailed introduction to parameterized complexity.

\section{CMSO Model Checking}\label{sec:model}

Given a CMSO formula $\psi$, the {\sc CMSO}$[\psi]$ problem is defined as follows. The input of {\sc CMSO}$[\psi]$ is a structure $\alpha$ that matches $\psi$, and the objective is to output $\sigma_\psi(\alpha)$. In this section, we prove the following result, which then implies Theorem~\ref{thm:main_graphs}.

\begin{restatable}{theorem}{maintheorem}
\label{thm:main}
Let $\psi$ be a CMSO formula. For all $c\in\mathbb{N}$, there exists $s\in\mathbb{N}$ such that if there exists an algorithm that solves {\sc CMSO}$[\psi]$ on $(s,c)$-unbreakable structures in time $\OO(n^d)$ for some $d>4$, then there exists an algorithm that solves {\sc CMSO}$[\psi]$ on general structures in time $\OO(n^{d})$.
\end{restatable}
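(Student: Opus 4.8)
The plan is to prove Theorem~\ref{thm:main} by combining Courcelle's theorem (Lemma~\ref{lem:finiteState}) with graph replacement, following the blueprint from the introduction. Fix the arity $p=\max_x r_x$ of the relevant structures (coordinates beyond position $p$ do not influence $\sigma_\psi$, so this is without loss of generality). By Lemma~\ref{lem:finiteState} the property $\sigma_\psi$ is finite state, so ${\cal E}_{\equiv_\psi}[{\cal A}^p_{\subseteq[2c]}]$ is finite; let $\zeta=\zeta(\psi,c)$ be its size, fix for each class a representative boundaried structure with a minimum number of vertices, call them $R_1,\dots,R_\zeta$, and set $r=\max_i|V(G_{R_i})|$. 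These representatives are hard-wired into the algorithm (this is the source of the non-uniformity). I will take $s:=2r\cdot 2^c+r$; any sufficiently large value works, and with this choice $\frac{s-r}{2^c}=2r$, and $s\ge r$.

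The heart of the proof is an \emph{understanding} subroutine which, given a $2c$-boundaried structure $\alpha\in{\cal A}^p_{\subseteq[2c]}$, outputs the representative $R_j$ with $\alpha\equiv_\psi R_j$. It rests on two facts. (i) The $\equiv_c$-class of $\alpha$ depends only on its type vector and boundary labels, and is read off directly. (ii) Once the $\equiv_c$-class is fixed, the $\equiv_\psi$-class of $\alpha$ is determined by the Boolean vector $\big(\sigma_\psi(\alpha\oplus R_i)\big)_i$ over those $i$ for which $R_i$ is compatible with $\alpha$. The ``$\Rightarrow$'' direction of (ii) is immediate from the definition of $\equiv_\psi$ (instantiate the test structure by $R_i$); for ``$\Leftarrow$'' one uses the substitution property of $\equiv_\psi$, namely that if $\gamma\equiv_\psi\gamma'$ and both are compatible with $\alpha$ then $\sigma_\psi(\alpha\oplus\gamma)=\sigma_\psi(\alpha\oplus\gamma')$ (apply the definition of $\equiv_\psi$ to $\gamma,\gamma'$ with test structure $\alpha$), so that any compatible $\gamma$ may be replaced by its representative and the vector therefore controls $\sigma_\psi(\alpha\oplus\gamma)$ for all $\gamma$. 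When $\alpha\oplus R_i$ does not match $\psi$ its $\sigma_\psi$-value is ${\sf false}$ by definition, so nothing is computed there; this is the only place where the treatment of $\star$-coordinates needs attention.

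The subroutine proceeds as follows. Run \alg{Break-ALG} (Lemma~\ref{lem:break}) with parameters $s-r$ and $c$ on $G_\alpha$, at cost $2^{\OO(c\log(s+c))}n^3\log n=\OO(n^3\log n)$. If it reports that $G_\alpha$ is $(s-r,c)$-unbreakable, then for each relevant $i$ the structure $\alpha\oplus R_i$ is $(s,c)$-unbreakable, because any $(s,c)$-witnessing separation of $G_\alpha\oplus G_{R_i}$ would restrict, after deleting the at most $r$ vertices contributed by $R_i$, to an $(s-r,c)$-witnessing separation of $G_\alpha$; hence we may invoke the assumed $\OO(n^d)$-time algorithm on each $\alpha\oplus R_i$ that matches $\psi$ (at most $\zeta$ calls, each on at most $n+r$ vertices), obtaining the vector and thus the class. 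Otherwise \alg{Break-ALG} yields a partition $(X,C,Y)$ of $V(G_\alpha)$ with $C$ a separator, $|C|\le c$, and $|X|,|Y|>2r$. After possibly swapping $X$ and $Y$, at most half the boundary vertices of $G_\alpha$ lie in $X\cup C$; then $G_\alpha[X\cup C]$ with boundary $C\cup(\delta(G_\alpha)\cap X)$ relabelled into $[2c]$, equipped with the restriction of $\alpha$, is a $2c$-boundaried structure $\alpha_1$ on $|X|+|C|\le n-2r$ vertices. Recursively understand $\alpha_1$ to obtain $\hat R$, and form $\alpha':=\alpha_2\oplus\hat R$, where $\alpha_2$ is the restriction of $\alpha$ to $G_\alpha[Y\cup C]$, the gluing is along $C$, and the boundary of $\alpha'$ is the image of $\delta(G_\alpha)$ (the vertices of $\delta(G_\alpha)\cap(Y\cup C)$ together with the boundary vertices of $\hat R$ carrying the labels of $\delta(G_\alpha)\cap X$); one checks $|\delta(G_{\alpha'})|=|\delta(G_\alpha)|\le 2c$ and $|V(G_{\alpha'})|\le|Y|+r\le n-r$. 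By the congruence (replacement) property of $\equiv_\psi$ — substituting the $\equiv_\psi$-equivalent $\hat R$ for the boundaried sub-structure $\alpha_1$ — we get $\alpha\equiv_\psi\alpha'$, so it suffices to recursively understand the strictly smaller $\alpha'$ and return its representative.

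To conclude, given a general input structure $\alpha$, regard it as a $0$-boundaried structure, run the understanding subroutine to obtain its representative $R_j$ (which has empty boundary, since $\equiv_\psi$ refines $\equiv_c$ and $\equiv_c$ preserves the label set), and output the precomputed constant $\sigma_\psi(R_j)$; this equals $\sigma_\psi(\alpha)$ because $\equiv_\psi$-equivalent $0$-boundaried structures have equal $\sigma_\psi$-value (glue both with the trivial boundaried structure, which acts as the identity). For the running time, let $T(n)$ bound the cost of the subroutine on $n$ vertices. The base case costs $\OO(\zeta(n+r)^d)=\OO(n^d)$; the recursive case costs $\OO(n^3\log n)+T(n_1)+T(n_2)$ with $n_1,n_2\le n-(r+1)$ and $n_1+n_2\le n+r$. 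By convexity $n_1^d+n_2^d\le(n-(r+1))^d+(2r+1)^d$, hence $n^d-n_1^d-n_2^d=\Omega(n^{d-1})$; since $d>4$ we have $n^3\log n=o(n^{d-1})$, and a routine induction gives $T(n)=\OO(n^d)$ — this is precisely where the hypothesis $d>4$ is used. I expect the main obstacle to be the bookkeeping in the recursive step: defining the restricted structures $\alpha_1,\alpha_2$ and the mixed gluing producing $\alpha'$ with the correct boundary and non-graph coordinates (the boundary labels along $C$ are consumed while those of $\delta(G_\alpha)$ survive, and the $\star$-symbols must be tracked), and pinning down the exact congruence statement for $\equiv_\psi$ that licenses replacing $\alpha_1$ by $\hat R$; the running-time recursion and the base case are comparatively routine.
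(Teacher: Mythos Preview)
Your proposal is correct and follows essentially the same approach as the paper: reduce to an ``understand'' subroutine that returns the $\equiv_\psi$-representative, handle the unbreakable base case by gluing with each representative and invoking the assumed algorithm (the paper's Lemma~\ref{lem:understandUnbreakable}), and in the breakable case recurse first on one side of a witnessing separation, replace it by its representative, then recurse on the resulting smaller structure (the paper's \alg{Understand-ALG}). The only cosmetic differences are that the paper bounds the recursion via a direct algebraic estimate rather than your convexity argument, and in the final step it evaluates $\sigma_\psi$ on the returned representative by one more call to \alg{Solve-Unbr-ALG} rather than hard-wiring the value; the bookkeeping you flag as the main obstacle is exactly what the paper spells out in Lemma~\ref{lem:correctUnderstand-ALG}.
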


%\begin{theorem}\label{thm:main}
%Let $\psi$ be a CMSO sentence. For all $c\in\mathbb{N}$, there exists $s\in\mathbb{N}$ such that if there exists an algorithm that solves {\sc CMSO}$[\psi]$ on $(s,c)$-unbreakable structures in time $\OO(n^d)$ for some $d>4$, then there exists an algorithm that solves {\sc CMSO}$[\psi]$ on general structures in time $\OO(n^{d})$.
%\end{theorem}

%\maintheorem*

In the context of parameterized complexity, {\sc min-CMSO}$[\psi]$ ({\sc min-Edge-CMSO}$[\psi]$) is defined as follows. The input of {\sc min-CMSO}$[\psi]$ is a structure $\alpha$, where for all $S\subseteq V(G_\alpha)$ (resp.~$S\subseteq E(G_\alpha)$), $\alpha\diamond S$ matches $\psi$, and a parameter $k$. The objective is to determine whether there exists $S\subseteq V(G_\alpha)$ (resp.~$S\subseteq E(G_\alpha)$) of size at most $k$ such that $\sigma_\psi(\alpha\diamond S)$ is {\sf true}. Similarly, we define {\sc max-CMSO}$[\psi]$ (resp.~{\sc max-Edge-CMSO}$[\psi]$), where the size of $S$ should be at least $k$, and {\sc eq-CMSO}$[\psi]$ (resp.~{\sc eq-Edge-CMSO}$[\psi]$), where the size of $S$ should be exactly $k$. Then, as a consequence of  Theorem \ref{thm:main}, we derive the following result.

\begin{theorem}\label{cor:main}
Let {\sc x}$\in\{${\sc min,max,eq,min-Edge,max-Edge,eq-Edge}$\}$, and let $\widehat{\psi}$ be a CMSO sentence. For all $\widehat{c}:\mathbb{N}_0\rightarrow\mathbb{N}_0$, there exists $\widehat{s}:\mathbb{N}_0\rightarrow\mathbb{N}_0$ such that if {\sc x-CMSO}$[\widehat{\psi}]$ parameterized by $k$ is {\FPT} on $(\widehat{s}(k),\widehat{c}(k))$-unbreakable structures, then {\sc x-CMSO}$[\widehat{\psi}]$ parameterized by $k$ is {\FPT} on general structures.
\end{theorem}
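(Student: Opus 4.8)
The plan is to derive Theorem~\ref{cor:main} from Theorem~\ref{thm:main} by a standard reduction: for each of the six variants of \textsc{x-CMSO}$[\widehat\psi]$, encode the search for the set $S$ of size (at most/at least/exactly) $k$ into a single CMSO sentence $\psi_k$ whose model-checking problem is \emph{equivalent} to the original parameterized problem at parameter value $k$, and then invoke Theorem~\ref{thm:main} with $\psi = \psi_k$ for each fixed $k$. Concretely, for the \textsc{min} variant I would set
\[
\psi_k \;=\; \exists S \left( \left(\textstyle\bigvee_{j=0}^{k} |S| = j\right) \wedge \widehat\psi(S) \right),
\]
where $\widehat\psi(S)$ denotes $\widehat\psi$ with its relevant free variable bound to $S$, and ``$|S| = j$'' is expressed as the CMSO-expressible assertion that $S$ contains exactly $j$ distinct elements (a fixed first-order formula once $j$ is a constant). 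For \textsc{max} one replaces the disjunction by the assertion $|S| \geq k$, again first-order-expressible for fixed $k$; for \textsc{eq} one uses $|S| = k$; and the \textsc{*-Edge} variants are identical with $S$ ranging over edge sets instead of vertex sets. In every case $\psi_k$ is a genuine CMSO \emph{sentence}, and a structure $\alpha$ satisfies $\psi_k$ if and only if $(\alpha,k)$ is a yes-instance of \textsc{x-CMSO}$[\widehat\psi]$.

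The key steps, in order: (i) For each fixed $k$, write down $\psi_k$ as above and verify that $\sigma_{\psi_k}(\alpha) = \mathsf{true}$ iff $(\alpha,k)$ is a yes-instance --- this is immediate from the semantics of CMSO and the definition of the variants. (ii) Fix the function $\widehat c$; for each $k$, apply Theorem~\ref{thm:main} to the sentence $\psi_k$ with cut parameter $c := \widehat c(k)$, obtaining a value $s_k \in \mathbb N$. Define $\widehat s(k) := s_k$. (iii) Suppose \textsc{x-CMSO}$[\widehat\psi]$ parameterized by $k$ is \FPT{} on $(\widehat s(k),\widehat c(k))$-unbreakable structures, witnessed (by Definition~\ref{def:FPT}) by a fixed exponent $d$ and, for each $k$, an algorithm $\mathsf{A}_k$ running in time $\OO(n^{d})$ on $(\widehat s(k),\widehat c(k))$-unbreakable structures. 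We may assume $d > 4$ (otherwise pad the running time to $\OO(n^{5})$, which is still $\OO(n^{d'})$ for the fixed $d' = \max\{d,5\} > 4$). Then $\mathsf{A}_k$ solves \textsc{CMSO}$[\psi_k]$ on $(s_k, \widehat c(k))$-unbreakable structures in time $\OO(n^{d'})$, so by Theorem~\ref{thm:main} there is an algorithm solving \textsc{CMSO}$[\psi_k]$ on general structures in time $\OO(n^{d'})$. (iv) Collect these algorithms over all $k$: the family $\{\mathsf{B}_k\}_{k \in \mathbb N_0}$, where $\mathsf{B}_k$ solves \textsc{CMSO}$[\psi_k]$ on general structures in time $\OO(n^{d'})$, decides \textsc{x-CMSO}$[\widehat\psi]$ at parameter $k$ on general structures in time $\OO(n^{d'})$ with $d'$ independent of $k$; by Definition~\ref{def:FPT} this means \textsc{x-CMSO}$[\widehat\psi]$ parameterized by $k$ is (non-uniformly) \FPT{} on general structures.

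A minor technical point to handle cleanly is the matching conventions: the input to \textsc{x-CMSO}$[\widehat\psi]$ is a structure $\alpha$ such that $\alpha\diamond S$ matches $\widehat\psi$ for every candidate $S$, whereas \textsc{CMSO}$[\psi_k]$ expects a structure matching $\psi_k$. Since $\psi_k$ is a sentence (no free variables), every structure of arity at least $1$ matches it, so feeding $\alpha$ (or its arity-$1$ restriction $(G_\alpha)$) to $\mathsf{B}_k$ is legitimate; and the internal quantifier $\exists S$ in $\psi_k$ ranges over exactly the sets $S$ for which the original problem asks, so the equivalence in step~(i) goes through without friction. The main --- and really only --- obstacle is purely bookkeeping: one must be careful that the exponent $d$ (and hence $d'$) produced by the hypothesized \FPT{} algorithm is genuinely independent of $k$, so that the single application scheme yields a \emph{single} $d'$ valid for all $k$; this is exactly what Definition~\ref{def:FPT} guarantees, and it is why the non-uniform formulation is the natural setting here. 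Everything else is a routine translation between the language of parameterized problems and that of CMSO model checking.
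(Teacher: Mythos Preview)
Your proposal is correct and follows essentially the same route as the paper's own proof: encode the size constraint on $S$ into a CMSO formula $\psi_k$ for each fixed $k$, define $\widehat{s}(k)$ via Theorem~\ref{thm:main} applied to $\psi_k$ with $c=\widehat{c}(k)$, and then collect the resulting per-$k$ algorithms to witness non-uniform {\FPT}. Your handling of the constraint $d>4$ by padding is in fact a bit more careful than the paper, which simply asserts ``there exists a fixed $d>4$'' without comment; one minor caveat is that $\widehat\psi$ may carry free variables beyond the one bound to $S$ (so $\psi_k$ need not literally be a sentence and $\alpha$ need not have arity~$1$), but this does not affect the argument since $\alpha$ still matches $\psi_k$ by construction.
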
 

\begin{proof}
Denote $V=V(G_\alpha)$ and $E=E(G_\alpha)$. First, notice that for every fixed  $k$, there exists a CMSO sentence $\psi_k$ such that {\sc x-CMSO}$[\widehat{\psi}]$ is essentially equivalent to {\sc CMSO}$[\psi_k]$. Indeed, if {\sc x}$\in\{${\sc min,max,eq}$\}$, we can define $\psi_k$ as follows:
\begin{itemize}
\item if {\sc x}={\sc min}, then set $\psi_k=\exists_{S\subseteq V}[(|S|\leq k)\wedge \widehat{\psi}(S)]$;
\item if {\sc x}={\sc max}, then set $\psi_k=\exists_{S\subseteq V}[(|S|\geq k)\wedge \widehat{\psi}(S)]$;
\item if {\sc x}={\sc eq}, then set $\psi_k=\exists_{S\subseteq V}[(|S|=k)\wedge \widehat{\psi}(S)]$.
\end{itemize}
Here, we have that
\begin{itemize}
\item $|S|\leq k$ is the CMSO sentence $\exists_{v_1,\ldots,v_k\in V}[\forall_{u\in V}(u=v_1\vee\cdots\vee u=v_k\vee \neg u\in S)$,
\item $|S|\geq k$ is the CMSO sentence $\exists_{v_1,\ldots,v_k\in V}[v_1\in S\wedge\cdots\wedge v_k\in S\wedge {\bf distinct}(v_1,\ldots,v_k)]$, where ${\bf distinct}(v_1,\ldots,v_k)$ is the CMSO sentence $[(\neg v_1=v_2)\wedge\cdots\wedge(\neg v_1=v_k)]\wedge\cdots\wedge[(\neg v_i=v_1)\wedge\cdots\wedge(\neg v_i=v_{i-1})\wedge(\neg v_i=v_{i+1})\wedge\cdots\wedge(\neg v_i=v_k)]\wedge\cdots\wedge[(\neg v_k=v_1)\wedge\cdots\wedge(\neg v_k=v_{k-1})]$, and
\item $|S|=k$ is the CMSO sentence $(|S|\leq k)\wedge(|S|\geq k)$.
\end{itemize}
In case {\sc x}$\in\{${\sc min-Edge,max-Edge,eq-Edge}$\}$, we replace each occurrence of $V$ by an occurrence of $E$.

Let $\widehat{c}:\mathbb{N}_0\rightarrow\mathbb{N}_0$. Accordingly, define $\widehat{s}:\mathbb{N}_0\rightarrow\mathbb{N}_0$ as follows. For all $k\in\mathbb{N}_0$, let $\widehat{s}(k)$ be the constant $s$ in Theorem \ref{thm:main} where $\psi=\psi_k$ and $c=\widehat{c}(k)$. Suppose that {\sc x-CMSO}$[\widehat{\psi}]$ parameterized by $k$ is {\FPT} on $(s(k),c(k))$-unbreakable structures. Then, there exists a fixed $d>4$ such that for every fixed $k\in\mathbb{N}_0$, there exists an algorithm \alg{A}$_k$ that solves {\sc x-CMSO}$[\widehat{\psi}]$ on $(s(k),c(k))$-unbreakable structures in time $\OO(n^d)$. Thus, we can employ \alg{A}$_k$ to solve {\sc CMSO}$[\psi_k]$ on $(s(k),c(k))$-unbreakable structures in\ time $\OO(n^d)$. By Theorem \ref{thm:main}, we obtain that for every fixed $k\in\mathbb{N}_0$, there exists an algorithm that solves {\sc CMSO}$[\psi_k]$ on general structures in time $\OO(n^{d})$, which implies that for every fixed $k\in\mathbb{N}_0$, there exists an algorithm that solves {\sc x-CMSO}$[\psi]$ on general structures in time $\OO(n^{d})$. We thus conclude that {\sc x-CMSO}$[\widehat{\psi}]$ parameterized by $k$ is {\FPT} on general structures.
\end{proof}

From now on, to prove Theorem~\ref{thm:main}, we assume a fixed CMSO formula $\psi$ and a fixed $c\in\mathbb{N}$. Moreover, we fix $p$ as the number of free variables of $\psi$, and $I=[2c]$.
 We also let $s\in\mathbb{N}$ be fixed, where its exact value (that depends only on $\psi$ and $c$) is determined later. Finally, we assume that there exists an algorithm, \alg{Solve-Unbr-ALG}, that solves {\sc CMSO}$[\psi]$ on $(s,c)$-unbreakable structures in time $\OO(n^d)$ for some $d>4$.

\subsection{Understanding the {\sc CMSO}$[\psi]$ Problem}\label{sec:understanding}

To solve {\sc CMSO}$[\psi]$, we consider a generalization of {\sc CMSO}$[\psi]$, called {\sc Understand}$[\psi]$.
The definition of this generalization is based on an  examination of ${\cal E}_{\equiv_{\psi}}[{\cal A}^p_{\subseteq I}]$. Given a boundaried structure $\alpha\in{\cal A}^p_{\subseteq I}$, we let $E_\alpha$ denote the equivalence class in ${\cal E}_{\equiv_{\psi}}[{\cal A}^p_{\subseteq I}]$ that contains $\alpha$.
For every equivalence class $E_q\in {\cal E}_{\equiv_{\psi}}[{\cal A}^p_{\subseteq I}]$, let $\alpha_{E_q}$ denote some boundaried structure in $E_q$ such that there is no boundaried structure $\alpha\in E_q$ where the length of the string encoding $\alpha$ is smaller than the length of the string encoding $\alpha_{E_q}$. Accordingly, denote ${\cal R}_{\equiv_{\psi}}[{\cal A}^p_{\subseteq I}]=\{\alpha_{E_q}: E_q\in {\cal E}_{\equiv_{\psi}}[{\cal A}^p_{\subseteq I}]\}$. These will be the representatives of the equivalence classes induced by ${\equiv_{\psi}}$.
 By Lemma \ref{lem:finiteState}, there is a fixed $r\in\mathbb{N}$ (that depends only on $\psi$ and $c$) such that both $|{\cal R}_{\equiv_{\psi}}[{\cal A}^p_{\subseteq I}]|$ and the length of encoding of any boundaried structure in ${\cal R}_{\equiv_{\psi}}$ are upper bounded by $r$ as well as $c\leq r$. Note that the encoding explicitly lists all vertices and edges. 
%Notice that $r$ depends only on $\psi$ and $c$. 
By initially choosing $s$ appropriately, we ensure that \boundS.

The {\sc Understand}$[\psi]$ problem is defined as follows. The input 
%of {\sc CMSO}$[\psi]$
 is a boundaried structure $\alpha\in{\cal A}^p_{\subseteq I}$ that matches $\psi$, and the objective is to output a boundaried structure $\beta\in{\cal R}_{\equiv_{\psi}}[{\cal A}^p_{\subseteq I}]$ such that $E_\alpha = E_\beta$.

We proceed by showing that to prove Theorem~\ref{thm:main}, it is sufficient to prove that there exists an algorithm that solves {\sc Understand}$[\psi]$ on general boundaried structures in time $\OO(n^{d})$.

\begin{lemma}\label{lem:understandToSolve}
If there exists an algorithm that solves {\sc Understand}$[\psi]$ on general boundaried structures in time $\OO(n^{d})$, then there exists an algorithm that solves {\sc CMSO}$[\psi]$ on general structures in time $\OO(n^{d})$.
\end{lemma}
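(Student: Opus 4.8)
The plan is to reduce an instance of {\sc CMSO}$[\psi]$ to an instance of {\sc Understand}$[\psi]$ by viewing a plain structure as a boundaried structure with empty boundary. Given a structure $\alpha$ that matches $\psi$, form the boundaried structure $\alpha'$ which is identical to $\alpha$ except that $G_{\alpha'}$ is the boundaried graph $(G_\alpha,\emptyset,\lambda)$ with empty boundary (here $\lambda$ is the empty labeling, so $\Lambda(G_{\alpha'})=\emptyset\subseteq I=[2c]$). Then $\alpha'\in{\cal A}^p_{\subseteq I}$, it still matches $\psi$, and $|V(G_{\alpha'})|=n$, so running the assumed {\sc Understand}$[\psi]$ algorithm on $\alpha'$ takes time $\OO(n^d)$ and produces a representative $\beta\in{\cal R}_{\equiv_{\psi}}[{\cal A}^p_{\subseteq I}]$ with $E_{\alpha'}=E_\beta$, i.e.\ $\alpha'\equiv_\psi\beta$.

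Next I would extract $\sigma_\psi(\alpha)$ from $\beta$. The key observation is that $\equiv_\psi$ refines the partition of boundaried structures according to the value of $\sigma_\psi$ on structures \emph{with empty boundary}: if $\alpha'$ has empty boundary and $\alpha'\equiv_\psi\beta$, then in particular $\beta\equiv_c\alpha'$ forces $\Lambda(G_\beta)=\Lambda(G_{\alpha'})=\emptyset$, so $\beta$ also has empty boundary; and taking $\gamma$ to be the empty boundaried structure (of the right arity, with $G_\gamma$ a single-vertex-free boundaried graph with empty boundary, or simply $\gamma$ compatible with $\alpha'$ whose gluing acts as the identity), the defining property of $\equiv_\psi$ gives $\sigma_\psi(\alpha'\oplus\gamma)=\sigma_\psi(\beta\oplus\gamma)$, hence $\sigma_\psi(\alpha')=\sigma_\psi(\beta)$. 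Since $\alpha$ and $\alpha'$ have the same underlying graph, vertex/edge-set elements, and arity, they are evaluated identically by $\psi$, so $\sigma_\psi(\alpha)=\sigma_\psi(\alpha')=\sigma_\psi(\beta)$. Because ${\cal R}_{\equiv_{\psi}}[{\cal A}^p_{\subseteq I}]$ is a \emph{fixed} finite set (of size at most $r$, by Lemma~\ref{lem:finiteState}) depending only on $\psi$ and $c$, we may hard-code into the algorithm the value $\sigma_\psi(\beta)$ for each of the finitely many representatives $\beta$; thus, once $\beta$ is returned, we output the pre-computed bit $\sigma_\psi(\beta)$ in constant time. This is exactly the step responsible for the non-uniformity noted in the introduction, but it is legitimate for establishing the claimed running-time bound.

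The total running time is $\OO(n^d)$ for the {\sc Understand}$[\psi]$ call plus $\OO(1)$ for the table lookup, which is $\OO(n^d)$, and by the displayed chain of equalities the answer is correct. The main (and essentially only) subtlety is checking that the empty boundaried structure genuinely behaves as a neutral element for $\oplus$ with respect to $\alpha'$ — that is, that there is a $\gamma$ compatible with every empty-boundary $\alpha'$ with $\alpha'\oplus\gamma$ having the same $\sigma_\psi$-value as $\alpha'$ itself — and that the $\equiv_c$ condition really does force $\beta$ to have empty boundary so that the representative list we consult is the one for empty-boundary structures. Neither point is hard: the first follows from the definition of $\oplus$ on boundaried structures (gluing along an empty label set is disjoint union, and disjoint union with the empty graph is the identity), and the second is immediate from the definition of $\equiv_c$. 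No further obstacles are expected; the lemma is essentially a bookkeeping reduction, with all the genuine work deferred to the promised {\sc Understand}$[\psi]$ algorithm.
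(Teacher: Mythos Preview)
Your proof is correct and follows essentially the same reduction as the paper: turn $\alpha$ into a boundaried structure $\alpha'$ with empty boundary, run the assumed {\sc Understand}$[\psi]$ algorithm to obtain a representative $\beta$, and read off $\sigma_\psi(\alpha)$ from $\beta$ via an ``empty glue'' argument.

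The one point of divergence is the final extraction step. You hard-code the bit $\sigma_\psi(\beta)$ for each of the finitely many representatives as a lookup table. The paper instead glues $\beta'$ with the empty-graph boundaried structure $\gamma$ to obtain a genuine (non-boundaried) structure $\beta=\beta'\oplus\gamma$, observes that $\beta$ is $(s,c)$-unbreakable because $|V(G_\beta)|\leq r$ and \boundS, and then invokes the already-assumed algorithm \alg{Solve-Unbr-ALG} on $\beta$. Both approaches are valid and both are non-uniform (the representative set ${\cal R}_{\equiv_{\psi}}[{\cal A}^p_{\subseteq I}]$ must be hard-coded in either case, since it is needed inside the {\sc Understand}$[\psi]$ algorithm itself). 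Your version is marginally simpler; the paper's version has the mild aesthetic advantage of reusing the hypothesis algorithm rather than hard-coding \emph{additional} data beyond the representatives. Neither choice affects correctness or the running-time bound.
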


\begin{proof}
Let $G_\emptyset$ denote the graph satisfying $V(G_\emptyset)=\emptyset$. Suppose that there exists an algorithm, \alg{Understand-ALG}, that solves {\sc Understand}$[\psi]$ on general structures in time $\OO(n^{d})$. Then, given an input for {\sc CMSO}$[\psi]$, which is a structure $\alpha$, our algorithm works as follows. It lets $\alpha'$ be the boundaried structure that is identical to $\alpha$ except that the (not-boundaried) graph $G_{\alpha}$ is replaced by the boundaried graph defined by $G_{\alpha'}$ and $\delta(G_{\alpha'})=\emptyset$. Then, it calls \alg{Understand-ALG} with $\alpha'$ as input to obtain a boundaried structure $\beta'\in{\cal R}_{\equiv_{\psi}}[{\cal A}^p_{\subseteq I}]$. Next, it lets $\gamma$ be the boundaried structure of arity $p$ where $G_\gamma=G_\emptyset$, and for all $i\in[p]$, if ${\bf type}(\alpha)[i]$ is {\sf vertex} or {\sf edge} then $\gamma[i]=\star$, and otherwise $\gamma[i]=\emptyset$. Moreover, it lets $\beta$ be the (not-boundaried) structure $\beta'\oplus \gamma$. Recall that $s\geq 2r 2^c+r$ where $r$ is an upper bound on the length of any boundaried structure in ${\cal R}_{\equiv_{\psi}}$, and therefore $\beta$ is an $(s,c)$-unbreakable structure. Thus, our algorithm can finally call \alg{Solve-Unbr-ALG} (whose existence we have already assumed) with $\beta$ as input, and outputs the answer that this call returns.

Clearly, the algorithm runs in time $\OO(n^{d})$. Let us now prove that it solves {\sc CMSO}$[\psi]$ correctly. By the correctness of \alg{Understand-ALG}, it holds that $E_{\alpha'}=E_{\beta'}$. In particular, this equality implies that $\sigma_\psi(\alpha'\oplus\gamma)=\sigma_\psi(\beta'\oplus\gamma)$. Notice that $\alpha=\alpha'\oplus\gamma$. Hence, $\sigma_\psi(\alpha)=\sigma_\psi(\beta)$. By the correctness of \alg{Solve-Unbr-ALG}, we thus conclude that our algorithm is correct.
\end{proof}

In light of Lemma \ref{lem:understandToSolve}, the rest of this section focuses on the proof of the following result.

\begin{lemma}\label{lem:understandSuffices}
There exists an algorithm that solves {\sc Understand}$[\psi]$ on general boundaried structures in time $\OO(n^{d})$.
\end{lemma}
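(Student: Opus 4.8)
\textbf{Proof plan for Lemma~\ref{lem:understandSuffices}.}
The plan is to design a recursive algorithm \alg{Understand-ALG} that, on input a $2c$-boundaried structure $\alpha\in{\cal A}^p_{\subseteq I}$, computes its representative in ${\cal R}_{\equiv_{\psi}}[{\cal A}^p_{\subseteq I}]$, with the recursion driven by the unbreakability/breakability dichotomy. First I would settle the \emph{base case}: an algorithm $\cal A$ that understands $(s-r,c)$-unbreakable $2c$-boundaried structures. The key observation here is that, by the finite-state property (Lemma~\ref{lem:finiteState}), the class $E_\alpha$ is completely determined by the compatibility class of $\alpha$ (which is read off syntactically from ${\bf type}(\alpha)$ and $\Lambda(G_\alpha)$) together with the Boolean vector $\bigl(\sigma_\psi(\alpha\oplus \alpha_{E_q})\bigr)_{q}$ ranging over the finitely many representatives $\alpha_{E_q}\in{\cal R}_{\equiv_{\psi}}$. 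Since each representative has encoding length at most $r$, and \boundS{}, each glued structure $\alpha\oplus \alpha_{E_q}$ is $(s,c)$-unbreakable (gluing a graph of at most $r$ vertices along a boundary of at most $2c$ vertices cannot create an $(s,c)$-witnessing separation when one side already lacks an $(s-r,c)$-witnessing separation — this is exactly the arithmetic that forces the choice $s\geq 2r2^c+r$). Hence we may evaluate $\sigma_\psi(\alpha\oplus\alpha_{E_q})$ for each $q$ by calling \alg{Solve-Unbr-ALG}, assemble the vector, look up the matching equivalence class among the finitely many classes, and output its representative. This runs in $\OO(\zeta\cdot (n+r)^d)=\OO(n^d)$ time, where $\zeta=|{\cal R}_{\equiv_{\psi}}[{\cal A}^p_{\subseteq I}]|$ is a constant.

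Next I would handle the \emph{recursive case}, where $\alpha$ is $(s-r,c)$-breakable. Run \alg{Break-ALG} (Lemma~\ref{lem:break}) on $G_\alpha$; since the graph is $(s-r,c)$-breakable it returns a separation $(X',Y')$ of order $\le c$ with both sides of size more than $\frac{s-r}{2^c}$ (choosing parameters so that $\frac{s-r}{2^c}$ is at least the threshold demanded by the lemma). Let $C=X'\cap Y'$, $X=X'\setminus Y'$, $Y=Y'\setminus X'$. Since $|\delta(G_\alpha)|\le 2c$, one of the two sides — say $X\cup C$ — contains at most $c$ of the original boundary vertices; set $G_1 = G_\alpha[X\cup C]$ and make it a $2c$-boundaried graph with boundary $C$ together with the at most $c$ original boundary vertices it contains, and correspondingly restrict $\alpha$ to get a $2c$-boundaried structure $\alpha_1$ on $G_1$ (elements of $\alpha$ not lying inside $X\cup C$ are replaced by $\star$ when of vertex/edge type, or intersected with $X\cup C$ when of set type; one must check this still matches $\psi$ and has the right compatibility type). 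Symmetrically build $\alpha_2$ on $G_2=G_\alpha[Y\cup C]$ with boundary $C\cup(\delta(G_\alpha)\cap(Y\cup C))$. Because $|X|,|Y|$ are each at least, say, $\frac{s-r}{2^c}>r\ge c$, both $\alpha_1$ and $\alpha_2$ are \emph{strictly smaller} than $\alpha$ (they each drop at least one vertex — in fact many). Recursively call \alg{Understand-ALG} on $\alpha_1$ to get its representative $\widehat{R}\in{\cal R}_{\equiv_{\psi}}$; by definition of $\equiv_\psi$, replacing the $\alpha_1$-part of $\alpha$ by $\widehat{R}$ (gluing along $C$ and the appropriate labels) changes neither the evaluation of $\sigma_\psi$ nor, more importantly, the $\equiv_\psi$-class of the result; so $E_\alpha = E_{\alpha_2\oplus \widehat{R}}$, where $\alpha_2\oplus\widehat R$ is again a $2c$-boundaried structure (the boundary being $\delta(G_\alpha)$, suitably relabeled). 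Now recursively call \alg{Understand-ALG} on $\alpha_2\oplus\widehat R$ and return its answer. The substructure $\alpha_2\oplus\widehat R$ has at most $|Y\cup C|+r \le n - |X| + r < n$ vertices (using $|X|>r$), so this second recursive call is also on a strictly smaller instance.

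For the running-time analysis I would set up the recurrence $T(n)\le T(n_1)+T(n_2)+\OO(n^d)$ where $n_1+n_2\le n+\OO(c+r)$ and, crucially, $\min(n_1,n_2)$ is bounded by a constant fraction of $n$ bounded away from $n$ — actually better: each of $n_1,n_2\le n - \frac{s-r}{2^c}+\OO(r)$, and since $\frac{s-r}{2^c}$ is a constant this only gives linear decrease, so I need the standard ``recursive understanding'' bookkeeping: the two recursive instances together have total size roughly $n$, each is smaller than $n$ by at least a fixed constant, and the overhead is $\OO(n^d)$ with $d>4$; a careful amortized argument (as in Chitnis et al.~\cite{DBLP:journals/siamcomp/ChitnisCHPP16}) then shows the whole recursion tree contributes $\OO(n^d)$ total — this is where $d>4$ is used, to absorb the polynomial factors from \alg{Break-ALG} ($n^3\log n$ per node) and from the branching. \textbf{The main obstacle} I anticipate is precisely this running-time bookkeeping: naively the recursion tree could have superpolynomially many nodes because at each breakable node we spawn two children whose sizes can each be close to $n$. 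The resolution — which is the technical heart of recursive-understanding analyses — is to observe that $G_1$ has more than $\frac{s-r}{2^c}\ge s\,2^{-c}$ vertices that are \emph{internal} (non-boundary), so after understanding $\alpha_1$ we permanently replace those $>\Omega(s)$ vertices by at most $r$ vertices; charging the work to the deleted vertices and using that $s$ is a fixed constant (independent of $n$, depending only on $\psi$ and $c$) bounds the number of recursive invocations, and hence the total time, by $\OO(n^d)$. A secondary subtlety is verifying that restriction and gluing of \emph{structures} (not just graphs) preserves the ``matches $\psi$'' condition and the compatibility type so that $\alpha_2\oplus\widehat R$ is a legitimate input to the recursive call; this is routine but must be done carefully using the definitions of compatibility and $\oplus$ on boundaried structures given in Section~\ref{sec:prelimsBoundariedStruct}.
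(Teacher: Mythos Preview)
Your proposal is correct and follows essentially the same approach as the paper: the base case (Lemma~\ref{lem:understandUnbreakable}) is exactly the test-set method you describe, and the recursive step (Section~\ref{sec:breakable}) is precisely your break--understand--replace--recurse scheme, with the same boundary bookkeeping. The only cosmetic difference is in the running-time analysis: the paper proves $T(n)\le x\cdot n^d$ by a direct induction using the convexity bound $n_1^d+n_2^d\le t^d+(n-t+r)^d$ (where $t=\frac{s-r}{2^c}$), whereas you sketch a charging argument bounding the number of recursion-tree nodes; both arguments work and both use $d>4$ to absorb the $\OO(n^3\log n)$ cost of \alg{Break-ALG}.
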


\subsection{{\sc Understand}$[\psi]$ on Unbreakable Structures}\label{sec:unbreakable}

Recall that \boundS. In this subsection, we show that Algorithm \alg{Solve-Unbr-ALG} can be used as a subroutine in order to efficiently solve {\sc Understand}$[\psi]$ on $(s-r,c)$-unbreakable boundaried structures. 
For this, we follow the method of test sets (see for example, [Section 12.5, \cite{DowneyFbook13}]). The high level idea here is as follows. We first enumerate the relevant subset of the finite set of minimal representatives. In other words, we simply list those minimal representatives which can be glued in a meaningful way to the structure under consideration, call it  $\alpha$. We now observe that gluing each of these representatives to $\alpha$ results in an $(s,c)$-unbreakable structure, which is what we need to call \alg{Solve-Unbr-ALG}.
In this way we solve the instance obtained by gluing $\alpha$ to each minimal representative. 

Now, for every (not necessarily distinct) pair of minimal representatives, we glue them together and do the same. This way, we can identify the  specific minimal representative whose behaviour when glued with {\em every}  minimal representative, precisely resembles that of the structure $\alpha$ when we do the same with $\alpha$. Consequently, we obtain a solution for  {\sc Understand}$[\psi]$. We now formalize this intuition in the following lemma.

%denoted by $\{\gamma\in{\cal R}_{\equiv_{\psi}}[{\cal A}^p_{\subseteq I}]$ 

\begin{lemma}\label{lem:understandUnbreakable}
There exists an algorithm  \alg{Understand-Unbr-ALG}, that solves {\sc Understand}$[\psi]$, where it is guaranteed that inputs are $(s-r,c)$-unbreakable boundaried structures, in time $\OO(n^{d})$.\footnote{Here, \alg{Understand-Unbr-ALG} is not requested to verify whether the input is indeed an $(s-r,c)$-unbreakable boundaried structure.}
\end{lemma}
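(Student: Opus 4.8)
The plan is to implement the ``test set'' method sketched in the paragraph preceding the lemma statement. Fix the input $\alpha \in {\cal A}^p_{\subseteq I}$, which is promised to be an $(s-r,c)$-unbreakable boundaried structure matching $\psi$, with $I = [2c]$. First I would enumerate the finite collection ${\cal R}_{\equiv_\psi}[{\cal A}^p_{\subseteq I}]$ of minimal representatives; this is a constant-size list (depending only on $\psi$ and $c$) that is hard-coded into the algorithm, and each representative has encoding length at most $r$. Among these, restrict attention to the subset ${\cal R}_\alpha \subseteq {\cal R}_{\equiv_\psi}[{\cal A}^p_{\subseteq I}]$ of representatives $\rho$ that are \emph{compatible} with $\alpha$ in the sense of the compatibility relation $\sim_c$ (so that $\alpha \oplus \rho$ is defined); this filtering takes constant time since it only inspects the types and labels. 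For each $\rho \in {\cal R}_\alpha$, form the structure $\alpha \oplus \rho$ and run \alg{Solve-Unbr-ALG} on it to obtain the bit $\sigma_\psi(\alpha \oplus \rho) \in \{\mathsf{true}, \mathsf{false}\}$; record the resulting vector $v_\alpha = (\sigma_\psi(\alpha \oplus \rho))_{\rho \in {\cal R}_\alpha}$, together with the label set $\Lambda(G_\alpha)$.

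\textbf{Why $\alpha \oplus \rho$ is unbreakable, and correctness.} The key point making \alg{Solve-Unbr-ALG} applicable is that $\alpha \oplus \rho$ is $(s,c)$-unbreakable: $\alpha$ is $(s-r,c)$-unbreakable on $G_\alpha$, and $G_{\alpha\oplus\rho}$ is obtained by gluing on a graph with at most $r$ vertices, so any $(s,c)$-witnessing separation of $G_{\alpha\oplus\rho}$ would, after deleting the at most $r$ vertices coming from $\rho$ from the two sides, yield an $(s-r,c)$-witnessing separation of $G_\alpha$ — here one uses $s \geq 2r2^c + r \geq s-r+r$, and more precisely that removing $r$ vertices shrinks each side by at most $r$, so $s$ vs. $s-r$ suffices (this is a short but genuine inequality check one must do carefully, also accounting for the boundary convention that ignores $\delta$ when measuring breakability). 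Hence every call is legitimate. For correctness of the test-set method: I would first compute, for the \emph{finite} known list of representatives, the analogous ``profiles'': for each pair $\rho, \rho' \in {\cal R}_{\equiv_\psi}[{\cal A}^p_{\subseteq I}]$ that are compatible, evaluate $\sigma_\psi(\rho \oplus \rho')$ — this is a constant-size table, and can be treated as precomputed (hard-coded) data. Then I claim that $E_\alpha$ is the \emph{unique} equivalence class $E_q$ whose representative $\alpha_{E_q}$ satisfies: $\Lambda(G_{\alpha_{E_q}}) = \Lambda(G_\alpha)$, ${\cal R}_{\alpha_{E_q}} = {\cal R}_\alpha$ (same compatible set), and for every $\rho$ in that set, $\sigma_\psi(\alpha_{E_q} \oplus \rho) = \sigma_\psi(\alpha \oplus \rho)$. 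Uniqueness and existence follow from the definition of $\equiv_\psi$: two boundaried structures are $\equiv_\psi$-equivalent iff they are $\equiv_c$-equivalent and $\sigma_\psi$ agrees on gluing with \emph{all} compatible boundaried structures $\gamma$; the content of Lemma~\ref{lem:finiteState} together with the fact that ${\cal R}_{\equiv_\psi}[{\cal A}^p_{\subseteq I}]$ is a transversal is that it suffices to test against the finitely many representatives rather than all $\gamma$ — since any compatible $\gamma$ lies in some class $E_{q'}$, and one shows $\sigma_\psi(\alpha \oplus \gamma) = \sigma_\psi(\alpha \oplus \alpha_{E_{q'}})$ by a routine argument using that $\equiv_\psi$ is a congruence with respect to $\oplus$. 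Output $\beta = \alpha_{E_q}$ for the class $E_q$ thus identified.

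\textbf{Running time.} There are $\OO(1)$ representatives, each of size $\OO(1)$, so building each $\alpha \oplus \rho$ takes $\OO(n)$ time and the graph has $\OO(n)$ vertices; one call to \alg{Solve-Unbr-ALG} costs $\OO(n^d)$; there are constantly many calls; the final lookup against the hard-coded table is $\OO(1)$. So the total is $\OO(n^d)$, as required. (The huge multiplicative constant hidden here — exponential or worse in $\psi$ and $c$ — is exactly the ``gargantuan constant factor'' acknowledged in the introduction, and is harmless for the $\OO(n^d)$ bound.)

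\textbf{Main obstacle.} The genuinely delicate step is the correctness argument that testing against the finite transversal ${\cal R}_{\equiv_\psi}[{\cal A}^p_{\subseteq I}]$ determines $E_\alpha$ — i.e. that matching profiles on representatives forces $\equiv_\psi$-equivalence. This requires (a) verifying $\oplus$ is well-behaved with respect to $\equiv_\psi$ on the relevant arity-$p$, label-bounded classes, and in particular that $\equiv_c$ is correctly captured by the compatible-representative bookkeeping (this is where the condition $\alpha \equiv_c \beta$ in the definition of canonical equivalence, together with $\Lambda$-matching, must be threaded through), and (b) the bookkeeping subtlety that a general compatible $\gamma$ may have arity or label set not literally appearing among the $\alpha_{E_q}$, yet still reduces to one of them. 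A second, more mundane but easy-to-get-wrong point is the unbreakability inequality ensuring $\alpha \oplus \rho$ is $(s,c)$-unbreakable, which pins down part of the lower bound we need on $s$ (namely the ``$+r$'' slack in \boundS).
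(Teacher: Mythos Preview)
Your proposal is correct and follows essentially the same test-set approach as the paper's proof: compute $\sigma_\psi(\alpha \oplus \rho)$ for each compatible representative $\rho$ via \alg{Solve-Unbr-ALG} (using that gluing on $\leq r$ vertices upgrades $(s-r,c)$-unbreakability to $(s,c)$-unbreakability), then match this profile against those of the candidate representatives $\beta$ with $\beta\equiv_c\alpha$, arguing via the transversal property that agreement on all representatives forces $\equiv_\psi$-equivalence. The only cosmetic difference is that the paper computes the representative-vs-representative table ${\sf ans}(\gamma,\beta)$ at runtime by calling \alg{Solve-Unbr-ALG} on $\gamma\oplus\beta$ (which has at most $2r<s$ vertices and is hence trivially $(s,c)$-unbreakable), whereas you hard-code it; both are valid in the non-uniform setting.
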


\begin{proof}
We design the algorithm \alg{Understand-Unbr-ALG} as follows. Let $\alpha$ be an input, which is an $(s-r,c)$-unbreakable boundaried structure. Moreover, let ${\cal C}=\{\gamma\in{\cal R}_{\equiv_{\psi}}[{\cal A}^p_{\subseteq I}]: \gamma\equiv_c\alpha\}$, and let ${\cal T}$ denote the set of boundaried structures in ${\cal R}_{\equiv_{\psi}}[{\cal A}^p_{\subseteq I}]$ that are compatible with $\alpha$. In the first phase, the algorithm performs the following computation. Notice that for every $\beta\in {\cal T}$, since $|V(G_\beta)|\leq r$, it holds that $\alpha\oplus\beta$ is an $(s,c)$-unbreakable structure. Thus, for every $\beta\in {\cal T}$, \alg{Understand-Unbr-ALG} can call \alg{Solve-Unbr-ALG} with $\alpha\oplus\beta$ as input, and it lets {\sf ans}$(\alpha,\beta)$ denote the result.

In the second phase, the algorithm performs the following computation. Notice that for every $\gamma\in {\cal C}$ and $\beta\in{\cal T}$, since $|V(G_\beta)|,|V(G_\gamma)|\leq r$, it holds that $\gamma\oplus\beta$ is a $(2r,c)$-unbreakable structure. Thus, since \boundS, for all $\beta\in{\cal C}$ and $\gamma\in {\cal T}$, \alg{Understand-Unbr-ALG} can call \alg{Solve-Unbr-ALG} with $\gamma\oplus\beta$ as input, and it lets {\sf ans}$(\gamma,\beta)$ denote the result.

Finally, in the third phase, for every $\beta\in {\cal C}$, the algorithm performs the following computation. It checks whether for every $\gamma\in {\cal T}$ it holds that {\sf ans}$(\alpha,\gamma)=\textsf{ans}(\beta,\gamma)$, and if the answer is positive, then it outputs $\beta$. Since $\alpha\in{\cal A}^p_{\subseteq I}$, there exists $\beta'\in {\cal C}$ such that $E_\alpha=E_{\beta'}$, and therefore, at the latest, when $\beta=\beta'$, the algorithm terminates. Thus, the algorithm is well defined, and it is clear that it runs in time $\OO(n^{d})$.

To conclude that the algorithm is correct, it remains to show that for all $\beta\in {\cal C}\setminus\{\beta'\}$, there exists $\gamma\in {\cal T}$ such that {\sf ans}$(\alpha,\gamma)\neq \textsf{ans}(\beta,\gamma)$, as this would imply that the algorithm necessarily outputs $\beta'$. For this purpose, suppose by way of contradiction that there exists $\beta\in {\cal C}\setminus\{\beta'\}$ such that for all $\gamma\in {\cal T}$ it holds that {\sf ans}$(\alpha,\gamma)=\textsf{ans}(\beta,\gamma)$. We now argue that $E_\beta=E_{\beta'}$ which leads to a contradiction since  each boundaried structure in ${\cal R}_{\equiv_{\psi}}[{\cal A}^p_{\subseteq I}]$ belongs to a different equivalence class.

For all $\gamma\in {\cal T}$, since  it holds that {\sf ans}$(\alpha,\gamma)=\textsf{ans}(\beta,\gamma)$,  it also holds that {\sf ans}$(\beta',\gamma)=\textsf{ans}(\beta,\gamma)$.  This implies that $\sigma_\psi(\beta'\oplus\gamma)=\sigma_\psi(\beta\oplus\gamma)$. Consider some boundaried structure $\gamma$ (not necessarily in $\cal T$) that is compatible with $\beta'$ (and thus also with $\beta$). We claim that $\sigma_\psi(\beta'\oplus\gamma)=\sigma_\psi(\beta\oplus\gamma)$. Indeed, let $\gamma'$ be the (unique) boundaried structure in ${\cal R}_{\equiv_{\psi}}[{\cal A}^p_{\subseteq I}]$ such that $E_{\gamma'}=E_\gamma$. Then, $\sigma_\psi(\beta'\oplus\gamma')=\sigma_\psi(\beta'\oplus\gamma)$ and $\sigma_\psi(\beta\oplus\gamma')=\sigma_\psi(\beta\oplus\gamma)$. Note that since $\gamma'$ is compatible with $\beta'$, it is also compatible with $\alpha$, and hence $\gamma'\in{\cal T}$. Therefore, $\sigma_\psi(\beta'\oplus\gamma')=\sigma_\psi(\beta\oplus\gamma')$. Overall, we obtain that indeed $\sigma_\psi(\beta'\oplus\gamma)=\sigma_\psi(\beta\oplus\gamma)$.

Note that $\beta\equiv_c\beta'$, and thus, since we have shown that for every boundaried structure $\gamma$ compatible with $\beta'$ it holds that $\sigma_\psi(\beta'\oplus\gamma)=\sigma_\psi(\beta\oplus\gamma)$, we derive that $E_\beta=E_{\beta'}$. However, each boundaried structure in ${\cal R}_{\equiv_{\psi}}[{\cal A}^p_{\subseteq I}]$ belongs to a different equivalence class, and thus we have reached the desired  contradiction.
% and this completes the proof of the lemma.
\end{proof}

\subsection{{\sc Understand}$[\psi]$ on General Structures}\label{sec:breakable}

\noindent{\bf The Algorithm \alg{Understand-ALG}.} We start by describing an algorithm called \alg{Understand-ALG}, which is based on recursion. Given an input to {\sc Understand}$[\psi]$ on general boundaried structures, which is a boundaried structure $\alpha$, the algorithm works as follows. First, it calls \alg{Break-ALG} (given by Lemma \ref{lem:break}) with $G_\alpha$ as input to either obtain an $\displaystyle{(\frac{s-r}{2^c},c)}$-witnessing separation $(X,Y)$ or correctly conclude that $G_\alpha$ is $(s-r,c)$-unbreakable. In the second case or if $n<2(s-r)$, it calls \alg{Understand-Unbr-ALG} (given by Lemma \ref{lem:understandUnbreakable}), and returns its output. Next, suppose that \alg{Understand-ALG} obtained an $\displaystyle{(\frac{s-r}{2^c},c)}$-witnessing separation $(X,Y)$ and that $n\geq 2(s-r)$. Without loss of generality, assume that $|X\cap\delta(G_\alpha)|\leq|Y\cap\delta(G_\alpha)|$. Denote $\Delta=\{v\in X\cap Y: v\notin \delta(G_\alpha)\}$.

Now, we define a boundaried structure, $\beta\in{\cal A}^p_{\subseteq I}$, which can serve as an instance of {\sc Understand}$[\psi]$. First, we let the graph $G_\beta$ be $G_\alpha[X]$, and we define $\delta(G_\beta)=(X\cap\delta(G_\alpha))\cup\Delta$. Now, for all $v\in X\cap\delta(G_\alpha)$, we define $\lambda_{G_\beta}(v)=\lambda_{G_\alpha}(v)$. Since $|X\cap\delta(G_\alpha)|\leq|Y\cap\delta(G_\alpha)|$, $\alpha\in{\cal A}^p_{\subseteq I}$ and $|X\cap Y|\leq c$, we have that $|(X\cap\delta(G_\alpha))\cup\Delta|\leq 2c$. Thus, to each $v\in \Delta$, we can let $\lambda_{G_\beta}(v)$ assign some unique integer from $I\setminus\lambda_{G_\alpha}(X\cap\delta(G_\alpha))$. Hence, $G_\beta\in{\cal F}_{\subseteq I}$. Now, for all $i\in\{2,\ldots,p\}$, we set $\beta[i]$ as~follows.
\begin{itemize}
\item If ${\bf type}(\alpha)[i]\in\{${\sf vertex},{\sf edge}$\}$: If $\alpha[i]\in V(G_\beta)\cup E(G_\beta)$, then $\beta[i]=\alpha[i]$, and otherwise $\beta[i]=\star$.
\item Else: $\beta[i]=\alpha[i]\cap(V(G_\beta)\cup E(G_\beta))$.
\end{itemize}

\alg{Understand-ALG} proceeds by calling itself recursively with $\beta$ as input, and it lets $\beta'$ be the output of this call. 

Now, we define another boundaried structure, $\gamma\in{\cal A}^p_{\subseteq I}$, which can serve as an instance of {\sc Understand}$[\psi]$. First, we define the boundaried graph $G_\gamma$ as follows. Let $H$ be the disjoint union of $G_{\beta'}$ and $G[Y]$, where both $G_{\beta'}$ and $G[Y]$ are treated as not-boundaried graphs. For all $v\in X\cap Y$, identify (in $H$) the vertex $v$ of $G[Y]$ with the vertex $u$ of $G_{\beta'}$ that satisfies $\lambda_{G_{\beta'}}(u)=\lambda_{G_\beta}(v)$, and for the sake of simplicity, let $v$ and $u$ also denote the identity of the resulting (unified) vertex. The graph $G_\gamma$ is the result of this process. Moreover, let $\Delta'$ denote the set of vertices in $G_{\beta'}$ whose labels belong to $G_{\beta}(\Delta).$
Next, set $\delta(G_\gamma)=(Y\cap\delta(G_\alpha))\cup(\delta(G_{\beta'})\setminus\Delta')$.
Now, for all $v\in Y\cap\delta(G_\alpha)$, we define $\lambda_{G_\gamma}(v)=\lambda_{G_\alpha}(v)$, and for all $v\in\delta(G_{\beta'})\setminus\Delta'$, we define $\lambda_{G_\gamma}(v)=\lambda_{G_{\beta'}}(v)$ (note that if a vertex belongs to both $Y\cap\delta(G_\alpha)$ and $\delta(G_{\beta'})\setminus\Delta'$, we still assign it the same label). Hence, $G_\gamma\in{\cal F}_{\subseteq I}$.  For the sake of simplicity, if two vertices have the same label (one in $G_\alpha$ and the other in $G_\gamma$), we let the identity of one of them also refer to the other and vice versa. For all $i\in\{2,\ldots,p\}$, we set $\gamma[i]$ to have the same type as $\alpha[i]$, and define it as~follows.
\begin{itemize}
\item If ${\bf type}(\alpha)[i]\in\{${\sf vertex},{\sf edge}$\}$: If $\alpha[i]\in V(G_\gamma)\cup E(G_\gamma)$, then $\gamma[i]=\alpha[i]$, and otherwise $\gamma[i]=\star$.
\item Else: $\gamma[i]=\alpha[i]\cap(V(G_\gamma)\cup E(G_\gamma))$.
\end{itemize}

Finally, \alg{Understand-ALG} calls itself recursively with $\gamma$ as input, and it returns $\gamma'$, the output of this call.

\bigskip
\noindent{\bf Correctness.} Here, we prove the following result.
\begin{lemma}\label{lem:correctUnderstand-ALG}
If \alg{Understand-ALG} terminates, then it correctly solves {\sc Understand}$[\psi]$ on general boundaried structures.
\end{lemma}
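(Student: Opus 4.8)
The plan is to prove correctness by induction on $|V(G_\alpha)|$, the size of the input boundaried structure. The base case is when \alg{Understand-ALG} decides to invoke \alg{Understand-Unbr-ALG}: this happens either when \alg{Break-ALG} reports that $G_\alpha$ is $(s-r,c)$-unbreakable, or when $n < 2(s-r)$. In the first case the input is genuinely $(s-r,c)$-unbreakable and Lemma~\ref{lem:understandUnbreakable} applies directly. In the second case, $G_\alpha$ has fewer than $2(s-r)$ vertices, so it cannot admit an $(s-r,c)$-witnessing separation (each of the two large sides would need more than $s-r$ vertices), hence it is again $(s-r,c)$-unbreakable and Lemma~\ref{lem:understandUnbreakable} applies. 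Either way the returned representative is a correct output for {\sc Understand}$[\psi]$ on $\alpha$.

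For the inductive step, assume \alg{Break-ALG} returned a $(\frac{s-r}{2^c},c)$-witnessing separation $(X,Y)$ with $n \ge 2(s-r)$, and assume \alg{Understand-ALG} is correct on all boundaried structures with strictly fewer vertices. I first need to check that the two recursive calls are on strictly smaller instances: $G_\beta = G_\alpha[X]$ omits all of $Y \setminus X$, which is nonempty since $|Y \setminus X| > \frac{s-r}{2^c} \ge 1$; and $G_\gamma$ is the disjoint union of $G_{\beta'}$ (which has at most $r$ vertices, since $\beta'$ is a minimal representative) and $G[Y]$, glued along $X \cap Y$, so $|V(G_\gamma)| \le r + |Y|$, and I must verify this is smaller than $|V(G_\alpha)| = |X| + |Y| - |X\cap Y|$, i.e. that $|X| - |X\cap Y| = |X \setminus Y| > r$; this follows because $|X\setminus Y| > \frac{s-r}{2^c}$ and $s$ was chosen so that \boundS, which gives $\frac{s-r}{2^c} \ge 2r \ge r$. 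The core correctness argument is then: by the induction hypothesis, $E_{\beta'} = E_\beta$, so $\beta'$ and $\beta$ are $\equiv_\psi$-equivalent, hence $\equiv_c$-equivalent and, crucially, interchangeable under $\oplus$ with any compatible boundaried structure. The key claim is that $\alpha = \beta \oplus \gamma_\alpha$ for an appropriate "complement" boundaried structure $\gamma_\alpha$ built from $G_\alpha[Y]$ with boundary $X\cap Y$ together with the $Y$-side boundary of $\alpha$, and that replacing $\beta$ by $\beta'$ in this gluing produces exactly the structure $\gamma$ constructed in the algorithm (after matching labels on $\Delta$ versus $\Delta'$ and on $Y \cap \delta(G_\alpha)$). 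Therefore $\gamma \equiv_\psi \alpha$ — this requires checking that $\equiv_\psi$-equivalence of $\beta'$ and $\beta$ lifts to $\equiv_\psi$-equivalence of $\beta'\oplus\gamma_\alpha$ and $\beta\oplus\gamma_\alpha$, which is precisely the substitution property encoded in the definition of $\equiv_\sigma$ (one glues a further compatible structure to both sides). Finally, by the induction hypothesis applied to $\gamma$, the recursive call returns $\gamma'$ with $E_{\gamma'} = E_\gamma = E_\alpha$, so $\gamma'$ is a correct output.

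The main obstacle I expect is the bookkeeping in establishing $\alpha = \beta \oplus \gamma_\alpha$ and, after the substitution, that the algorithm's $\gamma$ equals $\beta' \oplus \gamma_\alpha$. This is where one must be scrupulous about: (i) the relabeling of the new boundary vertices $\Delta = (X\cap Y)\setminus\delta(G_\alpha)$ by fresh labels in $I \setminus \lambda_{G_\alpha}(X\cap\delta(G_\alpha))$, and the corresponding set $\Delta'$ of vertices of $G_{\beta'}$ carrying those labels; (ii) the fact that $\delta(G_\beta) = (X\cap\delta(G_\alpha))\cup\Delta$ has size at most $2c$, which uses $|X\cap\delta(G_\alpha)| \le |Y\cap\delta(G_\alpha)|$ and $\alpha \in {\cal A}^p_{\subseteq I}$ with $I = [2c]$; (iii) that the structure components $\beta[i]$ and $\gamma[i]$ (vertices/edges become $\star$ when they fall outside the respective side, sets get intersected) are exactly the pieces of $\alpha[i]$ whose union/identification under $\oplus$ reconstitutes $\alpha[i]$; and (iv) that compatibility ($\sim_c$) holds at every gluing, so that $\oplus$ and the relations $\equiv_c$, $\equiv_\psi$ are all applicable. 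Once these identifications are nailed down, the inductive argument via the substitution property of $\equiv_\psi$ is routine.
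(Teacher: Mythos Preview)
Your proposal is correct and follows essentially the same line as the paper's proof: induction (you on $|V(G_\alpha)|$, the paper on the number of recursive calls), base case via Lemma~\ref{lem:understandUnbreakable}, and the inductive step establishing $E_\alpha = E_\gamma$ by exploiting $E_\beta = E_{\beta'}$ from the first recursive call. One minor point to watch: with the paper's definition, $\oplus$ returns a \emph{non}-boundaried structure, so the equation ``$\alpha = \beta \oplus \gamma_\alpha$'' is not literally well-typed; the paper handles this exactly as your parenthetical ``one glues a further compatible structure to both sides'' suggests, namely by fixing an arbitrary compatible test structure $\eta$ and exhibiting a single $\mu$ with $\alpha \oplus \eta = \beta \oplus \mu$ and $\gamma \oplus \eta = \beta' \oplus \mu$, after which $E_\beta = E_{\beta'}$ gives $\sigma_\psi(\alpha\oplus\eta)=\sigma_\psi(\gamma\oplus\eta)$.
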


\begin{proof}
The proof is by induction on the number of recursive calls that the algorithm performs. Here, we suppose that at a given call which terminates, the recursive calls (which must then also terminate) return correct answers. At the basis, we are at a call where the algorithm performs no recursive calls. Thus, the basis corresponds to calls where either $n < 2(s-r)$ or \alg{Break-ALG} concludes that $G_\alpha$ is $(s-r,c)$-unbreakable; then, correctness follows from Lemma \ref{lem:understandUnbreakable}. Next, consider a call that terminates and where the algorithm calls itself recursively.

We need to show that $E_{\alpha}=E_{\gamma'}$. Since we assume that the recursive calls return correct answers, it is sufficient to show that $E_\alpha=E_\gamma$. Moreover, due to this assumption, it also holds that $E_{\beta'}=E_\beta$.

First, we show that $\alpha\equiv_c\gamma$. By the definition of $G_\gamma$, every vertex in $Y$ that has a label in $G_\alpha$, is present in $G_\gamma$ and has the same label in $G_\gamma$. Moreover, every vertex $v$ in $X\setminus Y$ that has a label $\ell$ in $G_\alpha$, also has the same label in $G_\beta$ (by the definition of $G_\beta$). Therefore, this label is also present in $G_{\beta}'$ (since $E_{\beta'}=E_\beta$) and not given to a vertex in $\Delta'$ (since it is not given to a vertex in $\Delta$). Thus, this label is also present in $G_\gamma$ (by the definition of $G_\gamma$)---in this context, recall that we refer to the vertex in $G_\gamma$ that has the label $\ell$ by $v$ as well. Thus, we have that $\Delta(G_\alpha)=\Delta(G_\gamma)$. Moreover, since $\alpha$ defines an input instance, its arity is $p$, which is also the arity of $\gamma$ (by the definition of $\gamma$). The definition of $\gamma$ also immediately implies that for all $i\in[p]$, ${\bf type}(\alpha)[i]={\bf type}(\gamma)[i]$, and by the above arguments, it also implies that if both $\alpha[i]$ and $\gamma[i]$ are vertices/edges, then $\alpha[i]=\gamma[i]$ and this vertex has the same label in both $G_\alpha$ and $G_\gamma$. This concludes the proof that $\alpha\equiv_c\gamma$.

Now, to derive that $E_\alpha=E_\gamma$, we also need to show that given any boundaried structure $\eta$ compatible with $\alpha$ (and thus also with $\gamma$), it holds that $\sigma_\psi(\alpha \oplus \eta)={\sf true}$ if and only if $\sigma_\psi(\beta \oplus \eta)={\sf true}$. For this purpose, consider some boundaried structure $\eta$ compatible with $\alpha$. First, we define a boundaried structure $\mu$ in ${\cal A}^p_{\subseteq I}$ such that $\beta\oplus\mu = \alpha\oplus\eta$ as follows. We let the graph $G_\mu$ be $G_{\alpha\oplus \eta}\setminus(V(G_\beta)\setminus \delta(G_\beta))$, and we define $\delta(G_\mu)=\delta(G_\beta)$. Now, for all $v\in \delta(G_\mu)$, we define $\lambda_{G_\mu}(v)=\lambda_{G_\beta}(v)$. For all $i\in\{2,\ldots,p\}$, we set $\mu[i]$ as~follows.
\begin{itemize}
\item If ${\bf type}(\beta)[i]\in\{${\sf vertex},{\sf edge}$\}$: If $\beta[i]\in V(G_\mu)\cup E(G_\mu)$, then $\mu[i]=\beta[i]$, and otherwise $\mu[i]=\star$.
\item Else: $\mu[i]=\beta[i]\cap(V(G_\mu)\cup E(G_\mu))$.
\end{itemize}

Second, we define a boundaried structure $\rho$ in ${\cal A}^p_{\subseteq I}$ such that $\beta'\oplus\rho = \gamma\oplus\eta$ as follows. We let the graph $G_\rho$ be $G_{\gamma\oplus \eta}\setminus (V(G_{\beta'})\setminus \delta(G_{\beta'}))$, and we define $\delta(G_\rho)=\delta(G_{\beta'})$. Now, for all $v\in \delta(G_\rho)$, we define $\lambda_{G_\rho}(v)=\lambda_{G_{\beta'}}(v)$. For all $i\in\{2,\ldots,p\}$, we set $\rho[i]$ as~follows.
\begin{itemize}
\item If ${\bf type}({\beta'})[i]\in\{${\sf vertex},{\sf edge}$\}$: If ${\beta'}[i]\in V(G_\rho)\cup E(G_\rho)$, then $\rho[i]={\beta'}[i]$, and otherwise $\rho[i]=\star$.
\item Else: $\rho[i]={\beta'}[i]\cap(V(G_\rho)\cup E(G_\rho))$.
\end{itemize}

However, by our definition of $\gamma$, we have that $\mu=\rho$. Indeed, since $E_\beta=E_{\beta'}$ and as we reuse vertex identities (namely, we treat equally labeled vertices in $G_\beta$ and $G_{\beta'}$ as the same vertex), we have that for all $i\in\{2\ldots,p\}$, it holds that $\beta[i]=\beta'[i]$. Thus, to derive that $\mu=\rho$, it is sufficient to show that $G_\mu=G_\rho$, that is, $G_{\alpha\oplus \eta}\setminus(V(G_\beta)\setminus \delta(G_\beta))=G_{\gamma\oplus \eta}\setminus (V(G_{\beta'})\setminus \delta(G_{\beta'}))$. The correctness of this claim follows by noting that $\delta(G_\beta)=\delta(G_{\beta'})$, and thus $G_{\alpha}\setminus(V(G_\beta)\setminus \delta(G_\beta))=G_\alpha[Y\cup\delta(G_\beta)]=G_\gamma[Y\cup\delta(G_{\beta'})]=G_{\gamma}\setminus (V(G_{\beta'})\setminus \delta(G_{\beta'}))$ (by the definition of $G_\gamma$). 

Finally, since $E_\beta=E_{\beta'}$, we have that $\sigma_\psi(\beta \oplus \mu)={\sf true}$ if and only if $\sigma_\psi(\beta' \oplus \mu)={\sf true}$. As $\beta \oplus \mu=\alpha\oplus \eta$ and $\beta'\oplus \rho = \beta'\oplus \mu = \gamma\oplus \eta$, we conclude that $\sigma_\psi(\alpha \oplus \eta)={\sf true}$ if and only if $\sigma_\psi(\beta \oplus \eta)={\sf true}$.
\end{proof}

\bigskip
\noindent{\bf Time Complexity.} Finally, we prove the following result, which together with Lemma \ref{lem:correctUnderstand-ALG}, implies that Lemma \ref{lem:understandSuffices} is correct.

\begin{lemma}\label{lem:timeUnderstand-ALG}
\alg{Understand-ALG} runs in time $\OO(n^{d})$.
\end{lemma}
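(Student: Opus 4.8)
The goal is to set up a recurrence for the running time $T(n)$ of \alg{Understand-ALG} and show it solves to $\OO(n^d)$ using $d > 4$. I would begin by accounting for the non-recursive work performed at a single call on input $\alpha$ with $|V(G_\alpha)| = n$. This consists of: (i) one invocation of \alg{Break-ALG}, which by Lemma~\ref{lem:break} costs $2^{\OO(c\log(s+c))} \cdot n^3 \log n$; since $c$, $s$ are fixed constants depending only on $\psi$, this is $\OO(n^3 \log n)$, which is $\OO(n^{d})$ for $d>4$ with room to spare (indeed $o(n^4)$); (ii) in the base case, one call to \alg{Understand-Unbr-ALG}, costing $\OO(n^d)$ by Lemma~\ref{lem:understandUnbreakable}; (iii) in the recursive case, the construction of the boundaried structures $\beta$ and $\gamma$, which is polynomial --- building $G_\beta = G_\alpha[X]$, assigning the $\le 2c$ labels, and constructing $G_\gamma$ from $G_{\beta'}$ and $G[Y]$ (where $|V(G_{\beta'})| \le r$, a constant); all of this is $\OO(n^2)$ or better.

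\textbf{The recurrence.} The key structural facts are that the two recursive calls are on $\beta$ (with $|V(G_\beta)| = |X|$) and $\gamma$ (with $|V(G_\gamma)| \le |Y| + r$). Because $(X,Y)$ is a $(\frac{s-r}{2^c},c)$-witnessing separation, both $|X\setminus Y|$ and $|Y\setminus X|$ exceed $\frac{s-r}{2^c}$, a positive constant; also $n \ge 2(s-r)$ in the recursive case. I would set $n_1 = |V(G_\beta)| = |X|$ and $n_2 = |V(G_\gamma)| \le |Y| + r$. Since $|X| + |Y| = n + |X\cap Y| \le n + c$ and each of $|X\setminus Y|$, $|Y\setminus X|$ is at least some constant $\ell \ge 1$, we get $n_1 \le n - \ell + c$ and $n_2 \le n - \ell + c + r$; crucially each $n_i \le n - 1$ once $s$ (hence $\ell$) is chosen large enough relative to $c$ and $r$ --- this is exactly what the condition \boundS\ buys us, ensuring $\ell > c + r$ so that both recursive instances are \emph{strictly} smaller, which also establishes termination. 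This gives a recurrence of the form
\[
T(n) \le T(n_1) + T(n_2) + \OO(n^3 \log n), \qquad n_1 + n_2 \le n + c + r + \le n + O(1),
\]
with $n_1, n_2 \le n - 1$ and $n_1, n_2 \ge s - r$ (a constant). Wait --- I must be careful: $n_1 + n_2$ can be nearly $n + $ constant, not bounded away from $n$, so the standard ``$T(n_1)+T(n_2)$ with $n_1+n_2 \le n$'' trick does not directly apply. Instead I would observe that $n_1 \le n - \ell + c$ where $\ell - c$ is a \emph{positive constant}, say $\ell - c \ge 1$, and similarly $n_2 \le n - \ell + c + r$. The better bound to exploit is that $\min(n_1,n_2)$ is bounded: actually both $n_1$ and $n_2$ are at most $n$ minus a constant fraction is false in general, but $n_1 \le n - |Y\setminus X| + c$ and $|Y \setminus X| \ge \ell$, and symmetrically. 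So I would instead use: the depth of the recursion is $O(n)$ (each level drops the parameter by at least $1$), but that alone gives $T(n) \le 2^{O(n)}$, which is too weak.

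\textbf{The real argument.} The correct approach, I believe, is to bound the \emph{total} work across the whole recursion tree by $\OO(n^d)$ directly. Since each recursive call strictly decreases the vertex count, consider the recursion tree $\mathcal{T}$. At a node handling an instance on $m$ vertices, the local (non-recursive) cost is $\OO(m^3 \log m)$ in the recursive case and $\OO(m^d)$ at leaves. The key point is that the \emph{sum of vertex counts over all nodes at a fixed depth} is controlled: when we split $\alpha$ (size $n$) into $\beta$ (size $n_1 = |X|$) and $\gamma$ (size $n_2 \le |Y| + r$), we have $n_1 + n_2 \le |X| + |Y| + r = n + |X\cap Y| + r \le n + c + r$. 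So each split increases the total-size-at-this-level by at most the additive constant $c+r$, while \emph{also} each part is at least the constant $s-r$ in size, meaning the number of nodes at any depth is at most $O(n)$, the depth is $O(n)$, hence there are $O(n^2)$ nodes total, each with a graph of at most $n$ vertices. Thus the total work is at most $O(n^2)$ leaves each costing $\OO(n^d)$ --- that gives $\OO(n^{d+2})$, still too weak!

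\textbf{The main obstacle, resolved.} So the delicate point --- and the one I expect the authors' proof to handle with care --- is getting the exponent exactly $d$, not $d + O(1)$. The resolution must use that the base-case algorithm's cost $\OO(m^d)$ is \emph{superlinear with $d > 4$}, combined with a convexity/amortization argument. Write $S(\text{node } v) = $ size of instance at $v$. The leaves partition (up to additive $O(1)$ slack per internal node, and there are $O(n^2)$ internal nodes, contributing $O(n^2)$ slack which is $o(n^d)$ since $d>4$) the ``mass'' $n$: $\sum_{\text{leaves } v} S(v) \le n + O(n^2) = O(n^2)$. Hmm, that is still not $\le n$. The clean way: I would prove by a direct induction that $T(n) \le A \cdot n^d - B \cdot n^{d-1}$ for suitable constants $A, B > 0$ (depending on $\psi$, $c$), where the negative lower-order term is the slack needed to absorb the additive $c + r$ in $n_1 + n_2 \le n + (c+r)$ and the $\OO(n^3\log n)$ term, both of which are $o(n^{d-1})$ since $d - 1 > 3$. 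The induction step: $T(n_1) + T(n_2) + \OO(n^3\log n) \le A(n_1^d + n_2^d) - B(n_1^{d-1} + n_2^{d-1}) + \OO(n^3 \log n)$; using $n_1, n_2 \le n - 1$, $n_1 + n_2 \le n + c + r$, $n_1, n_2 \ge s-r$, and convexity of $x \mapsto x^d$ (so $n_1^d + n_2^d$ is maximized at the extreme, giving roughly $(n-1)^d + (c+r+1)^d \le n^d - dn^{d-1} + O(n^{d-2})$), the $-dn^{d-1}$ term dominates and absorbs everything, closing the induction provided $B$ is chosen appropriately and $A$ large enough for the base case (where $T \le c_0 m^d$ from Lemma~\ref{lem:understandUnbreakable}). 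This is the step I'd flag as the technical heart: verifying that convexity makes the split ``as unbalanced as possible'', and that the gained $\Theta(n^{d-1})$ from the ``$-1$'' in $n_1 \le n-1$ swamps the lost $O(n^{d-1})$ from the additive constants and the lower-order local cost.
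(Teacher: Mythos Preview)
Your final paragraph lands on the right idea: a direct induction on $n$ combined with the convexity of $x\mapsto x^d$ to push to the most unbalanced split. This is exactly what the paper does, and your earlier attempts (counting nodes in the recursion tree, summing leaf sizes) are indeed too weak, as you recognized.

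Where your execution diverges from the paper is in the choice of inductive hypothesis and which structural fact carries the weight. You propose to prove $T(n)\le A n^d - B n^{d-1}$, using only that $n_1,n_2\le n-1$ so that the extreme split is roughly $(n-1)^d + O(1)$ and the expansion $(n-1)^d \le n^d - d n^{d-1}+O(n^{d-2})$ supplies the slack. This can be made to work, but it is more delicate than necessary. The paper instead proves the simpler invariant $T(n)\le x\, n^d$ outright, with no subtracted term. The lever that makes this possible is the \emph{lower} bound on each piece: writing $t=(s-r)/2^c$, one has $n',\hat n\ge t$ and $n'+\hat n\le n+r$, so by convexity $n'^d+\hat n^d\le t^d+(n-t+r)^d$. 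Then the elementary inequality $(n-a)^d\le n^d-an^{d-1}$ (valid for $d\ge 1$ and $0\le a\le n$, since $(1-a/n)^d\le 1-a/n$) with $a=t-r$ gives $(n-t+r)^d\le n^d-(t-r)n^{d-1}$. Because $s\ge 2r\,2^c+r$ forces $t>2r$, the gain $(t-r)n^{d-1}$ is at least $(t/2)n^{d-1}$, which comfortably absorbs both the constant $t^d$ and the $O(n^3\log n)$ local work (here is where $d>4$ is used). No negative lower-order term in the hypothesis is needed.

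Two small inaccuracies in your write-up, neither fatal: the lower bound on the pieces is $n_1,n_2\ge (s-r)/2^c$, not $s-r$ (the separation returned by \alg{Break-ALG} is only an $(\tfrac{s-r}{2^c},c)$-witnessing one); and the tighter bound on the second call is $n_2=|V(G_\gamma)|=|Y\setminus X|+|V(G_{\beta'})|\le |Y\setminus X|+r$, so $n_1+n_2\le n+r$ rather than $n+c+r$.
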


\begin{proof}
We prove that \alg{Understand-ALG} runs in time bounded by $x\cdot n^{d}$ for some fixed $x$ (to be determined). The proof is by induction on the number of recursive calls that the algorithm performs. At the basis, we are at a call where the algorithm performs no recursive calls. Thus, the basis corresponds to calls where either $n<2(s-r)$ or \alg{Break-ALG} concludes that $G_\alpha$ is $(s-r,c)$-unbreakable; then, by choosing $x$ that is large enough (but independent of the input instance), correctness follows from Lemmata \ref{lem:break} and \ref{lem:understandUnbreakable}. Next, consider a call where the algorithm calls itself recursively.

Denote $n'=|V(G_\beta)|=|X\setminus Y| + |X\cap Y|$ and $\widehat{n}=|V(G_\gamma)|=|Y\setminus X| + |V(G_{\beta'})|$. By Lemma \ref{lem:break} and the inductive hypothesis, there exists a fixed $y$ (independent of $x$ and the input instance) such that \alg{Understand-ALG} runs in time bounded by
\[\begin{array}{l}
y\cdot n^3\log n + x\cdot(n'^{d} + \widehat{n}^{d}).
\end{array}\]

Recall that $c\leq r$. Denote $a=|X\setminus Y|$, $b=|Y\setminus X|$ and $\widehat{s}=s-r$. Then, $n=a+b+|X\cap Y|\leq a+b+c$ and $\widehat{s}/2^c\leq a,b$. Thus, $\widehat{s}/2^c\leq n'=n-\widetilde{n}+c\leq n-\widetilde{n}+r$ and $\widehat{s}/2^c\leq \widehat{n}\leq n-n'+r$. Hence, the running time can further be bounded by
\[\begin{array}{l}
\medskip
y\cdot n^3\log n + x\cdot((\widehat{s}/2^c)^{d} + (n-\widehat{s}/2^c+r)^{d})\leq \\

\smallskip
x\cdot n^{d} + (y\cdot n^3\log n+x\cdot(\widehat{s}/2^c)^{d}+x\cdot rn^{d-1})\\

\hspace{3.6em}- x\cdot (\widehat{s}/2^c)n^{d-1}.
\end{array}\]

Denote $t=\widehat{s}/2^c$. Thus, it remains to show that
\[xtn^{d-1} \geq yn^3\log n+xt^{d}+xrn^{d-1}.\]

Now, recall that \boundS, and therefore $t>2r$. Thus, it is sufficient to show that
\[xtn^{d-1}/2 \geq yn^3\log n+xt^{d}.\]

Since $d>4$, by ensuring that $x/4\geq y$, we further have that it is sufficient to show that 
\[xtn^{d-1}/4 \geq xt^{d}.\]

Finally, recall that $n\geq 2(s-r)$, and therefore $n^{d-1}\geq 4t^{d-1}$. Thus, the inequality above holds.
\end{proof}

\section{Applications}\label{sec:applications}

In this section, we first show how Theorem~\ref{cor:main} can be easily deployed to show the fixed parameter tractability of a wide range of problems of the following kind. The input is a graph $G$ and the task is to find a connected induced subgraph of $G$ of bounded treewidth such that ``few'' vertices outside this subgraph have neighbors inside the subgraph, and additionally the subgraph has a CMSO-definable property. Then, we show that technical problem-specific ingredients of a powerful method for designing parameterized algorithms  called recursive understanding, can be replaced by a black-box invocation of Theorem~\ref{cor:main}. Here, we consider the {\sc Vertex Multiway Cut-Uncut (V-MWCU)} problem as an illustrative example.

\subsection{``Pendant'' Subgraphs with CMSO-Definable Properties}

Formally, given a CMSO sentence $\psi$ and a non-negative integer $t$, the $t$-{\sc Pendant}$[\psi]$ problem is defined as follows. The input of $t$-{\sc Pendant}$[\psi]$ is a graph $G$ and a parameter $k$, and the objective is to determine whether there exists $U\subseteq V(G)$ such that $G[U]$ is a connected graph of treewidth at most $t$, $|N(U)|\leq k$ and $\sigma_{\psi}(G[U])$ is {\sf true}.

We start by defining a CMSO formula $\varphi$ with free variable $S$ as follows.
\[\begin{array}{ll}
\varphi = & \exists_{U\subseteq V(G)}[(G[U]\models\psi)\wedge(G[U]\models {\bf tw}_t)\wedge{\bf conn}(U)\wedge\\
& (\forall_{v\in S}\neg (v\in U))\wedge(\forall_{v\in U}\forall_{u\in V(G)\setminus (U\cup S)}\neg{\bf adj}(v,u))
],
\end{array}\]
where {\bf conn}$(U)$ is the standard CMSO sentence that tests whether $G[U]$ is a connected graph (see, e.g., \cite{DBLP:books/sp/CyganFKLMPPS15}), and {\bf tw}$_t$ is the standard CMSO sentence that tests whether the treewidth of a graph is at most $t$ (see, e.g., \cite{DBLP:conf/focs/FominLMS12}). We remark that  {\bf tw}$_t$ can be constructed by observing that there exists a finite set of graphs, ${\cal M}$, such that a graph has treewidth at most $t$ if and only if it excludes every graph in ${\cal M}$ as a minor, and it is known how to construct a CMSO sentence that tests the exclusion of a fixed graph as a minor (see, e.g., \cite{DBLP:conf/focs/FominLMS12}).

Having defined $\psi$, it is immediate that the $t$-{\sc Pendant}$[\psi]$ problem is equivalent to {\sc min-CMSO}$[\varphi]$ as follows.
\begin{observation}\label{obs:pendantCMSO}
Let $G$ be a graph, and let $k$ be a parameter. Then, $(G,k)$ is a \Yes-instance of $t$-{\sc Pendant}$[\psi]$  if and only if $((G),k)$ is a \Yes-instance of {\sc min-CMSO}$[\varphi]$.
\end{observation}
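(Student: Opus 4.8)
The plan is to prove the equivalence in Observation~\ref{obs:pendantCMSO} by unwinding the definitions on both sides and checking that the CMSO formula $\varphi$ was designed to express exactly the property that defines a \Yes-instance of $t$-\textsc{Pendant}$[\psi]$. The structure on the \textsc{min-CMSO}$[\varphi]$ side is the arity-$1$ structure $(G)$, and a \Yes-instance means there is a set $S \subseteq V(G)$ with $|S| \le k$ such that $\sigma_\varphi((G) \diamond S) = \sigma_\varphi((G,S)) = {\sf true}$. So the task is to show: there exists $U \subseteq V(G)$ with $G[U]$ connected of treewidth at most $t$, $|N(U)| \le k$, and $\sigma_\psi(G[U]) = {\sf true}$, \emph{if and only if} there exists $S \subseteq V(G)$ with $|S| \le k$ such that $(G,S) \models \varphi$.

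First I would establish the forward direction. Given a witness $U$ for $t$-\textsc{Pendant}$[\psi]$, I would set $S = N(U)$. Then $|S| \le k$ by hypothesis. To check $(G,S) \models \varphi$, I instantiate the existentially quantified set variable in $\varphi$ by this same $U$: the conjunct $G[U] \models \psi$ holds because $\sigma_\psi(G[U]) = {\sf true}$; the conjunct $G[U] \models {\bf tw}_t$ holds because $G[U]$ has treewidth at most $t$ (using the standard fact, cited via \cite{DBLP:conf/focs/FominLMS12}, that ${\bf tw}_t$ correctly tests this via excluded minors); ${\bf conn}(U)$ holds since $G[U]$ is connected; $\forall_{v \in S} \neg(v \in U)$ holds because $N(U) \cap U = \emptyset$; and $\forall_{v \in U}\forall_{u \in V(G)\setminus(U \cup S)} \neg {\bf adj}(v,u)$ holds because every neighbor of a vertex of $U$ lies in $U \cup N(U) = U \cup S$.

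For the reverse direction, suppose $S$ with $|S| \le k$ satisfies $(G,S) \models \varphi$, and let $U$ be the set witnessing the existential quantifier inside $\varphi$. The conjuncts directly give that $G[U]$ is connected (from ${\bf conn}(U)$), has treewidth at most $t$ (from $G[U] \models {\bf tw}_t$), and $\sigma_\psi(G[U]) = {\sf true}$ (from $G[U] \models \psi$). It remains to bound $|N(U)|$: the last two conjuncts say $U \cap S = \emptyset$ and every vertex adjacent to $U$ lies in $U \cup S$, hence $N(U) \subseteq S \setminus U \subseteq S$, so $|N(U)| \le |S| \le k$. Thus $U$ is a valid witness for $t$-\textsc{Pendant}$[\psi]$, completing the argument. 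I would then note that this argument also verifies the technical side conditions needed for the instance to be well-formed, namely that for every $S \subseteq V(G)$ the structure $(G) \diamond S = (G,S)$ matches $\varphi$ (since $\varphi$ has a single free variable $S$ of type vertex set and the appended element is a vertex set), so that $t$-\textsc{Pendant}$[\psi]$ legitimately maps to \textsc{min-CMSO}$[\varphi]$.

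I do not anticipate a genuine obstacle here; the observation is essentially a definitional bookkeeping step whose only subtlety is making sure the semantics of the appended structure $(G) \diamond S$ lines up with the free variable convention ($r_S$ being the appropriate rank so that $\alpha[r_S] = S$) and that the auxiliary sentences ${\bf conn}$ and ${\bf tw}_t$ do what is claimed. The mild care needed is in the direction of the neighborhood bound --- one must observe that $N(U)$ could in principle be a proper subset of $S$, which is fine since we only need an upper bound --- and in confirming that ${\bf tw}_t$ applied to the substructure $G[U]$ (as opposed to $G$) behaves correctly, which follows since treewidth is a minor-monotone graph property and the excluded-minor characterization of ${\bf tw}_t$ is evaluated on whatever graph it is interpreted over, here $G[U]$.
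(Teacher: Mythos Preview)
Your argument is correct and is precisely the definitional unwinding the paper has in mind; the paper itself gives no proof at all, simply declaring the equivalence ``immediate'' from the construction of $\varphi$. Your two-direction check, including the observation that $N(U)\subseteq S$ in the reverse direction, is exactly what is needed to justify that claim.
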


Next, we solve $t$-{\sc Pendant}$[\psi]$ on unbreakable graphs with the appropriate parameters. Define $c:\mathbb{N}_0\rightarrow\mathbb{N}_0$ as follows. For all $k\in\mathbb{N}_0$, let $\widehat{c}(k)=k+t$. Let $s:\mathbb{N}_0\rightarrow\mathbb{N}_0$ be the function $\widehat{s}$ in Theorem \ref{cor:main} with $\widehat{\psi}=\varphi$ and $\widehat{c}=c$. We first prove the following lemma.

\begin{lemma}\label{lem:pendantSmall}
Let $(G,k)$ be a \Yes-instance of $t$-{\sc Pendant}$[\psi]$ parameterized by $k$ on $(s(k),k+t)$-unbreakable graphs. Then, there exists $U\subseteq V(G)$ such that $G[U]$ is a connected graph of treewidth at most $t$, $|N(U)|\leq k$, $\sigma_{\psi}(G[U])$ is {\sf true} and $|U|<3(s(k)+t)$.
\end{lemma}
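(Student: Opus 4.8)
The plan is to argue that if $(G,k)$ is a \Yes-instance on an $(s(k),k+t)$-unbreakable graph, then we may assume the witnessing set $U$ is \emph{small}. First I would take any $U\subseteq V(G)$ witnessing that $(G,k)$ is a \Yes-instance: so $G[U]$ is connected, has treewidth at most $t$, $|N(U)|\leq k$, and $\sigma_\psi(G[U])$ is \textsf{true}. Set $C=N(U)$, $X=U$, and $Y=V(G)\setminus N[U]$. Then $(X\cup C, Y\cup C)$ is a separation of $G$ of order $|C|\leq k\leq k+t=\widehat c(k)$, because every edge leaving $U$ goes into $N(U)=C$, so there are no edges between $X\setminus C$ and $Y\setminus C$. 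Since $G$ is $(s(k),k+t)$-unbreakable, this separation cannot be $(s(k),k+t)$-witnessing, so one of the two sides is small: either $|X\setminus C|\leq s(k)$ or $|Y\setminus C|\leq s(k)$.

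The first case is exactly what we want: if $|X\setminus C|=|U\setminus N(U)|\leq s(k)$, then $|U|\leq |U\setminus N(U)|+|N(U)|\leq s(k)+k\leq s(k)+t+t<3(s(k)+t)$ (using $k\leq \widehat c(k)=k+t$ only loosely; in fact $|U|\le s(k)+k$, which is comfortably below the claimed bound). So the heart of the argument is to rule out, or rather work around, the second case where $|Y\setminus C|\leq s(k)$, i.e. almost the whole graph is contained in $N[U]$. In that case $|V(G)|\leq |U|+|Y\setminus C|$, but $|U|$ could still be huge, so I cannot immediately conclude. The fix is to observe that in this regime the \emph{complement-type} bound gives $|V(G)\setminus U| = |N(U)| + |Y\setminus C| \le k + s(k)$, so $V(G)\setminus U$ is small; but that does not directly bound $U$.

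The right move, then, is not to keep the original $U$ but to pick a \emph{smallest} witnessing set. Here is the cleaner version of the plan: among all sets $U'$ witnessing the \Yes-instance, choose one of minimum cardinality, and suppose for contradiction $|U|\geq 3(s(k)+t)$. Apply the unbreakability argument to the separation $(U\cup N(U),\ (V(G)\setminus N[U])\cup N(U))$ as above. Because $|U|\ge 3(s(k)+t) > s(k)$ (and even $|U\setminus N(U)| \ge |U| - k \ge 3(s(k)+t)-k > s(k)$), the side $X\setminus C = U\setminus N(U)$ is large, so unbreakability forces $|Y\setminus C|\le s(k)$, i.e. $V(G)=N[U]\cup(\text{at most }s(k)\text{ extra vertices})$. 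This means $|V(G)\setminus U|\le k+s(k)$, so $G[U]$ already contains all but at most $k+s(k)$ vertices of $G$. Now I would use this near-spanning property together with a standard argument — since $G[U]$ has treewidth at most $t$ and is connected, while $|U|$ is very large, one can find inside $G[U]$ a small separator splitting $G[U]$ into two large parts (treewidth at most $t$ graphs are $(t+1)$-unbreakable only for trivial reasons, so instead I'd use that a large connected bounded-treewidth graph has a balanced separator of size $\le t+1$), contradicting the overall $(s(k),k+t)$-unbreakability of $G$ once we add $N(U)$ to that separator (total size $\le t+1+k = \widehat c(k)+1$ — I may need $\widehat c(k)=k+t+1$ or a slightly larger constant, adjustable in the definition of $\widehat c$).

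I expect the main obstacle to be exactly this last step: getting a genuine contradiction with unbreakability of $G$ rather than of $G[U]$. The clean statement to invoke is that a connected graph of treewidth $\le t$ on more than $3s$ vertices has a separation $(A,B)$ of order $\le t+1$ with $|A\setminus B|,|B\setminus B|>s$ — a folklore fact about balanced separators in bounded-treewidth graphs (take a tree decomposition, find an edge of the tree whose removal balances the weight, use the bag as separator). Combining that separator of $G[U]$ with $N(U)$ yields a separator of $G$ of size $\le (t+1)+k$ whose two sides each have more than $s(k) - (k+s(k)) $ vertices outside — here I must be careful that the "outside" vertices $V(G)\setminus U$ (at most $k+s(k)$ of them) do not shrink both sides below $s(k)$; this is why $|U|$ must be taken large enough, roughly $|U| > 2(k+s(k))+ s(k)$, which is why the bound $3(s(k)+t)$ (with $s(k)$ chosen large relative to $k$ via Theorem~\ref{cor:main}) suffices. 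Finally, such a witnessing separation contradicts $(s(k),k+t)$-unbreakability of $G$ (after possibly enlarging $\widehat c$ by a constant), completing the proof by contradiction and establishing $|U|<3(s(k)+t)$.
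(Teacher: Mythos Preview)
Your core idea in the second half --- take a balanced separator of $G[U]$ coming from the treewidth bound, enlarge it by $N(U)$, and use unbreakability of $G$ to bound $|U|$ --- is exactly the paper's argument. But the paper applies it \emph{directly}, with no preliminary case split. Concretely: take any witnessing $U$; since $G[U]$ has treewidth at most $t$, it admits a separation $(X,Y)$ of order at most $t$ with $|X|,|Y|\ge |U|/3$. Set $X'=X\cup N(U)$ and $Y'=(V(G)\setminus X)\cup(X\cap Y)$. Then $(X',Y')$ is a separation of $G$ of order at most $k+t$, and $X\setminus Y\subseteq X'\setminus Y'$, $Y\setminus X\subseteq Y'\setminus X'$, so it is a $(|U|/3-t,\,k+t)$-witnessing separation. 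Unbreakability yields $|U|/3-t<s(k)$, i.e.\ $|U|<3(s(k)+t)$.

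Your entire first stage --- the separation with separator $N(U)$, the case analysis on which side is small, the minimality assumption on $U$ --- is unnecessary, and it is what generates the worry you flag at the end. You fear the vertices of $V(G)\setminus N[U]$ might ``shrink both sides below $s(k)$''; but they cannot shrink anything. Just place all of $V(G)\setminus N[U]$ on one side (the paper puts it in $Y'$): these vertices have no neighbours in $U$, hence no edges cross into $X\setminus Y$, and the side only gets larger. You therefore need neither $|V(G)\setminus N[U]|\le s(k)$ nor a minimal $U$. Your off-by-one concern about $t{+}1$ versus $t$ is worth noting; the paper invokes a balanced separation of $G[U]$ of order at most $t$ (not $t{+}1$), which is what keeps the total within the budget $k+t=\widehat c(k)$.
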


\begin{proof}
Since $(G,k)$ is a \Yes-instance, there exists $U\subseteq V(G)$ such that $G[U]$ is a connected graph of treewidth at most $t$, $|N(U)|\leq k$ and $\sigma_{\psi}(G[U])$ is {\sf true}. Moreover, since the treewidth of $G[U]$ is at most $t$, it is easy to see that there exists a separation $(X,Y)$ of order at most $t$ of $G[U]$ such that $|X|,|Y|\geq |U|/3$ (see, e.g., \cite{DBLP:books/sp/CyganFKLMPPS15}). Then, set $X'=X\cup N(U)$ and $Y'=(V(G)\setminus X)\cup (X\cap Y)$. Note that $(X',Y')$ is a separation of order $|X\cap Y|+|N(U)|\leq k+t$. Moreover, $X\setminus Y\subseteq X'\setminus Y'$ and $Y\setminus X\subseteq Y'\setminus X'$. Thus, $(X',Y')$ is a $(|U|/3-t,k+t)$-witnessing separation. Since $G$ is $(s(k),k+t)$-unbreakable graph, we have that $|U|/3-t<s(k)$. Therefore, $|U|<3(s(k)+t)$, which concludes the correctness of the lemma.
\end{proof}

We also need the following result, proved by Fomin and Villanger \cite{DBLP:journals/combinatorica/FominV12}.

\begin{lemma}[\cite{DBLP:journals/combinatorica/FominV12}]\label{lem:enumConn}
Fix $p,q\in\mathbb{N}_0$. Given a graph $G$ and a vertex $v\in V(G)$, the number of subsets $U\subseteq V(G)$ such that $v\in U$, $G[U]$ is a connected graph, $|U|\leq p$ and $|N(U)|\leq q$ is upper bounded by ${p+q\choose p}$ and they can be enumerated in constant time (dependent only on $p$ and $q$).
\end{lemma}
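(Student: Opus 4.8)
The plan is to prove the lemma by a bounded search tree argument that simultaneously yields the enumeration algorithm and the bound $\binom{p+q}{p}$. The state of the search is a pair $(A,F)$ of disjoint vertex sets: $A$ is the set of vertices already \emph{committed into} the set being built, and $F$ is the set of vertices already \emph{committed outside} it. I would start from $A=\{v\}$ and $F=\emptyset$, and maintain the invariants that $G[A]$ is connected, $v\in A$, $A\cap F=\emptyset$ and $F\subseteq N(A)$. Call $N(A)\setminus F$ the \emph{active frontier}. If it is empty, the node is a leaf and we output $U:=A$. Otherwise pick the active-frontier vertex $w$ of smallest index and branch into the \emph{include} child $(A\cup\{w\},F)$ and the \emph{exclude} child $(A,F\cup\{w\})$, discarding any branch as soon as $|A|>p$ or $|F|>q$. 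In the include child $G[A\cup\{w\}]$ remains connected since $w\in N(A)$, and all invariants are clearly preserved.

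First I would show that the leaves of the pruned tree are in bijection with the sets $U$ to be enumerated. \textbf{Soundness:} at a leaf $N(A)\subseteq F$, while $F\subseteq N(A)$ always, so $N(A)=F$; hence $U=A$ satisfies $v\in U$, $G[U]$ connected, $|U|=|A|\le p$ and $|N(U)|=|N(A)|=|F|\le q$, i.e.\ $U$ is a valid output. \textbf{Completeness and uniqueness:} given a valid $U$, follow the path that at each branching includes $w$ exactly when $w\in U$. Along it $A\subseteq U$ (so $|A|\le p$) and $F\subseteq N(U)$ (every vertex put into $F$ lies in $N(A)\subseteq N(U)$ and not in $U$), so $|F|\le q$ and the branch is never discarded; it reaches a leaf with $A=U$, because at a leaf $N(A)\subseteq F$ and $F\cap U=\emptyset$, forcing $N(A)\cap U=\emptyset$, so if $U\setminus A$ were nonempty the connectivity of $G[U]$ would give an edge between $A$ and $U\setminus A$. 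This is moreover the only leaf yielding $U$, since any deviation (including some $w\notin U$, or excluding some $w\in U$) makes $A\ne U$ at the resulting leaf. Hence the number of valid sets $U$ equals the number of leaves.

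Next I would bound the number of leaves. Each root-to-leaf path has at most $p-1$ include-steps (as $|A|$ rises from $1$ to at most $p$) and at most $q$ exclude-steps (as $|F|$ rises from $0$ to at most $q$). Complete the include/exclude decision string of each leaf, by appending first all missing includes and then all missing excludes, to a string of length $p+q$ with exactly $p$ includes and $q$ excludes; this is well defined and, since leaf-paths are pairwise prefix-incomparable in the tree, it is injective. As there are exactly $\binom{p+q}{p}$ such completed strings, the tree has at most $\binom{p+q}{p}$ leaves, hence a total number of nodes bounded by a function of $p$ and $q$ (each path has length at most $p+q$). For the running time, at every node we only identify the smallest active-frontier vertex and perform a constant number of set/adjacency operations, and each leaf emits a set of size at most $p$; since the tree size and the total output size are bounded by functions of $p$ and $q$, the enumeration runs within the claimed time.

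I expect the one step needing genuine care to be the combinatorial bound: a direct count over leaf decision strings over-counts, because those strings have varying lengths, so the crude estimate it gives is of order $2^{p+q}$ rather than $\binom{p+q}{p}$. The fix is exactly the completion-to-full-path trick above, whose correctness hinges on the (easy but essential) observation that distinct leaves of a rooted tree have prefix-incomparable root-paths, so their canonical completions stay distinct. Everything else --- the search invariants, the leaf-to-valid-set bijection, and the running-time bookkeeping --- is routine.
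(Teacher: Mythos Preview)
The paper does not prove this lemma; it is quoted from Fomin and Villanger~\cite{DBLP:journals/combinatorica/FominV12} and used as a black box. So there is no in-paper proof to compare against. Your argument is in fact the standard one behind that result: branch on a frontier vertex, maintain $(A,F)$, and bound the number of leaves by encoding each root-to-leaf path as an include/exclude string. The bijection between leaves and valid sets and the $\binom{p+q}{p}$ bound via the prefix-incomparability/completion trick are correct. One cosmetic slip: in the completeness part you write ``$N(A)\subseteq N(U)$'', which is false in general when $A\subseteq U$; what you actually use (and what is true) is that the particular $w$ being excluded satisfies $w\in N(A)$ and $w\notin U$, hence $w\in N(U)$.

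The one place that is not airtight is the running-time claim. You assert that at every node you ``identify the smallest active-frontier vertex and perform a constant number of set/adjacency operations'', but the active frontier $N(A)\setminus F$ can have size depending on $n$ (e.g.\ if $v$ has high degree), so neither finding its minimum nor updating it after an include is $O(1)$ as written. The standard fix is an extra pruning rule: whenever $|N(A)\setminus F|>(p-|A|)+(q-|F|)$ the frontier can never be emptied within the budgets, so the subtree has no leaves and can be discarded. Equivalently, before including $w$ check $\deg(w)\le p+q-1$ (every vertex in a valid $U$ has all its neighbors in $U\cup N(U)$). With this rule the frontier stays of size $O(p+q)$ along any surviving branch and each node costs $O(p+q)$, yielding the stated time bound; without it your algorithm is correct but not obviously constant-time.
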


\begin{lemma}\label{lem:pendantUnbreak}
$t$-{\sc Pendant}$[\psi]$ parameterized by $k$ is {\FPT} on $(s(k),k+t)$-unbreakable graphs.
\end{lemma}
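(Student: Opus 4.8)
The plan is to design a direct {\FPT} algorithm for $t$-{\sc Pendant}$[\psi]$ on $(s(k),k+t)$-unbreakable graphs by combining the size bound from Lemma~\ref{lem:pendantSmall} with the enumeration result of Lemma~\ref{lem:enumConn}. First I would invoke Lemma~\ref{lem:pendantSmall}: if $(G,k)$ is a \Yes-instance on $(s(k),k+t)$-unbreakable graphs, then there is a witnessing set $U\subseteq V(G)$ with $G[U]$ connected, $\tw(G[U])\leq t$, $|N(U)|\leq k$, $\sigma_\psi(G[U])$ true, and crucially $|U|<3(s(k)+t)$. So it suffices to search for a small connected set with small neighborhood satisfying the CMSO property.

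Next I would enumerate candidates. Set $p=3(s(k)+t)$ and $q=k$. For every vertex $v\in V(G)$, apply Lemma~\ref{lem:enumConn} to enumerate all sets $U\subseteq V(G)$ with $v\in U$, $G[U]$ connected, $|U|\leq p$ and $|N(U)|\leq q$; there are at most $\binom{p+q}{p}$ such sets per vertex and they are listed in time depending only on $p$ and $q$ (hence only on $k$), so the total number of candidate sets over all $v$ is $n\cdot \binom{p+q}{p}=f(k)\cdot n$ for some function $f$. Any witnessing $U$ as above is caught by this enumeration when $v$ ranges over $U$.

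Then, for each enumerated candidate $U$, I would check in turn whether $G[U]$ is connected (guaranteed by construction), whether $\tw(G[U])\leq t$, whether $|N(U)|\leq k$, and whether $\sigma_\psi(G[U])$ is true. Since $|U|<p=f'(k)$ is bounded by a function of $k$, each of these checks is on a graph of bounded size: testing $\tw(G[U])\leq t$ and evaluating the fixed CMSO sentence $\psi$ on $G[U]$ both take time depending only on $k$ (in fact one may simply use brute force, or Courcelle's theorem together with the fact that $\tw(G[U])\leq t$). If some candidate passes all checks, report \Yes; otherwise report \No. The algorithm runs in time $g(k)\cdot n^{O(1)}$ and is correct by Lemma~\ref{lem:pendantSmall}.

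The only subtlety — and the one place where care is needed rather than it being purely routine — is the dependence of $p$ on $s(k)$: the function $s$ (namely $\widehat{s}$ from Theorem~\ref{cor:main}) is a fixed function of $k$, so $p=3(s(k)+t)$ is a legitimate (if enormous) function of $k$, and Lemma~\ref{lem:enumConn} applies with these parameters fixed once $k$ is fixed. This is exactly the non-uniformity already flagged for Theorem~\ref{thm:main_graphs}: the constant hidden in the running time depends on $\psi$, $c$, and hence $k$, but $d$ does not. There is no real obstacle beyond bookkeeping; the main point is simply to observe that unbreakability forces the witness to be small, after which the Fomin--Villanger enumeration finishes the job.
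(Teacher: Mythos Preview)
Your proposal is correct and follows essentially the same approach as the paper: invoke Lemma~\ref{lem:pendantSmall} to bound the size of a witnessing $U$ by $3(s(k)+t)$, then for each vertex use Lemma~\ref{lem:enumConn} with $p=3(s(k)+t)$ and $q=k$ to enumerate all candidate connected sets, and brute-force check the treewidth and CMSO conditions on each constant-size candidate. The paper's argument is slightly terser but structurally identical, and your remark about the non-uniform dependence on $s(k)$ is apt.
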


\begin{proof}
Fix some $k\in\mathbb{N}_0$. Given an $(s(k),k+t)$-unbreakable graph $G$, our algorithm \alg{A}$_k$ works as follows. By using the algorithm in Lemma \ref{lem:enumConn}, for every vertex $v\in V(G)$, it first computes (in constant time) the set ${\cal U}_v$ of subsets $U\subseteq V(G)$ such that $v\in U$, $G[U]$ is a connected graph, $|U|\leq 3(s(k)+t)$ and $|N(U)|\leq k$. Then, it sets ${\cal U}=\bigcup_{v\in V(G)}{\cal U}_v$. For each $U\in{\cal U}$, since $|U|\leq 3(s(k)+t)=\OO(1)$, the algorithm can test (in constant time) whether $G[U]\models\psi$ and the treewidth of $G[U]$ is at most $t$. 

By Lemma \ref{lem:enumConn}, it holds that $|{\cal U}|=\OO(n)$, and therefore \alg{A}$_k$ runs in time $\OO(n)$. The correctness of \alg{A}$_k$ directly follows from Lemmata \ref{lem:pendantSmall} and \ref{lem:enumConn}. This concludes the proof of the lemma.
\end{proof}

Finally, by Theorem \ref{cor:main}, Observation \ref{obs:pendantCMSO} and Lemma \ref{lem:pendantUnbreak}, we derive the following result.

\begin{theorem}\label{thm:pendant}
$t$-{\sc Pendant}$[\psi]$ parameterized by $k$ is {\FPT} on general graphs.
\end{theorem}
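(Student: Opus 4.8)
The plan is to obtain Theorem~\ref{thm:pendant} by chaining together the three facts already in hand: the CMSO reformulation of the problem (Observation~\ref{obs:pendantCMSO}), the algorithm on unbreakable graphs (Lemma~\ref{lem:pendantUnbreak}), and the unbreakability-reduction meta-theorem (Theorem~\ref{cor:main}). No new construction or computation is needed; the work is purely in making the quantifiers line up. Concretely, I would first recall that $(G,k)$ is a \Yes-instance of $t$-{\sc Pendant}$[\psi]$ if and only if $((G),k)$ is a \Yes-instance of {\sc min-CMSO}$[\varphi]$, by Observation~\ref{obs:pendantCMSO}. Hence it suffices to show that {\sc min-CMSO}$[\varphi]$ parameterized by $k$ is {\FPT} on general structures.

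Next I would apply Theorem~\ref{cor:main} with {\sc x}$=${\sc min} and $\widehat{\psi}=\varphi$. The theorem supplies, for the choice $\widehat{c}=c$ (the function $k\mapsto k+t$ fixed before Lemma~\ref{lem:pendantSmall}), a function $\widehat{s}$ such that {\FPT}-solvability of {\sc min-CMSO}$[\varphi]$ on $(\widehat{s}(k),\widehat{c}(k))$-unbreakable structures implies {\FPT}-solvability on general structures. This $\widehat{s}$ is exactly the function $s$ appearing in Lemmata~\ref{lem:pendantSmall}--\ref{lem:pendantUnbreak}, so the hypothesis of the reduction is precisely what Lemma~\ref{lem:pendantUnbreak} establishes: $t$-{\sc Pendant}$[\psi]$ --- equivalently, via Observation~\ref{obs:pendantCMSO}, {\sc min-CMSO}$[\varphi]$ --- parameterized by $k$ is {\FPT} on $(s(k),k+t)$-unbreakable graphs. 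Feeding this into Theorem~\ref{cor:main} gives {\FPT}-solvability of {\sc min-CMSO}$[\varphi]$ on general structures, and translating back through Observation~\ref{obs:pendantCMSO} yields the theorem.

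The one point requiring a little care --- and the only place I expect any friction --- is the bookkeeping between graphs and structures: Theorem~\ref{cor:main} and Lemma~\ref{lem:pendantUnbreak} are phrased over different (but compatible) objects, so I would note explicitly that the relevant structures here are the arity-$1$ structure $(G)$ together with the single appended free set variable $S$ of $\varphi$, that $(s,c)$-unbreakability of such a structure is by definition $(s,c)$-unbreakability of its underlying graph $G_\alpha=G$, and that therefore Lemma~\ref{lem:pendantUnbreak}'s guarantee on $(s(k),k+t)$-unbreakable graphs is literally the guarantee on $(s(k),k+t)$-unbreakable structures demanded by Theorem~\ref{cor:main}. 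Once this identification is made, the proof is a two-line assembly of the cited results.
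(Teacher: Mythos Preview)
Your proposal is correct and matches the paper's own argument exactly: the paper derives Theorem~\ref{thm:pendant} in one line from Theorem~\ref{cor:main}, Observation~\ref{obs:pendantCMSO}, and Lemma~\ref{lem:pendantUnbreak}, which is precisely the chain you spell out. Your extra remark about the graphs-versus-structures bookkeeping is a harmless (and arguably helpful) elaboration of a point the paper leaves implicit.
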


\subsection{Recursive Understanding as a Black Box}

The {\sc Vertex Multiway Cut-Uncut (V-MWCU)} problem is defined as follows. The input of {\sc V-MWCU} consists of a graph $G$, a terminal set $T\subseteq V(G)$, an equivalence relation $\cal R$ on $T$, and a parameter $k$. The objective is to determine whether there exists a subset $U\subseteq V(G)\setminus T$ such that $|U|\leq k$, and for all $u,v\in T$, it holds that $u$ and $v$ belong to the same connected component of $G\setminus U$ if and only if $(u,v)\in{\cal R}$.
%Without loss of generality, we can assume that $G$ is a simple graph.
Our goal is to prove the following result.

\begin{theorem}\label{thm:multiway}
{\sc V-MWCU} parameterized by $k$ is {\FPT} on general graphs.
\end{theorem}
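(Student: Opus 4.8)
The plan is to derive Theorem~\ref{thm:multiway} as a direct corollary of Theorem~\ref{cor:main}, mirroring the strategy used for $t$-{\sc Pendant}$[\psi]$ in the previous subsection. First I would express {\sc V-MWCU} as a {\sc min-CMSO}$[\varphi]$ problem: the free variable $S$ will play the role of the deletion set $U$, and $\varphi$ asserts that $S \cap T = \emptyset$, that for every pair $u,v \in T$ with $(u,v)\in\mathcal{R}$ the vertices $u,v$ lie in the same component of $G\setminus S$, and that for every pair $u,v \in T$ with $(u,v)\notin\mathcal{R}$ they lie in different components of $G\setminus S$. The ``same component'' predicate is CMSO-definable by the standard connectivity sentence applied to $G\setminus S$ (quantify over a vertex set that is connected, closed under adjacency within $V(G)\setminus S$, and contains both $u$ and $v$); its negation handles the ``different component'' case. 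One subtlety: $T$ and $\mathcal{R}$ are part of the input, not fixed, so strictly speaking $\varphi$ should be written for a structure that encodes $T$ (as a distinguished vertex set) and $\mathcal{R}$ (as, say, an edge set on $T$ recording the equivalence classes); alternatively, since {\sc V-MWCU} is {\FPT} in $k + |T|$ is not enough, I would instead note that bounded-size terminal structures can be encoded into the CMSO structure and observe that Theorem~\ref{cor:main} applies to structures, so passing $(G, T, \mathcal{R})$ as a structure $\alpha$ is legitimate. Thus there is a CMSO sentence $\widehat\psi$ (reading the terminal set and relation off the structure) with $(G,T,\mathcal{R},k)$ a \Yes-instance of {\sc V-MWCU} iff $\alpha$ (encoding $G,T,\mathcal{R}$) together with parameter $k$ is a \Yes-instance of {\sc min-CMSO}$[\widehat\psi]$.

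Next, by Theorem~\ref{cor:main} it suffices to solve {\sc min-CMSO}$[\widehat\psi]$ — equivalently {\sc V-MWCU} — on $(\widehat s(k), \widehat c(k))$-unbreakable structures for a suitable choice of $\widehat c$; I would set $\widehat c(k) = k$ (or $k+O(1)$ to accommodate terminal-encoding artifacts) and let $\widehat s = \widehat s(k)$ be the function supplied by Theorem~\ref{cor:main}. So the remaining task is: given an $(\widehat s(k), k)$-unbreakable graph $G$ with terminals $T$ and relation $\mathcal{R}$, decide {\sc V-MWCU} in {\FPT} time. Here is where I would use unbreakability. A solution $U$ of size at most $k$ splits $G\setminus U$ into connected components; since $|U|\le k$ and $G$ is $(\widehat s(k),k)$-unbreakable, at most one component of $G\setminus U$ can have more than $\widehat s(k)$ vertices — call it the \emph{big} component (if it exists) — and every other component has at most $\widehat s(k)$ vertices. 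The small components, together with $U$, form a set of size at most $k$ (for $U$) plus a bounded number of small blobs; more precisely, the union of $U$ with all small components has size... this needs care, because there could be many small components. The key observation is that each small component must, by the uncut constraints, contain only terminals that are mutually equivalent and equivalent to no terminal in the big component, so the number of ``small'' sides that matter is controlled by the number of equivalence classes of $\mathcal{R}$, which is at most $|T|$.

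The hard part will be turning unbreakability into a genuinely bounded search space for the solution $U$ on the base case, and I expect this to be the main obstacle. Concretely, I would argue that $V(G)\setminus(\text{big component})$ has bounded size: it equals $U$ together with the small components, and each small component is ``attached'' to $U$ (its neighborhood lies in $U$), has at most $\widehat s(k)$ vertices, and — crucially — applying unbreakability a second time to the separation (big component, rest) shows the ``rest'' side has size at most $\widehat s(k) + k$; hence $V(G)$ minus the big component has size $O(\widehat s(k))$, a constant depending only on $k$. (If no big component exists, then $|V(G)| \le \widehat s(k)^2 \cdot$ something — actually $G$ itself is small because $(\widehat s(k),k)$-unbreakability with an empty separator forces $|V(G)| \le 2\widehat s(k)$ when $k \ge 0$, or more carefully one bounds $n$ directly — and the whole problem is solved by brute force.) Given a constant-size ``outside'' part, I would enumerate all $O(1)$-size candidate sets $U$ within it — but $U$ need not lie entirely outside the big component a priori; however, $U$ is exactly the complement of the union of components, so $U$ does lie in the bounded-size set $V(G)\setminus(\text{big component})$ once we guess which component is big, and there are at most $n$ choices of a vertex to seed the big component. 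For each guess I would use Lemma~\ref{lem:enumConn} (Fomin–Villanger) to enumerate, in time $\OO(n)$, connected vertex sets of size at most $2\widehat s(k)+k$ with neighborhood at most $k$ as candidates for the complement of the big component's closed neighborhood, then verify the cut/uncut conditions directly in constant time per candidate. This yields an $\OO(n)$ (or $\OO(n^{O(1)})$) algorithm on unbreakable structures, and Theorem~\ref{cor:main} then gives {\FPT} on general graphs, completing the proof.
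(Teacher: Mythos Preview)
Your high-level plan matches the paper's exactly: encode the terminal set and the equivalence relation as extra elements of a structure (the paper does this via the equivalent {\sc V-RBCU} reformulation, where $\mathcal{R}$ becomes an edge set $R$ with $G[R]$ a cluster graph), express the problem as {\sc min-CMSO}$[\varphi]$, set $\widehat c(k)=k$, invoke Theorem~\ref{cor:main}, and then solve the problem on $(\widehat s(k),k)$-unbreakable inputs using the observation that any solution $U$ leaves at most one ``big'' connected component in $G\setminus U$. Your structural bound $|V(G)\setminus B|\le \widehat s(k)+k$ is correct and in fact slightly sharper than what the paper states explicitly.

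The genuine gap is the enumeration step in the unbreakable base case. You propose to use Lemma~\ref{lem:enumConn} to list connected vertex sets of bounded size and neighbourhood $\le k$ ``as candidates for the complement of the big component's closed neighbourhood''. But the set $W=V(G)\setminus B$ (and likewise $V(G)\setminus N[B]$) is in general \emph{not} connected: it is $U$ together with several small components of $G\setminus U$, and these pieces need not touch one another. Moreover, $N_G(W)$ consists of all vertices of $B$ adjacent to $U$, which can have size $\Theta(n)$, so the ``neighbourhood $\le k$'' hypothesis of Lemma~\ref{lem:enumConn} fails for $W$. Hence Fomin--Villanger does not enumerate $W$ as a single object, and you have not said how to recover $U$ from what it does enumerate. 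The paper closes this gap by working component-by-component: it guesses which equivalence class $R_i$ sits in the big component, and for every $j\neq i$ the component $C_j$ containing $R_j$ is connected, contains a known terminal $v_j$, has $|C_j|\le \widehat s(k)$ and $|N(C_j)|\le k$, so Lemma~\ref{lem:enumConn} enumerates all candidates for $C_j$; a branching procedure then picks one candidate $C_j$ at a time, adds $N(C_j)$ to $S$, and recurses. The recursion has depth $\le k+1$ because $|S|$ strictly increases at every step, and the branching factor is $O(1)$.

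A smaller issue: your argument that ``if no big component exists then $G$ is small, via an empty separator'' is incorrect, since for connected $G$ the only order-$0$ separations are trivial. The conclusion is nevertheless true: if every component of $G\setminus U$ has size $\le \widehat s(k)$ and $|V(G)|>2\widehat s(k)+k$, one can greedily split the components into two groups each of size $>\widehat s(k)$, yielding an $(\widehat s(k),k)$-witnessing separation and contradicting unbreakability.
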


For syntactic reasons, we view {\sc V-MWCU} as the {\sc Vertex Red-Blue Cut-Uncut (V-RBCU)} problem, which we define as follows. The input of {\sc V-RBCU} consists of a graph $G$, an edge-set $R\subseteq E(G)$ such that $G[R]$ is a cluster graph, and a parameter $k$. The objective is to determine whether there exists a subset $S\subseteq V(G)\setminus V[R]$, called a {\em solution}, such that $|S|\leq k$, and for every two vertices $u,v\in V[R]$, it holds that $u$ and $v$ belong to the same connected component of $(G\setminus S)\setminus R$ if and only if there exists an edge in $R$ whose endpoints are $u$ and $v$.

Given an instance $(G,T,{\cal R},k)$ of {\sc V-MWCU}, we construct (in polynomial time) an equivalent instance $(G',R,k)$ of {\sc V-RBCU} as follows. We set $V(G')=V(G)$, and initialize $E(G')=E(G)$ and $R=\emptyset$. Then, for each $u,v\in T$ such that $(u,v)\in {\cal R}$, we insert into both $E(G')$ and $R$ a new edge whose endpoints are $u$ and $v$. Thus, to prove Theorem \ref{thm:multiway}, it is sufficient to prove the following result.

\begin{lemma}\label{lem:redBlueMain}
{\sc V-RBCU} parameterized by $k$ is {\FPT} on general graphs.
\end{lemma}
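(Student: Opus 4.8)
The plan is to reduce \textsc{V-RBCU} to a \textsc{min-CMSO}$[\psi]$ problem and then invoke Theorem~\ref{cor:main}, exactly following the template used for $t$-\textsc{Pendant}$[\psi]$ in the previous subsection. First I would express the \textsc{V-RBCU} property in CMSO. The input already comes equipped with the distinguished edge set $R$ (which I treat as a fixed part of the structure $\alpha$, not as a free variable), so I need a CMSO formula $\psi$ with one free vertex-set variable $S$ stating: (i) $S\cap V[R]=\emptyset$, i.e. no endpoint of an edge of $R$ lies in $S$ --- this is expressible since ``$v\in V[R]$'' is the formula $\exists_{e}(\mathbf{inc}(e,v)\wedge e\in R)$; and (ii) for all $u,v\in V[R]$, $u$ and $v$ are in the same component of $(G\setminus S)\setminus R$ iff $R$ contains an edge with endpoints $u,v$. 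The ``same component of $(G\setminus S)\setminus R$'' relation is CMSO-definable by the standard connectivity trick: there is a vertex set $Z$ with $u,v\in Z$, $Z\cap S=\emptyset$, and $Z$ has no edge of $E(G)\setminus R$ leaving it (i.e. $Z$ is closed under non-$R$ adjacency avoiding $S$). Conjoining these yields a CMSO sentence (with free variable $S$) defining the solution property; call the associated property $\sigma_\psi$. Then $(G,R,k)$ is a \textsc{yes}-instance of \textsc{V-RBCU} iff $((G,R),k)$ is a \textsc{yes}-instance of \textsc{min-CMSO}$[\psi]$, and by Theorem~\ref{cor:main} it suffices to solve \textsc{V-RBCU} on $(\widehat s(k),\widehat c(k))$-unbreakable structures for suitable functions $\widehat c,\widehat s$.

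Next I would set $\widehat c(k)=k$ (a solution has size at most $k$, so the relevant separator size is $k$) and let $\widehat s$ be the function supplied by Theorem~\ref{cor:main}. The key structural claim to prove --- the analogue of Lemma~\ref{lem:pendantSmall} --- is that on an $(\widehat s(k),k)$-unbreakable graph, if a solution $S$ exists then there is a ``small'' witness: specifically, the number of nontrivial components (components meeting $V[R]$, together with their small neighborhoods into $S$) is bounded. The argument: after deleting $S$ and $R$, the graph $(G\setminus S)\setminus R$ is partitioned into components; the requirement forces each clique of the cluster graph $G[R]$ to lie inside one component and forbids two distinct $R$-cliques sharing a component. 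If the total size of $V[R]$ together with $S$ and the components containing $V[R]$ were too large while some other part of the graph were also large, unbreakability would be violated through the separator $S$ (of size $\le k$). Hence either $G$ itself is small (at most $O(\widehat s(k))$ vertices, handled by brute force) or all but a bounded-size region consists of ``irrelevant'' components of $(G\setminus S)\setminus R$ that can be argued away. The cleanest route is: since $|S|\le k\le \widehat c(k)$ and $G$ is $(\widehat s(k),\widehat c(k))$-unbreakable, the separation $(S\cup C_1, (V\setminus C_1)\cup S)$ for a union $C_1$ of components of $(G\setminus S)\setminus R$ cannot have both sides large; choosing $C_1$ to be the components touching $V[R]$ shows that either those components collectively have fewer than $\widehat s(k)$ vertices outside $S$, or the rest of the graph does --- in the latter case $V(G)$ itself is bounded. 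This yields: either $n<3(\widehat s(k)+k)$, or there is a solution $S$ such that the union of $S$ with all components of $(G\setminus S)\setminus R$ meeting $V[R]$ has size $O(\widehat s(k)+k)$.

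Given this bounded-witness lemma, the algorithm on unbreakable graphs mirrors Lemma~\ref{lem:pendantUnbreak}: if $n$ is bounded, brute-force; otherwise enumerate candidate ``cores'' of size $O(\widehat s(k))$ using the connected-set enumeration of Lemma~\ref{lem:enumConn} (enumerating, for each vertex of $V[R]$, the bounded-size connected sets with bounded neighborhood that could be its component, and the at most $k$ deleted vertices among their neighborhoods), check in constant time per candidate whether it induces a valid partial solution, and verify that the remaining (large, component-free-of-$V[R]$) part imposes no constraint. Since there are $O(n)$ such candidates and each check is constant time, this runs in FPT (indeed $O(n)$) time. Applying Theorem~\ref{cor:main} with $\widehat\psi=\psi$, $\widehat c(k)=k$, and the resulting $\widehat s$, together with the equivalence $(G,R,k)\equiv((G,R),k)$ of \textsc{V-RBCU} and \textsc{min-CMSO}$[\psi]$, gives that \textsc{V-RBCU} parameterized by $k$ is \FPT\ on general graphs, which is Lemma~\ref{lem:redBlueMain} and hence (via the reduction already described) Theorem~\ref{thm:multiway}.

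The main obstacle I anticipate is the bounded-witness argument on unbreakable graphs: one must be careful that a solution $S$ may create many components of $(G\setminus S)\setminus R$, and only those meeting $V[R]$ are ``relevant'', while the irrelevant ones can be arbitrarily large --- so the naive enumeration of the whole complement of $S$ fails. The fix is to observe that the relevant region $S\cup(\text{components meeting }V[R])$ is itself separated from the rest by $S$, so unbreakability (with $c=k\ge|S|$) forces this relevant region to be small unless the whole graph is small; and then one must check that the irrelevant components genuinely impose no constraint on feasibility, so they can be ignored entirely when searching. Getting the exact constants in the ``$3(\widehat s(k)+k)$'' bound and confirming that the connected-set enumeration lemma applies to each relevant component (which has bounded size and bounded neighborhood into the core) is routine but must be done carefully.
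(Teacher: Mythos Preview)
Your overall plan is the paper's plan: express the feasibility condition for $S$ as a CMSO formula, view \textsc{V-RBCU} as \textsc{min-CMSO}$[\varphi]$, set $\widehat c(k)=k$, solve the problem on $(\widehat s(k),k)$-unbreakable graphs, and invoke Theorem~\ref{cor:main}. The CMSO encoding and the reduction are fine. The gap is in your structural lemma on unbreakable graphs.

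Your claim ``either $n<3(\widehat s(k)+k)$, or there is a solution $S$ such that the union of $S$ with all components of $(G\setminus S)\setminus R$ meeting $V[R]$ has size $O(\widehat s(k)+k)$'' is false. Take $G$ to be an $n$-vertex clique with $n$ arbitrarily large, pick two vertices $a,b$, and set $R=\{ab\}$. Then $G$ is $(\widehat s(k),k)$-unbreakable, $S=\emptyset$ is (the unique) solution, and the single component of $(G\setminus S)\setminus R$ meeting $V[R]$ is all of $V(G)$. Your separation argument only shows that \emph{one side} of $(S\cup U,(V\setminus U)\cup S)$ is small; when the small side is the ``irrelevant'' part you conclude ``$V(G)$ itself is bounded'', but that does not follow --- $|V(G)|=|U|+|S|+|V\setminus(U\cup S)|$ and $|U|$ is unconstrained. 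In particular, your enumeration of a bounded-size ``core'' containing all relevant components cannot work.

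The fix is to apply unbreakability to a \emph{single} relevant component rather than to their union. If $C_1,\dots,C_r$ are the components of $(G\setminus S)\setminus R$ containing the $R$-cliques $R_1,\dots,R_r$ (recall that a valid $S$ forces each $R_j$ into one component and no $R$-edge crosses between components), then for each $j$ the pair $(V(C_j)\cup S,\,V(G)\setminus V(C_j))$ is a separation of $G$ of order at most $k$; unbreakability gives $|V(C_j)|\le \widehat s(k)$ or $|V(G)\setminus (V(C_j)\cup S)|\le \widehat s(k)$, and the latter can hold for at most one $j$. Hence \emph{all but one} of the $C_j$ have size at most $\widehat s(k)$ (this is the paper's Lemma~\ref{lem:redBlueSmall}). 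The algorithm on unbreakable graphs then guesses the exceptional index $i$, and for each $j\neq i$ enumerates (via Lemma~\ref{lem:enumConn}) the candidate connected sets of size at most $\widehat s(k)$ containing $R_j$ with neighbourhood at most $k$, branching on at most $k$ choices of $N(U)$ to add to $S$; this is the content of Lemma~\ref{lem:redBlueUnbreak}. With this corrected structural lemma your outline goes through.
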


We start by defining a CMSO formula $\varphi$ with free variables $R$ and $S$ as follows.
\[\begin{array}{ll}
%\varphi = & \exists_{U\subseteq V(G)}[(G[U]\models\psi)\wedge(G[U]\models {\bf tw}_t)\wedge{\bf conn}(U)\wedge\\
%& (\forall_{v\in S}\neg v\in U)\wedge(\forall_{v\in U}\forall_{u\in V(G)\setminus (U\cup S)}\neg{\bf adj}(v,u))
\varphi =& {\bf cluster}(R)\wedge[\forall_{v\in S}\neg\exists_{e\in R}\ {\bf inc}(e,v)]\wedge\\

&[\forall_{u,v\in V(G)\setminus S}\ \varphi_1\vee\varphi_2\vee\varphi_3],
\end{array}\]
where 
\[\begin{array}{ll}
\varphi_1 =& \neg\exists_{e\in R}\ {\bf inc}(e,u) \vee\neg\exists_{e\in R}\ {\bf inc}(e,v),\\

\varphi_2 =&  \exists_{U\subseteq V(G)\setminus S}[u\in U\wedge v\in U\wedge{\bf conn}(U)]\wedge\\
 & \exists_{e\in R}({\bf inc}(e,u)\wedge {\bf inc}(e,v)),\\

\varphi_3= & \neg\exists_{U\subseteq V(G)\setminus S}[u\in U\wedge v\in U\wedge{\bf conn}(U)]\wedge\\
 & \neg\exists_{e\in R}({\bf inc}(e,u)\wedge {\bf inc}(e,v)),
\end{array}\]
and {\bf cluster}$(R)$ is the standard CMSO sentence that tests whether $G[R]$ is a cluster graph. For completeness,
\[\begin{array}{l}
{\bf cluster}(R) = \forall_{u,v,w\in V(G)}[\varphi_1\vee \neg\exists_{e\in R}\ {\bf inc}(e,w)\vee\\
\neg\exists_{e\in R}({\bf inc}(e,u)\wedge {\bf inc}(e,v))\vee \neg\exists_{e\in R}({\bf inc}(e,v)\wedge {\bf inc}(e,w))\\
\vee\exists_{e\in R}({\bf inc}(e,u)\wedge {\bf inc}(e,w))].
\end{array}\]

Having defined $\varphi$, it is immediate that {\sc V-RBCU} is equivalent to {\sc min-CMSO}$[\varphi]$ as follows.
\begin{observation}\label{obs:redBlueCMSO}
Let $G$ be a graph, and let $k$ be a parameter. Then, $(G,R,k)$ is a \Yes-instance of {\sc V-RBCU} if and only if $((G,R),k)$ is a \Yes-instance of {\sc min-CMSO}$[\varphi]$.
\end{observation}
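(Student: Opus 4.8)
The plan is to prove the observation by a direct unfolding of the definitions of $\sigma_\varphi$ and of a V-RBCU solution, checking clause by clause that the two conditions agree. The first step is to reduce to a pointwise statement about a fixed candidate set. Recall that $((G,R),k)$ is a \Yes-instance of {\sc min-CMSO}$[\varphi]$ precisely when there is $S\subseteq V(G)$ with $|S|\le k$ and $\sigma_\varphi((G,R)\diamond S)={\sf true}$, i.e.\ such that $\varphi$ evaluates to {\sf true} on the structure $(G,R,S)$ with its free variables instantiated by the given edge set $R$ and by $S$ --- note that $(G,R)$ is a legitimate input, since $(G,R)\diamond S$ matches $\varphi$ for every $S$. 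Likewise $(G,R,k)$ is a \Yes-instance of V-RBCU precisely when there is $S\subseteq V(G)\setminus V[R]$ with $|S|\le k$ satisfying the cut-uncut property. Hence it suffices to prove: for every $S\subseteq V(G)$, the structure $(G,R,S)$ satisfies $\varphi$ if and only if $S\subseteq V(G)\setminus V[R]$ and, for all $u,v\in V[R]$, the vertices $u$ and $v$ lie in the same connected component of $(G\setminus S)\setminus R$ iff some edge of $R$ has endpoints $u$ and $v$. Given this equivalence of predicates, both notions of \Yes-instance range over exactly the same family of sets $S$ of size at most $k$, and the observation follows.

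For the pointwise claim I would treat the three conjuncts of $\varphi$ separately. The conjunct ${\bf cluster}(R)$ does not mention $S$ and asserts only that $G[R]$ is a cluster graph, which holds for every V-RBCU input by definition, so it contributes nothing. The conjunct $\forall_{v\in S}\neg\exists_{e\in R}\ {\bf inc}(e,v)$ states that no vertex of $S$ is incident to an edge of $R$, which is precisely $S\cap V[R]=\emptyset$, i.e.\ $S\subseteq V(G)\setminus V[R]$. For the third conjunct $\forall_{u,v\in V(G)\setminus S}(\varphi_1\vee\varphi_2\vee\varphi_3)$, observe that $\varphi_1$ is true exactly when at least one of $u,v$ is not in $V[R]$, so the disjunction imposes nothing on such pairs; this mirrors the fact that the V-RBCU property only quantifies over pairs in $V[R]$, and, by the second conjunct, $V[R]\subseteq V(G)\setminus S$. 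For a pair $u,v\in V[R]$, the subformula $\exists_{e\in R}({\bf inc}(e,u)\wedge{\bf inc}(e,v))$ holds iff $R$ has an edge with endpoints $u$ and $v$, and $\exists_{U\subseteq V(G)\setminus S}[u\in U\wedge v\in U\wedge{\bf conn}(U)]$ holds iff $u$ and $v$ lie in the same component of $(G\setminus S)\setminus R$ (forward: take $U$ to be the vertex set of a $u$--$v$ path avoiding $S$ and the edges of $R$; backward: extract such a path from any connected witness set). Since $\varphi_2$ and $\varphi_3$ are the two complementary conjunctions of these two predicates, $\varphi_2\vee\varphi_3$ is equivalent to the biconditional ``$u,v$ in the same component of $(G\setminus S)\setminus R$'' $\Leftrightarrow$ ``$R$ has an edge with endpoints $u,v$'', which is exactly the V-RBCU requirement for that pair. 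Collecting the three conjuncts yields the pointwise claim.

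One point needs care throughout the second step: the connectivity predicate ${\bf conn}(U)$ inside $\varphi_2$ and $\varphi_3$ must be read as connectivity of $U$ in the graph $(V(G),E(G)\setminus R)$ rather than in $G$ itself --- this remains CMSO-expressible because $R$ is a free variable of $\varphi$ --- for otherwise the ``uncut'' half of the condition, which forces the two endpoints of every edge of $R$ to lie in a single connected component of $(G\setminus S)\setminus R$, would not be faithfully captured. Two trivial boundary cases should also be dispatched: the quantifier $\forall_{u,v\in V(G)\setminus S}$ ranges over $u=v$ as well, but for $u=v\in V[R]$ the conjunct $\varphi_2$ is satisfied, since ${\bf inc}(e,u)\wedge{\bf inc}(e,u)$ collapses to ${\bf inc}(e,u)$ and $u\in V[R]$, so the diagonal never obstructs $\varphi$; and the V-RBCU property is, as usual, read for distinct $u,v$, so the diagonal is irrelevant there too. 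I do not foresee a real obstacle: the whole argument is a mechanical matching of quantifiers and subformulas against the definition of V-RBCU, the lone genuine subtlety being the correct reading of the connectivity subformula described above.
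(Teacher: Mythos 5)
Your proof is correct, and it is exactly the definitional unfolding that the paper treats as immediate (the paper gives no proof of this observation at all). What makes your write-up worth more than a routine check is the point you flag about ${\bf conn}(U)$: you are right that with the paper's literal reading, where ${\bf conn}(U)$ tests connectivity of $G[U]$ and hence may use edges of $R$, the ``uncut'' half of the {\sc V-RBCU} condition is not enforced. Indeed, for a pair $u,v$ joined by an $R$-edge, $\varphi_2$ is then witnessed by $U=\{u,v\}$ via that very edge, so e.g.\ the instance consisting of two vertices whose only edge lies in $R$ would satisfy $\varphi$ with $S=\emptyset$ while being a \No-instance of {\sc V-RBCU} even for $k=0$. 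So the reinterpretation you adopt --- connectivity of $U$ measured in $(V(G),E(G)\setminus R)$, which is still CMSO-expressible since $R$ is a free variable --- is not merely a pedantic caveat but is genuinely needed for the observation to hold, and is clearly the intended reading (it is what makes ``same connected component of $(G\setminus S)\setminus R$'' come out right in $\varphi_2$ and $\varphi_3$). The remaining steps (the biconditional $\varphi_2\vee\varphi_3\equiv(A\Leftrightarrow B)$, the role of $\varphi_1$ for pairs outside $V[R]$, the second conjunct giving $S\cap V[R]=\emptyset$, and the harmlessness of the diagonal) are all handled correctly.
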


Next, we solve {\sc V-RBCU} on unbreakable graphs with the appropriate parameters. Define $c:\mathbb{N}_0\rightarrow\mathbb{N}_0$ as follows. For all $k\in\mathbb{N}_0$, let $\widehat{c}(k)=k$. Let $s:\mathbb{N}_0\rightarrow\mathbb{N}_0$ be the function $\widehat{s}$ in Theorem \ref{cor:main} with $\widehat{\psi}=\varphi$ and $\widehat{c}=c$. Given an instance $(G,R,k)$ of {\sc V-RBCU}, let $R_1,R_2,\ldots,R_r$ denote the vertex sets of the cliques in $G[R]$ for the appropriate $r$. We first prove the following lemma.

\begin{lemma}\label{lem:redBlueSmall}
Let $(G,R,k)$ be a \Yes-instance of {\sc V-RBCU} parameterized by $k$ on $(s(k),k)$-unbreakable graphs. Then, there exists a solution $S$ and $i\in [r]$ such that for all $j\in[r]\setminus\{i\}$, $|V(C_j)|\leq s(k)$, where $C_j$ is the connected component of $(G\setminus S)\setminus R$ whose vertex-set contains $R_j$.
\end{lemma}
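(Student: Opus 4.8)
The plan is to start from an arbitrary solution $S$ and argue that the cluster-components of $(G\setminus S)\setminus R$ cannot all be large, because a large component that is separated from the rest of the graph by the small set $S$ (together with the few red vertices of the other clusters it might touch) would witness breakability. First I would fix a solution $S$ with $|S|\le k$ and let $C_1,\dots,C_r$ be the connected components of $(G\setminus S)\setminus R$ containing $R_1,\dots,R_r$ respectively; note that by the definition of a solution each $R_j$ lies entirely inside its own component $C_j$, and distinct $R_j$'s lie in distinct components. Suppose, for contradiction, that there are two indices $i\ne j$ with $|V(C_i)|>s(k)$ and $|V(C_j)|>s(k)$. I want to produce an $(s(k),k)$-witnessing separation of $G$ from this, contradicting unbreakability.

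The key step is to choose the separation correctly. Set $X = V(C_i)$ and $Y = V(G)\setminus V(C_i)$, and let the separator be $X\cap Y$. I claim this separator has size at most $k$: since $C_i$ is a connected component of $(G\setminus S)\setminus R$, the only way a vertex outside $C_i$ can be adjacent in $G$ to a vertex of $C_i$ is via an edge incident to a vertex of $S$ or via an edge of $R$. But every edge of $R$ incident to a vertex of $C_i$ has both endpoints in $R_i\subseteq V(C_i)$ (because $G[R]$ is a cluster graph and $C_i$ contains the whole clique $R_i$), so $R$-edges contribute nothing to the boundary. Hence the only vertices of $C_i$ with a neighbour outside $C_i$ are those adjacent to $S$, and more importantly the only vertices outside $C_i$ adjacent into $C_i$ lie in $S$; so $(X,Y)$ is a genuine separation with separator contained in $S$, of order at most $|S|\le k$. (If one prefers the formal definition with $X\cup Y=V(G)$, one adjusts by putting $S$ into both $X$ and $Y$; the order is still at most $k$.) Now $|X\setminus Y|\ge |V(C_i)|-k > s(k)-k$, and similarly, since $C_j\subseteq Y\setminus X$ minus at most $|S|$ vertices, $|Y\setminus X|\ge |V(C_j)|-k>s(k)-k$. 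To make this a clean $(s(k),k)$-witnessing separation one should be slightly more careful with the additive $k$ — either strengthen the contradiction hypothesis to "$|V(C_i)|,|V(C_j)|$ both exceed $s(k)$" already gives $|X\setminus Y|,|Y\setminus X|>s(k)-k$, which is not quite $>s(k)$; so the cleaner route is to define $\widehat s(k)$ in the application (via Theorem~\ref{cor:main}) so that the unbreakability bound absorbs this $+k$ slack, or equivalently to phrase the statement as: at most one component has more than $s(k)$ vertices, where $s(k)$ already incorporates the slack. In the write-up I would simply take the separation $X=V(C_i)$, $Y=(V(G)\setminus V(C_i))\cup (S\cap N(V(C_i)))$ so that $X\cap Y\subseteq S$ has order $\le k$, and use that $|X\setminus Y|=|V(C_i)|>s(k)$ and $|Y\setminus X|\ge |V(C_j)|>s(k)$ directly, contradicting $(s(k),k)$-unbreakability. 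Picking $i$ to be (say) the index of a largest component then yields that all $C_j$ with $j\ne i$ satisfy $|V(C_j)|\le s(k)$.

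The main obstacle I anticipate is purely bookkeeping rather than conceptual: making sure the separator is exactly the set of $S$-vertices that "see" $C_i$ and verifying that no $R$-edge crosses the cut — this is where the hypothesis that $G[R]$ is a cluster graph and that each solution keeps every clique $R_j$ inside one component is essential. A secondary point to be careful about is that $X\setminus Y$ and $Y\setminus X$ must each genuinely exceed $s(k)$ (not $s(k)$ minus a constant), which is handled by the choice of $Y$ above so that $X\setminus Y$ is all of $V(C_i)$. Once these are in place the contradiction is immediate, and the lemma follows by choosing $i$ to index a component of maximum size (ties broken arbitrarily).
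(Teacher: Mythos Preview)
Your approach is essentially the same as the paper's: take any solution $S$, let $i$ index a largest component $C_i$, and if some other $C_j$ also had more than $s(k)$ vertices, exhibit a separation of $G$ of order at most $k$ with both sides large, using that no $R$-edge leaves $C_i$ because $G[R]$ is a cluster graph and $R_i\subseteq V(C_i)$.

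Two small corrections to your bookkeeping. First, the parenthetical construction you mention --- putting $S$ into both sides --- is exactly the right one (and is what the paper uses): take $X=V(C_i)\cup S$ and $Y=V(G)\setminus V(C_i)$. Then $X\cap Y=S$ has order at most $k$, $X\setminus Y=V(C_i)$ exactly (so $|X\setminus Y|>s(k)$ with no ``$-k$'' slack), and $Y\setminus X\supseteq V(C_j)$ gives $|Y\setminus X|>s(k)$. Your worry about an additive $k$ loss is therefore unfounded; there is nothing to absorb into $\widehat s$. Second, your final alternative definition, $X=V(C_i)$ and $Y=(V(G)\setminus V(C_i))\cup(S\cap N(V(C_i)))$, does not work as written: since $S\cap N(V(C_i))$ is already contained in $V(G)\setminus V(C_i)$, you get $X\cap Y=\emptyset$, and then edges from $V(C_i)$ to $S$ cross the cut, so $(X,Y)$ is not a separation. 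The fix is to add $S$ (or $S\cap N(V(C_i))$) to $X$ rather than to $Y$, which brings you back to the paper's construction.
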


\begin{proof}
Since $(G,R,k)$ is a \Yes-instance, there exists a solution $S$. For all $j\in[r]$, let $C_j$ is the connected component of $(G\setminus S)\setminus R$ whose vertex-set contains $R_j$. Let $i$ denote an index in $[r]$ that maximizes $|V(C_j)|$. We claim that for all $j\in[r]\setminus\{i\}$, $|V(C_j)|\leq s(k)$. Suppose, by way of contradiction, that there exists $j\in[r]\setminus\{i\}$ such that $|V(C_j)|>s(k)$. Then, since $V(C_j)\subseteq V(G)\setminus(V(C_i)\cup S)$, we have that $(V(C_i)\cup S,V(G)\setminus V(C_i))$ is an $(s(k),k)$-witnessing separation of $G\setminus R$. Since $S$ is a solution, there is no edge in $R$ with one endpoint in $V(C_i)$ and another endpoint outside $V(C_i)$. Therefore, $(V(C_i)\cup S,V(G)\setminus V(C_i))$ is also an $(s(k),k)$-witnessing separation of $G$, which contradicts the fact that $G$ is an $(s(k),k)$-unbreakable graph.
This concludes the proof of the lemma.
\end{proof}

\begin{lemma}\label{lem:redBlueUnbreak}
{\sc V-RBCU} parameterized by $k$ is {\FPT} on $(s(k),k)$-unbreakable graphs.
\end{lemma}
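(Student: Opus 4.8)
The plan is to solve \textsc{V-RBCU} directly on $(s(k),k)$-unbreakable graphs by combining Lemma~\ref{lem:redBlueSmall} with the enumeration result of Lemma~\ref{lem:enumConn}. The intuition is this: by Lemma~\ref{lem:redBlueSmall}, in a \Yes-instance there is a solution $S$ such that, after removing $S$ and the cluster edges $R$, at most one ``big'' connected component contains some $R_i$, and all the other components containing an $R_j$ are small (at most $s(k)$ vertices). So the structure of an optimal solution is: a single potentially-large blob $B$ (the vertex set of the big component $C_i$ together with everything separated from the terminal cliques by $S$ on the ``small side''), plus a small boundary $S$ of size at most $k$. Since $G$ is $(s(k),k)$-unbreakable and $|S|\le k$, one of the two sides of the separation $(B\cup S, V(G)\setminus (C_i))$ must itself be small — in fact $V(G)\setminus V(C_i)$ has at most $s(k)$ vertices. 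This bounds the ``complement'' of the big component, which is exactly the part that carries all the interesting structure of the solution.

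First I would make the above precise: guess the index $i\in[r]$ of the ``big'' clique (only $r\le n$ choices, and if $r=0$ we can treat this as a trivial special case, or guess that \emph{no} clique is big, in which case every component is small). For the guessed $i$, the solution $S$ together with all the other small components forms a vertex set $W$ of size at most $s(k)+ k\cdot s(k) = \OO(1)$ that contains $S$ and separates $R_i$ from all other $R_j$'s. More carefully, $W = V(G)\setminus V(C_i)$ where $C_i$ is the big component; by Lemma~\ref{lem:redBlueSmall} applied to the components and the fact that $|S|\le k$, we get $|W|\le (r-1)\cdot s(k) + k$, but $r$ is not bounded — so instead I would use the unbreakability of $G$ directly: since $(V(C_i)\cup S, V(G)\setminus V(C_i))$ is a separation of order $\le k$ and $G$ is $(s(k),k)$-unbreakable, and since the ``outside'' contains at least the clique $R_j$ for every $j\ne i$ which we may assume is nonempty, one of the two sides has at most $s(k)$ vertices; if it is the outside side, then $|V(G)\setminus V(C_i)|\le s(k)+k=\OO(1)$ and we enumerate all such sets; if it is the inside side $V(C_i)$, then $|V(C_i)|\le s(k)$ and we instead enumerate the small connected sets containing $R_i$ via Lemma~\ref{lem:enumConn}. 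In either case the relevant ``core'' of the instance has $\OO(1)$ vertices.

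Next I would do the actual enumeration. In the case where $V(C_i)$ is small, for each vertex $v\in R_i$ use Lemma~\ref{lem:enumConn} with $p = s(k)$ and $q$ an appropriate bound (note $N(V(C_i))$ in $(G\setminus S)\setminus R$ is empty, but in $G$ it is contained in $S\cup (\text{cluster-neighbors})$, so we need to be a little careful about what plays the role of $q$ — it suffices to enumerate connected sets $U\ni v$ in $G\setminus R$ with $|U|\le s(k)$ and $|N_{G\setminus R}(U)|\le k$, giving $\OO(n)$ candidates). For each candidate set $U$ playing the role of $V(C_i)$, the putative solution $S$ must be contained in $N_{G\setminus R}(U)$, which has size $\le k$, so there are only $2^k=\OO(1)$ choices of $S$; for each we check in polynomial time (indeed in time depending only on $|U|$ and $k$) whether $S$ is a valid solution, i.e. whether $|S|\le k$, $S\cap V[R]=\emptyset$, and the cut-uncut condition holds for all pairs in $V[R]$. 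Symmetrically, in the case where $V(G)\setminus V(C_i)$ is small, we enumerate that set directly (it has $\OO(1)$ size, so there are $n^{\OO(1)}$ — in fact we can again root it and use connectivity, but even a crude bound suffices since $k$ is fixed), then $S$ is a size-$\le k$ subset of it, again $\OO(1)$ choices. Overall the running time is $\OO(n)\cdot \OO(1) = \OO(n)$ per guess of $i$ and per case, and $\OO(n^2)$ or so in total, which is certainly \FPT.

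The main obstacle I anticipate is the bookkeeping around \emph{which} side of the unbreakable separation is small, and correctly identifying the constant-size ``core'' of the instance that determines everything. In particular one must handle degenerate cases cleanly: $r=0$ (no cluster edges at all, so the condition is vacuous and the empty set is a solution), $r=1$ (a single clique, so again no pair needs separating and the empty set works), and the case where the ``big'' component is forced to be one of the two sides of the separation — the argument in Lemma~\ref{lem:redBlueSmall} already sets this up but one needs to be careful that after fixing $i$, \emph{all} components $C_j$, $j\ne i$, are simultaneously small, which is exactly what that lemma gives. A secondary subtlety is that when we enumerate a candidate $U$ for $V(C_i)$ and a candidate $S\subseteq N(U)$, we must verify not just that terminals inside are correctly grouped but also that terminals \emph{outside} $C_i$ end up correctly grouped in their own (small) components of $(G\setminus S)\setminus R$ — but since everything outside $C_i$ lives in a set of size $\OO(1)$ once $U$ and $S$ are fixed, this check is a constant-time (for fixed $k$) brute-force verification. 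Once these cases are dispatched, the correctness follows immediately from Lemma~\ref{lem:redBlueSmall}, and combined with Observation~\ref{obs:redBlueCMSO} and Theorem~\ref{cor:main} this yields Lemma~\ref{lem:redBlueMain} and hence Theorem~\ref{thm:multiway}.
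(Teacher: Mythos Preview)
Your high-level intuition is right --- Lemma~\ref{lem:redBlueSmall} plus unbreakability forces all but one terminal-component to be small, and Lemma~\ref{lem:enumConn} should let you enumerate small connected pieces. But the plan breaks down at the point where you try to recover $S$ from a single enumerated piece.

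The central error is the claim that ``the putative solution $S$ must be contained in $N_{G\setminus R}(U)$'' when $U$ plays the role of $V(C_i)$. The containment goes the other way: $C_i$ being a component of $(G\setminus S)\setminus R$ gives only $N_{G\setminus R}(V(C_i))\subseteq S$, not $S\subseteq N(V(C_i))$. A vertex of $S$ that separates, say, $C_2$ from $C_3$ need not be adjacent to $C_i$ at all (think of a star with centre $c$ and many leaves, each leaf a singleton clique $R_j$: here $S=\{c\}$, $r$ is unbounded, and $c$ is adjacent to every $C_j$ but no single $C_i$ determines $S$). So in your Case~1 ($V(C_i)$ small), knowing $V(C_i)$ does not pin down $S$, and the ``outside'' is not of bounded size.

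Your Case~2 (outside small) has a different problem: the set $V(G)\setminus V(C_i)$ has size $\le s(k)+k$, but it is not connected, so Lemma~\ref{lem:enumConn} does not apply to it; and the ``crude bound'' of enumerating all size-$\OO(1)$ subsets is $n^{s(k)+k}$, whose exponent depends on $k$. That is XP, not FPT in the sense of Definition~\ref{def:FPT}, where the degree $d$ must be uniform in $k$. Also note that a connected component $D$ of $G[V(G)\setminus V(C_i)]$ can contain vertices of $S$, and such a vertex may have arbitrarily many neighbours inside $V(C_i)$; so $|N_G(V(D))|$ is not bounded either, and you cannot enumerate the outside piece-by-piece via Lemma~\ref{lem:enumConn}.

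What the paper does instead is to enumerate the small components $C_j$ (for $j\neq i$) \emph{one at a time}, each via Lemma~\ref{lem:enumConn}, and accumulate their neighbourhoods into $S$. The key point is that $N(V(C_j))\subseteq S$ for every such $j$, and each branching step strictly grows $S$; since $|S|\le k$, the recursion depth is at most $k$, and the branching factor at each step is the (constant) size of $\mathcal U_j$. This sidesteps both issues: you never need $S\subseteq N(V(C_i))$, and you never need to enumerate a disconnected set or a set of unbounded-in-$k$ size in one shot. Your plan could be repaired along these lines, but doing so essentially reproduces the paper's branching scheme.
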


\begin{proof}
Fix some $k\in\mathbb{N}_0$. Given an instance $(G,R,k)$ of {\sc V-RBCU} where $G$ is an $(s(k),k)$-unbreakable graph, our algorithm, \alg{A}$_k$,  works as follows. For all $j\in[r]$, it selects a vertex $v_j\in R_j$ (arbitrarily).
By using the algorithm in Lemma \ref{lem:enumConn}, for all $j\in[r]$, \alg{A}$_k$ computes (in constant time) the set ${\cal U}_j$ of subsets $U\subseteq V(G)$ such that $v_j\in U$, $G[U]$ is a connected graph, $|U|\leq s(k)$ and $|N(U)|\leq k$. Then, for all $j\in[r]$ and $U\in{\cal U}_j$, if it does not hold  that $R_j\subseteq U$ and $(\bigcup_{\ell\in[r]\setminus\{j\}}R_\ell)\cap U=\emptyset$, \alg{A}$_k$ removes $U$ from ${\cal U}_j$.
Afterwards, for every $i\in[r]$, \alg{A}$_k$ calls the recursive procedure \alg{B}$_k$, whose pseudocode is given below, with $i$ and a set $S$ that is initialized to be $\emptyset$.
\begin{enumerate}
\item If $|S|>k$: Output \No.
\item Else if there exists $j\in[r]$ such that $R_j$ is not a subset of the vertex-set of a single connected component of $G\setminus S$: Output \No.
\item Else if for all distinct $j,t\in[r]$, $R_j$ and $R_t$ are subsets of distinct vertex-sets of connected components of $G\setminus S$: Output \Yes.
\item\label{step:recurse} Else:
	\begin{enumerate}
	\item Let $j$ be an index in $[r]\setminus\{i\}$ for which there exists $t\in[r]\setminus\{j\}$ such that $R_j$ and $R_t$ are subsets of the vertex-set of a single connected component of $G\setminus S$. 
	\item For all $U\in {\cal U}_j$: If \alg{B}$_k(i,S\cup N(U))$ outputs \Yes, then output \Yes.
	\item Return \No.
	\end{enumerate}
\end{enumerate}
If no call outputted \Yes, then \alg{A}$_k$ outputs \No.

Note that at each recursive call, the size of $S$ increases by at least 1. Indeed, we only update $S$ by inserting vertices into it, and at Step \ref {step:recurse}, there exists $t\in[r]\setminus\{j\}$ such that $R_j$ and $R_t$ are subsets of the vertex-set of a single connected component of $G\setminus S$, while at the subsequent recursive calls, there does not exist such $t$ (by our definition of ${\cal U}_j$). Thus, by Lemma \ref{lem:enumConn}, the running time of each call of \alg{A}$_k$ to \alg{B}$_k$ is bounded by $\OO(n+m)$.\footnote{We remark that by not guessing $i$ in advance, but considering two distinct indices, $j$ and $j'$, in Step \ref{step:recurse}, the algorithm can be modified to run in linear time.} Since $r=\OO(n)$, we have that \alg{A}$_k$ runs in time $\OO(n(n+m))$. The correctness of \alg{A}$_k$ easily follows from Lemmata \ref{lem:enumConn} and \ref{lem:redBlueSmall}.
\end{proof}

Finally, by Theorem \ref{cor:main}, Observation \ref{obs:redBlueCMSO} and Lemma \ref{lem:redBlueUnbreak}, we conclude the correctness of Lemma \ref{lem:redBlueMain}.

\bibliographystyle{siam}
 \bibliography{references1,metakernels_extended,posets-fo}

\section{Appendix} 
Before presenting the proof of  Lemma \ref{lem:break}, we recall the notion of {\em universal sets}.

\begin{definition}\label{dfef:universalSet}
Let $n,k,p\in\mathbb{N}$, and let ${\cal F}$ be a set of functions $f:[n]\rightarrow \{0,1\}$. We say that ${\cal F}$ is an {\em $(n,k,p)$-universal set} if for every subset $I\subseteq [n]$ of size $k$ and a function $f':I\rightarrow\{0,1\}$ that assigns '1' to exactly $p$ indices, there is a function $f\in{\cal F}$ such that for all $i\in I$, $f(i)=f'(i)$.
\end{definition}

The next result asserts that small universal sets can be computed efficiently.

\begin{lemma}[\cite{DBLP:journals/jacm/FominLPS16}]\label{lem:universalSet}
There exists an algorithm that, given $n,k,p\in\mathbb{N}$, computes an $(n,k,p)$-universal set ${\cal F}$ of size $\binom{k}{p}2^{o(k)}\cdot\log n$ in deterministic time $\binom{k}{p}2^{o(k)}\cdot n\log n$.
\end{lemma}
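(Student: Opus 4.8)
The plan is to prove Lemma~\ref{lem:universalSet} by a standard derandomization argument based on splitters and perfect hash families, following Naor--Schulman--Srinivasan~\cite{NaorSS95} as used in the construction of~\cite{DBLP:journals/jacm/FominLPS16}. First I would recall the probabilistic intuition: a uniformly random function $f:[n]\to\{0,1\}$ hits a fixed target $f':I\to\{0,1\}$ (with exactly $p$ ones among the $k$ indices of $I$) with probability $2^{-k}$, so a random family of $2^k\log n$ functions works with high probability; the entire content is to derandomize this while shaving the trivial $2^k$ down to $\binom{k}{p}2^{o(k)}$. The key structural observation is that we do not need every assignment on $I$ --- only those with a prescribed number $p$ of ones --- which is exactly the regime where splitters beat the naive bound.

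The main steps, in order, are: (1) Use an $(n,k)$-perfect hash family (an $(n,k,k)$-splitter in the language of~\cite{NaorSS95}): a family of functions $h:[n]\to[k^2]$, or more efficiently $h:[n]\to[k]$ via an $(n,k,k)$-splitter of size $k^{\OO(1)}\log n$ computable in time $k^{\OO(1)}n\log n$, such that for every $k$-subset $I$ some $h$ in the family is injective on $I$. (2) Having fixed such an $h$ injective on $I$, the problem reduces to building, on the index set $[k]$, a family of functions $g:[k]\to\{0,1\}$ such that every subset of exactly $p$ positions can be ``singled out'': i.e.\ for every $J\subseteq[k]$ with $|J|=p$ there is a $g$ in the family that is $1$ exactly on $J$ (or more precisely agrees with the prescribed $f'$, which on $[k]$ has $p$ ones). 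A family with this property of size $\binom{k}{p}2^{o(k)}$ and constructible in time $\binom{k}{p}2^{o(k)}$ can be obtained by another level of splitter/perfect-hashing: partition $[k]$ into $\OO(p)$ blocks so that the $p$ chosen indices are spread out one per block, recurse, and combine; this is exactly the ``$(k,p)$-universal set on a small universe'' building block, whose near-optimal construction is the technical heart of~\cite{DBLP:journals/jacm/FominLPS16}. (3) Finally compose: for each $h$ in the perfect hash family and each $g$ in the small-universe family, output $f = g\circ h : [n]\to\{0,1\}$. The resulting family has size $k^{\OO(1)}\log n \cdot \binom{k}{p}2^{o(k)} = \binom{k}{p}2^{o(k)}\log n$, and the construction time is the product of the two construction times, which is $\binom{k}{p}2^{o(k)}\cdot n\log n$ as claimed. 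Correctness is immediate: given $I$ and $f'$ with $p$ ones, pick $h$ injective on $I$; then $f'$ induces via $h$ a target $\tilde f'$ on $[k]$ with $p$ ones; pick $g$ agreeing with $\tilde f'$ on $h(I)$; then $f=g\circ h$ agrees with $f'$ on $I$.

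The hard part will be step (2): obtaining the small-universe universal set of size $\binom{k}{p}2^{o(k)}$ with a matching construction time, rather than the easy $\binom{k}{p}2^{\OO(\log^2 k)}$ or $2^{\OO(k)}$ bounds. This requires the careful recursive block-decomposition with a geometric series of block counts and an accounting argument showing the accumulated slack stays in $2^{o(k)}$; it relies on efficient constructions of $(k,\ell)$-splitters for the intermediate parameters and a tight analysis of how their sizes multiply along the recursion. Since the excerpt explicitly cites~\cite{DBLP:journals/jacm/FominLPS16} for Lemma~\ref{lem:universalSet}, I would not reprove this sub-result from scratch but rather invoke it, and the ``proof'' at the level needed here is really the reduction in steps (1) and (3) plus a pointer to~\cite{DBLP:journals/jacm/FominLPS16,NaorSS95} for the core splitter machinery. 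If a self-contained argument is insisted upon, the obstacle is purely the splitter size/time bookkeeping, which is routine but lengthy and adds nothing to the present paper.
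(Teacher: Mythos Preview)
The paper does not prove Lemma~\ref{lem:universalSet} at all: it is stated with a citation to~\cite{DBLP:journals/jacm/FominLPS16} and used as a black box. Your proposal, which in the end also defers the hard step to~\cite{DBLP:journals/jacm/FominLPS16,NaorSS95}, is therefore aligned with the paper's treatment; the surrounding sketch of the splitter/perfect-hash reduction is extra context the paper does not supply.

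One small inaccuracy in your sketch is worth flagging in case you intend to expand it: an $(n,k,k)$-splitter mapping $[n]\to[k]$ does \emph{not} have size $k^{\OO(1)}\log n$; any perfect hash family onto a range of size $k$ has size at least $e^{k}/k^{\OO(1)}$, so the ``more efficiently $h:[n]\to[k]$'' option you mention would blow the budget. The standard route is to hash to $[k^2]$ (which does give $k^{\OO(1)}\log n$ size) and then handle the reduced $(k^2,k,p)$-universal-set problem on the small universe, which is where the $\binom{k}{p}2^{o(k)}$ factor actually arises. Your step~(2) as written (``for every $J\subseteq[k]$ with $|J|=p$ there is a $g$ that is $1$ exactly on $J$'') would be trivially achievable with $\binom{k}{p}$ functions if the hash really were onto $[k]$, so the real work and the $2^{o(k)}$ slack come from the fact that after hashing to $[k^2]$ one still has a nontrivial covering problem. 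None of this affects the paper, which simply imports the lemma.
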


\subsection{Proof of Lemma \ref{lem:break}}\label{app:break}
 
To design the desired algorithm, we first prove two claims.

\begin{claim}\label{claim:break1}
There exists an algorithm that given $s,c\in\mathbb{N}$ and a graph $G$, in time $2^{\OO(c\log(s+c))}\cdot n^3\log n$ either returns an $\displaystyle{(s/2,c)}$-witnessing separation or correctly concludes that there does not exist such a separation, $(X,Y)$, where both $G[X\setminus Y]$ and $G[Y\setminus X]$ contain a connected component of size at least~$s/2$.
\end{claim}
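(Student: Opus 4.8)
The plan is to prove Claim~\ref{claim:break1} by a reduction to a sequence of minimum vertex cut computations, enumerated over a small ``guess'' of how the separator interacts with a few carefully chosen vertices. The key difficulty is that an arbitrary small separator $(X,Y)$ need not be a minimal $s$-$t$ separator for any fixed pair $s,t$ that we can easily identify; the standard trick is to guess a constant-size set of vertices on each side and then contract. Concretely, I would argue as follows. Suppose a witnessing separation $(X,Y)$ of order at most $c$ exists, where $G[X\setminus Y]$ contains a connected component $A$ with $|A|\geq s/2$ and $G[Y\setminus X]$ contains a connected component $B$ with $|B|\geq s/2$. Fix the separator $Z = X\cap Y$, $|Z|\leq c$. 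Then $N(A)\subseteq Z$ and $N(B)\subseteq Z$, so $A$ and $B$ are in different components of $G\setminus Z$.

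First I would set up the enumeration. The algorithm iterates over all subsets $W\subseteq V(G)$ with $|W|\leq 2c$ (there are $n^{\OO(c)}$ of them) which are intended to be $W = Z_A \cup Z_B$ where $Z_A = N(A)$ and $Z_B = N(B)$; and over all bipartitions $W = W_a \uplus W_b$ (at most $2^{2c}$ of them) intended as $W_a = Z_A$, $W_b = Z_B$. For a fixed guess, I would pick an arbitrary vertex; more robustly, since we only need {\em some} witnessing separation and $A,B$ have size $\geq s/2 > c \geq |W|$, there exist vertices $a\in A\setminus W$ and $b\in B\setminus W$. To locate such $a,b$ we can additionally guess them from $V(G)$ (another factor of $n^2$, absorbed into the $n^3$), or observe that after removing $W_a$ the component of $a$ in $G\setminus W_a$ is exactly $A$ (since $N(A)=Z_A=W_a$) and symmetrically for $b$. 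Then I would compute a minimum vertex cut between $a$ and $b$ in the graph $G$: by Menger's theorem this has size equal to the maximum number of internally-vertex-disjoint $a$-$b$ paths, computable in time $\OO(c\cdot(n+m))$ via $c+1$ rounds of augmenting-path search (Ford--Fulkerson on the vertex-split graph, stopping after $c+1$ augmentations). If this minimum cut $Z'$ has size at most $c$, it separates $G$ into components; let $A'$ be the component containing $a$ and $B'$ the one containing $b$, and check whether $|A'|\geq s/2$ and $|B'|\geq s/2$ and $A',B'$ are each connected of the required size; if so output the separation $(A'\cup Z', V(G)\setminus A')$. The correctness direction ``if a witness exists then we find one'' follows because for the correct guess, $Z=X\cap Y$ is an $a$-$b$ separator of size $\leq c$, so the minimum $a$-$b$ cut has size $\leq c$, and any minimum $a$-$b$ cut yields a separation whose two sides each contain all of $A$ (resp.\ $B$) — here one must verify that a minimum $a$-$b$ cut, being ``pushed to one side'', does not shrink $A$ below $s/2$: a minimum cut closest to $a$ is a subset of $N[\,\cdot\,]$ of the union of $a$'s component, and since $N(A)$ is itself an $a$-$b$ separator of size $\leq c$, the component of $a$ after removing the minimum cut contains $A$.

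The running time bookkeeping: $n^{\OO(c)} = 2^{\OO(c\log n)}$ guesses times $2^{\OO(c)}$ bipartitions times $\OO(c(n+m))$ per max-flow, giving $2^{\OO(c\log n)}\cdot n^2$ roughly; to get the claimed $2^{\OO(c\log(s+c))}\cdot n^3\log n$ one must be more careful — the point of the $s$ in the exponent (rather than $n$) is that a {\em minimal} witnessing separator can be assumed to have both sides being components of size $\geq s/2$, so one does not actually enumerate $W$ over all of $V(G)$ but rather uses the structure of important/minimal separators, or enumerates $W$ among ``boundaries of large connected sets'', of which there are $2^{\OO(c\log(s+c))}\cdot n$ many by the enumeration results of Fomin--Villanger (Lemma~\ref{lem:enumConn}-type bounds). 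I expect \textbf{this running-time refinement — replacing the naive $n^{\OO(c)}$ guessing by an $s$-sensitive enumeration and arguing one may restrict attention to minimal witnesses} — to be the main obstacle; the max-flow/Menger core of the argument is routine. Finally, Claim~\ref{claim:break1} will be combined (in the full proof of Lemma~\ref{lem:break}) with a second claim, to handle the case where the two large components on each side of the witnessing separation are not required to be connected, by a standard argument losing a factor of $2^c$ in the side-size bound (hence the $\frac{s}{2^c}$ in the statement of Lemma~\ref{lem:break}), presumably via the universal-set construction of Lemma~\ref{lem:universalSet} to guess a bipartition of the small separator's neighborhood.
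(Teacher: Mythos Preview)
Your overall plan—reduce to minimum vertex-cut computations after ``guessing'' enough to pin down a vertex on each side—is the right shape, and it is also what the paper does. But there are two concrete gaps.

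\textbf{Correctness of the min-cut step.} You compute a minimum $a$--$b$ vertex cut in $G$ and then argue that the $a$-side of this cut must contain $A$. That is false. On the path $a_1-a_2-a_3-z-b_1-b_2-b_3$ with $c=1$, take $A=\{a_1,a_2,a_3\}$, $B=\{b_1,b_2,b_3\}$, $a=a_1$, $b=b_3$; the cut $\{a_2\}$ is a minimum $a$--$b$ cut whose $a$-side is $\{a_1\}$. Your sentence ``since $N(A)$ is itself an $a$--$b$ separator of size $\le c$, the component of $a$ after removing the minimum cut contains $A$'' does not follow: nothing prevents the minimum cut from living inside $A$. The paper fixes this not by reasoning about leftmost/rightmost min cuts, but by first \emph{contracting} an entire connected witness of size $\ge \lceil s/2\rceil$ on each side and then computing a min cut between the two contracted super-vertices; the cut is then forced to avoid both witnesses, so both sides of the resulting separation automatically have size $\ge s/2$.

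\textbf{The running-time refinement.} You correctly flag that $n^{\OO(c)}$ enumeration is too slow, but your proposed fix via Lemma~\ref{lem:enumConn} does not work: that lemma enumerates connected sets $U$ with $|U|\le p$ and $|N(U)|\le q$, whereas the connected witness $\widetilde C\subseteq A$ of size $\lceil s/2\rceil$ can have arbitrarily large neighbourhood in $G$ (its neighbours inside $A$ are unconstrained), so it is not among the enumerated sets. The tool the paper uses here is exactly the one you reserved for the second claim: the universal-set construction of Lemma~\ref{lem:universalSet}. One takes an $(n,s+c,c)$-universal set; for the ``correct'' $f$ the separator $X\cap Y$ is coloured~$1$ and two fixed connected subgraphs $\widetilde C\subseteq A$, $\widetilde C'\subseteq B$ of size $\lceil s/2\rceil$ each are coloured~$0$. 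Then $\widetilde C$ and $\widetilde C'$ lie in distinct connected components of $G[f^{-1}(0)]$, and the algorithm simply tries, for every $f$ in the family and every pair of components of $G[f^{-1}(0)]$, the min cut between the contracted pair. The family has size $\binom{s+c}{c}2^{o(s+c)}\log n$, which is where the $2^{\OO(c\log(s+c))}$ factor (rather than $n^{\OO(c)}$) comes from. So universal sets are the engine for \emph{both} claims, not just the second.
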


\begin{proof}
For the sake of simplicity, let us identify each vertex in $V(G)$ with a unique integer in $[n]$. Our algorithm works as follows. By using the algorithm in Lemma \ref{lem:universalSet}, it computes (in time $2^{\OO(c\log(s+c))}\cdot n\log n$) an $(n,s+c,c)$-universal set ${\cal F}$ of size $\binom{s+c}{c}2^{o(s+c)}\cdot\log n$. Then, for every $f\in{\cal F}$, it performs the following operations. First, it computes the set $\cal C$ of connected components of $G[f^{-1}(0)]$. Then, for every two distinct connected components $C,C'\in{\cal C}$, it computes a minimum vertex-cut $S$ that is disjoint from $V(C)\cup V(C')$ and which separates $V(C)$ and $V(C')$ (that is, $C$ and $C'$ are subgraphs of different connected components of $G\setminus S$). Notice that this computation can be done by contracting the edges of any spanning tree of $C$ and any spanning tree of $C'$, and then obtaining a minimum vertex-cut between the two resulting vertices. In case $|S|\leq c$, the algorithm returns the following separation $(X,Y)$: the set $X$ contains the union of $S$ and the set of vertices of the connected component of $G\setminus S$ that contains $C$ as a subgraph, and $Y=S\cup (V(G)\setminus X)$. Overall, the total time to perform the operations presented for each individual $f\in{\cal F}$ can be bounded by $\OO(n^3)$ by applying the sparsifying technique of Nagamochi and Ibaraki \cite{NagamochiIbaraki} and the classical Ford-Fulkerson. Finally, if no separation was returned, the algorithm concludes that there does not exist an $\displaystyle{(s/2,c)}$-witnessing separation, $(X,Y)$, where both $G[X\setminus Y]$ and $G[Y\setminus X]$ contain a connected component of size at least~$s/2$.

Clearly, the algorithm runs in time $2^{\OO(c\log(s+c))}\cdot n^3\log n$, and if it returns a separation, then it is an $\displaystyle{(s/2,c)}$-witnessing separation. Next, suppose that there exists an $\displaystyle{(s/2,c)}$-witnessing separation $(X,Y)$ where both $G[X\setminus Y]$ and $G[Y\setminus X]$ contain a connected component of size at least $s/2$. Let $\widehat{C}$ and $\widehat{C}'$ denote a connected component of $G[X\setminus Y]$ of size at least $s/2$ and a connected component of $G[Y\setminus X]$ of size at least $s/2$, respectively. Now, let $\widetilde{C}$ and $\widetilde{C}'$ denote a connected subgraph of $C$ on exactly $\lceil s/2\rceil$ vertices and a connected subgraph of $C$ on exactly $\lceil s/2\rceil$ vertices, respectively. Then, by the definition of an $(n,s+c,c)$-universal set, there exists $f\in{\cal F}$ such that for all $v\in X\cap Y$, $f(v)=1$ and for each $v\in V(C)\cup V(C')$, $f(v)=0$. When the algorithm examines such a function $f$, it holds that $X\cap Y$ is a vertex-cut of that is disjoint from $V(C)\cup V(C')$ and which separates $V(C)$ and $V(C')$, where $C$ and $C'$ are the connected components of $G[f^{-1}(0)]$ that contain $\widetilde{C}$ and $\widetilde{C}'$ as subgraphs, respectively. Then, the algorithm returns an $\displaystyle{(s/2,c)}$-witnessing separation. This concludes the proof of the claim.
\end{proof}

\begin{claim}\label{claim:break2}
There exists an algorithm that given $s,c\in\mathbb{N}$ and a graph $G$, in time $2^{\OO(c\log(s+c))}\cdot n\log n$ either returns an $\displaystyle{(s/2^c,c)}$-witnessing separation or correctly concludes that there does not exist an $\displaystyle{(s,c)}$-witnessing separation $(X,Y)$, where not both $G[X\setminus Y]$ and $G[Y\setminus X]$ contain a connected component of size at least~$s/2$.
\end{claim}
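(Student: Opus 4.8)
The plan is to mirror the proof of Claim~\ref{claim:break1}, guessing a size-$\le c$ separator via a universal set, but now exploiting the promise that one side of the hypothetical witnessing separation is shattered into components of size smaller than $s/2$; this is precisely what makes the separator easy to recover without any maximum-flow computation, which is why the running time improves from $n^3\log n$ to $n\log n$. Concretely, suppose $(X,Y)$ is an $(s,c)$-witnessing separation and, without loss of generality, that every connected component of $G[X\setminus Y]$ has fewer than $s/2$ vertices. Since $|X\setminus Y|>s$, I would greedily collect components of $G[X\setminus Y]$ into a set $X^{\sharp}$ with $s/2\le|X^{\sharp}|<s$. Then $C^{\sharp}:=N_G(X^{\sharp})\subseteq X\cap Y$ has size at most $c$, the set $X^{\sharp}$ is a union of connected components of $G-C^{\sharp}$, and $V(G)\setminus(X^{\sharp}\cup C^{\sharp})\supseteq Y\setminus X$ has more than $s$ vertices. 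Hence it suffices to find, in the allotted time, some pair $(A,W)$ with $|W|\le c$, $N_G(A)\subseteq W$, $A$ a union of connected components of $G-W$, and $|A|,|V(G)\setminus(A\cup W)|\ge s/2^{c}$ — or to certify that no such pair with $|A|\ge s/2$ exists, which rules out a witnessing separation of the promised form, since the output $(A\cup W,\,V(G)\setminus A)$ is then an $(s/2^{c},c)$-witnessing separation of $G$.

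To find such a pair I would compute an $(n,s+c,c)$-universal set $\mathcal{F}$ via Lemma~\ref{lem:universalSet} in time $2^{\OO(c\log(s+c))}\cdot n\log n$ and iterate over its $2^{\OO(c\log(s+c))}\cdot\log n$ functions. For the ``correct'' $f$, which is $0$ on $X^{\sharp}$ and $1$ on $C^{\sharp}$, the set $X^{\sharp}$ appears as a union of connected components of $G[f^{-1}(0)]$, each with fewer than $s$ vertices and at most $c$ neighbours in $G$; note that this holds no matter how $f$ behaves outside the relevant $(s+c)$-set, since deleting $C^{\sharp}$ already isolates $X^{\sharp}$. The remaining task, for each fixed $f$, is purely combinatorial: among the connected components of $G[f^{-1}(0)]$ that have fewer than $s$ vertices and at most $c$ neighbours in $G$, find a subfamily whose union has at least $s/2^{c}$ vertices and whose combined $G$-neighbourhood has at most $c$ vertices. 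I would handle this by a recursion on $c$: commit one vertex of the sought separator, delete it, and recurse with budget $c-1$ and half the target size; the depth-$c$ recursion, halving the target at each level, is exactly what produces the $s/2^{c}$ bound. The base case $c=0$ asks to split the connected components of the current graph into two classes each of a prescribed total size, which is solvable in $\OO(n)$ time by a prefix-sum argument (with a short separate analysis of the degenerate case in which a single component already exceeds half of the vertices). If no $f$ and no branch of the recursion succeeds, we report that the required separation does not exist.

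I expect the running time, not correctness, to be the main obstacle. A naive search for the separator inside $f^{-1}(1)$ costs $n^{\Theta(c)}$, which is far too expensive; the crucial structural point is that because $X^{\sharp}$ has fewer than $s$ vertices spread over fewer than $s$ components, the separator $C^{\sharp}$ is the union of at most $c$ component-neighbourhoods, so the recursion on $c$ (peeling one separator vertex per level) needs to examine only $2^{\OO(c\log(s+c))}$ candidates per $f$ and keeps the dependence on $n$ linear, up to the $\log n$ inherited from the universal set. Getting this bookkeeping exactly right — the interplay between the sizes permitted in the relevant set, the factor-two loss per recursion level, and the degenerate base-case analysis — is the fiddly part of the argument; the high-level reduction to ``guess the separator, then the halved side is a union of small components'' is straightforward.
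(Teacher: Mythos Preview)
Your high-level plan matches the paper's: use a universal set to isolate a small separator, observe that for the correct colouring the ``shattered'' side $X^\sharp$ appears as a union of connected components of $G[f^{-1}(0)]$, and then recover such a union with small total neighbourhood. The divergence---and the gap---is in how you carry out this last combinatorial step.

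You propose a depth-$c$ recursion that commits one separator vertex per level and halves the target, claiming $2^{\OO(c\log(s+c))}$ candidates per $f$. But you never say how the next separator vertex is chosen. The natural candidates are the $G$-neighbours of the relevant components, and there may be $\Theta(n)$ of those; your observation that ``$C^\sharp$ is the union of at most $c$ component-neighbourhoods'' is correct but does not bound the number of components (or neighbour-vertices) one must iterate over, so it is unclear why the branching factor is $\mathrm{poly}(s+c)$ rather than $\Theta(n)$. Halving the target controls the recursion \emph{depth}, not the branching factor, so as stated the recursion does not obviously run in the claimed time.

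The paper avoids any recursion with a one-shot pigeonhole. For the correct $f$, every component of the shattered side has its $G$-neighbourhood contained in the size-$\le c$ set $X\cap Y$; hence there are at most $2^c$ distinct such neighbourhoods, and one of them is shared by components of total mass at least $|X\setminus Y|/2^c>s/2^c$. Algorithmically, one simply buckets the size-$<s/2$ components of $G[f^{-1}(0)]$ by their exact $G$-neighbourhood (at most $n$ nonempty buckets), trims any overfull bucket down to mass $\le 3s/2$, and returns the first bucket with mass $\ge s/2^c$ and neighbourhood of size $\le c$. This is linear in $n$ per $f$ and needs no branching at all. If you replace your recursion by this bucketing, the rest of your outline goes through; note also that the paper works with the full $X\setminus Y$ trimmed to size $\le\lfloor 3s/2\rfloor$ (hence an $(n,\lfloor 3s/2\rfloor+c,c)$-universal set), which is why the pigeonhole yields $s/2^c$ rather than the $s/2^{c+1}$ your $X^\sharp$ of size $\ge s/2$ would give.
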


\begin{proof}
For the sake of simplicity, let us identify each vertex in $V(G)$ with a unique integer in $[n]$. Our algorithm works as follows. If $n<2s$, it concludes that there does not exist an $\displaystyle{(s,c)}$-witnessing separation $(X,Y)$. Otherwise, by using the algorithm in Lemma \ref{lem:universalSet}, it computes (in time $2^{\OO(c\log(s+c))}\cdot n\log n$) an $(n,\lfloor3s/2\rfloor+c,c)$-universal set ${\cal F}$ of size $\binom{\lfloor3s/2\rfloor+c}{c}2^{o(s+c)}\cdot\log n$. Then, for every $f\in{\cal F}$, it performs the following operations. First, it computes the set $\cal C$ of connected components of $G[f^{-1}(0)]$. For every $C\in{\cal C}$, denote ${\cal C}_C=\{C'\in{\cal C}: N(V(C'))=N(V(C)), |V(C')| < s/2\}$. Then, if $3s/2<|\bigcup_{C'\in {\cal C}_C}V(C')|$ and as long as this condition holds, by removing one-by-one the largest connected component in $\cal C$, the algorithm ensures that $|\bigcup_{C'\in {\cal C}_C}V(C')|\leq 3s/2$. After handling each $C\in{\cal C}$ individually, if there exists $C\in{\cal C}$ such that $s/2^c\leq|\bigcup_{C'\in {\cal C}_C}V(C')|\leq 3s/2$ and $|N(V(C))|\leq c$, it returns the following separation $(X,Y)$: $X=N[\bigcup_{C'\in {\cal C}_C}V(C')]$, and $Y=N[V(G\setminus\bigcup_{C'\in {\cal C}_C}V(C'))]$. Finally, if no separation was returned, the algorithm concludes that there does not exist an $\displaystyle{(s,c)}$-witnessing separation $(X,Y)$, where both $G[X\setminus Y]$ and $G[Y\setminus X]$ do not contain a connected component of size at least~$s/2$.

Clearly, the algorithm runs in time $2^{\OO(c\log(s+c))}\cdot n\log n$, and if it returns a separation, then it is an  $\displaystyle{(s/2^c,c)}$-witnessing separation. Next, suppose that there exists an $\displaystyle{(s,c)}$-witnessing separation $(X',Y')$ where not both $G[X'\setminus Y']$ and $G[Y'\setminus X']$ contain a connected component of size at least $s/2$. Then, $n\geq 2s$ and there also exists an $\displaystyle{(s,c)}$-witnessing separation $(X,Y)$ where $|X\setminus Y|\leq\lfloor3s/2\rfloor$ and $G[X]$ does not contain a connected component of size at least~$s/2$. Since $|X\cap Y|\leq c$, there exists a subset $S\subseteq X\setminus Y$ of size at least $\displaystyle{|X\setminus Y|}{2^{|X\cap Y|}}\geq s/2^c$ such that $N(S)\subseteq X\cap Y$ and for every two connected components $C$ and $C'$ of $G[S]$, it holds that $N(V(C))=N(V(C'))$. Then, by the definition of an $(n,\lfloor3s/2\rfloor+c,c)$-universal set, there exists $f\in{\cal F}$ such that for all $v\in X\cap Y$, $f(v)=1$ and for each $v\in S$, $f(v)=0$. When the algorithm examines such a function $f$, there exists $C\in{\cal C}$ such that $s/2^c\leq|\bigcup_{C'\in {\cal C}_C}V(C')|\leq 3s/2$ and $|N(V(C))|\leq c$. Then, the algorithm returns an $\displaystyle{(s/2^c,c)}$-witnessing separation. This concludes the proof of the claim.
\end{proof}

To conclude that Lemma \ref{lem:break} is correct, note that for all $x\geq y$, an $\displaystyle{(x,c)}$-witnessing separation is also an $\displaystyle{(y,c)}$-witnessing separation, and that if a graph does not have an $\displaystyle{(s/2^c,c)}$-witnessing separation then it is $(s,c)$-unbreakable. Thus, we apply the algorithms given by Claims~\ref{claim:break1} and \ref{claim:break1}. If at least one of them returns a separation, which is an $\displaystyle{(s/2^c,c)}$-witnessing separation, then we return this separation, and otherwise we correctly conclude that $G$ is $(s,c)$-unbreakable.\qed

%\longversion{
%\bibliographystyle{siam}
%\bibliography{references}
%}

\end{document}